\def\openone{\leavevmode\hbox{\small1\kern-3.8pt\normalsize1}}
\def\RR{\mathbb{R}}
\def\NN{\mathbb{N}}
\def\LL{\mathbb{L}}
\def\11{\mathbb{I}}
\def\LL{\mathcal{L}}
\newtheorem{theorem}{Theorem}
\newtheorem{lemma}{Lemma}
\newtheorem{proposition}{Proposition}
\newtheorem{corollary}{Corollary}
\theoremstyle{definition}
\def\reff#1{(\ref{#1})}
\def\eps{\varepsilon}
\newcommand{\supp}{\mathop{\rm supp}\nolimits}
\newcommand{\tr}{\mathop{\rm Tr}\nolimits}
\newcommand{\cA}{{\cal A}}
\newcommand{\cB}{{\cal B}}
\newcommand{\cD}{{\cal D}}
\newcommand{\cH}{{\cal H}}
\newcommand{\cJ}{{\cal J}}
\newcommand{\cP}{\mathcal{P}}
\def\e{\mathrm{e}}
\theoremstyle{definition}
\theoremstyle{remark}
\numberwithin{equation}{section}
\DeclareRobustCommand\openone{\leavevmode\hbox{\small1\normalsize\kern-.33em1}}
\newcommand{\id}{\rm{id}}
\newcommand{\be}{\begin{equation}}
	\newcommand{\ee}{\end{equation}}
\newcommand{\bea}{\begin{eqnarray}}
	\newcommand{\eea}{\end{eqnarray}}
\newcommand{\beas}{\begin{eqnarray*}}
	\newcommand{\eeas}{\end{eqnarray*}}
\title{Relating relative entropy, optimal transport \\and Fisher information: a quantum HWI inequality.}
\author[1]{Cambyse Rouz\'{e}}
\author[1,2]{Nilanjana Datta}
\affil[1]{\small Statistical Laboratory, Centre for Mathematical Sciences, University of Cambridge, Cambridge~CB30WB, UK}
\affil[2]{\small DAMTP, Centre for Mathematical Sciences, University of Cambridge, Cambridge~CB30WA, UK}
\begin{document}

 	\bibliographystyle{abbrv}

 	\maketitle	
\begin{abstract}
Quantum Markov semigroups characterize the time evolution of an important class of open quantum systems.
Studying convergence properties of such a semigroup, and determining concentration properties of its invariant
state, have been the focus of much research. Quantum versions of functional inequalities (like the modified logarithmic Sobolev and Poincar\'e inequalities) and the so-called transportation cost inequalities, have proved to be essential for this purpose.
Classical functional and transportation cost inequalities are seen to arise from a single geometric inequality, 
called the Ricci lower bound, via an inequality which interpolates between them. The latter is called the HWI-inequality,
where the letters I, W and H are, respectively, acronyms for the Fisher information (arising in the modified logarithmic Sobolev
inequality), the so-called Wasserstein distance (arising in the transportation cost inequality) and the relative entropy (or Boltzmann
H function) arising in both. Hence, classically, all the above inequalities and the implications between them form a remarkable picture which
relates elements from diverse mathematical fields, such as Riemannian geometry, information theory, optimal transport theory, 
Markov processes, concentration of measure, and convexity theory. Here we consider a quantum version of the Ricci lower bound
introduced by Carlen and Maas, and prove that it implies a quantum HWI inequality from which the quantum functional and transportation cost inequalities follow.
Our results hence establish that the unifying picture of the classical setting carries over to the quantum one.
\end{abstract}

\section{Introduction}

Realistic physical systems which are relevant for quantum information processing are inherently open. They undergo unwanted but unavoidable interactions with the surrounding environment, and are hence subject to noise and decoherence.
Under the Markovian approximation, which is valid when the system is only weakly coupled to its environment, the resulting dissipative dynamics of the system is described by a quantum Markov semigroup (QMS), whose generator we denote by ${\cal{L}}$. The analysis of quantum Markov semigroups is hence a key component of the theory of open quantum systems and quantum information. An important problem in the study of a QMS is the analysis of its convergence properties, in particular, its {\em{mixing time}}, which is the time taken by any state evolving under the action of the QMS to come close to its invariant state\footnote{Here we assume that the QMS is primitive, i.e.~it has a unique invariant state.}. \\\\
{\bf{Functional and transportation cost inequalities:}} Classically, given a measure $\mu$, functional inequalities, e.g.~the \textit{Poincar\'{e} inequality} (usually denoted as PI($\lambda$)) \cite{TKRWV10} and the \textit{(modified) logarithmic Sobolev inequality} (or log-Sobolev in short), denoted as MLSI($\alpha$) \cite{[OZ99]}, constitute a powerful tool for deriving mixing times of a Markov semigroup with invariant measure $\mu$, and determining concentration properties of $\mu$. They are also related to the so-called \textit{transportation-cost} inequalities denoted by TC$_1(c_1)$ and TC$_2(c_2)$. Here $\alpha$, $c_1$ and $c_2$ denote constants appearing in the respective inequalities. Consider a compact manifold $\cal{M}$, and let ${\cal{P}}(\cal{M})$ be the set of probability measures on $\cal{M}$. Given a measure $\mu \in {\cal{P}}(\cal{M})$, the inequality TC$_1(c_1)$ (resp.~TC$_2(c_2)$) provides an upper bound on the so-called Wasserstein distance $W_1$ (resp.~$W_2$), between any probability measure $\nu \in {\cal{P}}(\cal{M})$ and the given measure $\mu$, in terms of the square root of the relative entropy of $\nu$ with respect to $\mu$. Since $\mu$ is fixed, this relative entropy is simply a functional of $\nu$ and, due to its close links with the Boltzmann H-functional, is often denoted by the letter H in the literature. The notion of Wasserstein distances first appeared in the theory of optimal transport, which was initiated by Monge~\cite{[Monge]} and later analyzed by Kantorovich~\cite{[Kantorovich]}. In its original formulation by Monge, the problem of optimal transport concerns finding the optimal way, in the sense of minimal transportation cost, of moving a sand pile between two locations (see also \cite{[V08]}). 
In 1986, Marton \cite{[M86]} showed that transportation-cost inequalities are also useful for deriving {\em{concentration of measure}} properties of the given measure $\mu${\footnote{Given a metric space $({\cal{X}},d)$, a probability measure $\mu$ is said to satisfy Gaussian (resp.~exponential) concentration on it if there exist positive constants $a,b$ such that for any $A \subseteq {\cal{X}}$, and $r>0$, $$ \mu(A) \geq 1/2 \,\,\implies \,\, \mu(A_r) \geq 1- a e^{-bf(r)}.$$
where $A_r :=\{x \in {\cal{X}} \,: \, d(x, A) < r\}$ and $f(r) =$ $r^2$ (resp. $r$).}}.\\\\
The classical inequalities discussed above can be shown to be obtainable from a single geometric inequality, involving a quantity called the {\em{Ricci curvature}} of the manifold ${\cal{M}}$.
In fact, there is an inherent relation between the geometry of the manifold and a diffusion process (whose associated Markov semigroup has generator $\cal{L}$, say) defined on it: the diffusion process can be used to explore the geometry of ${\cal{M}}$, and conversely, the latter determines the mixing time of the diffusion process. 
Finding a quantum analogue of this appealing geometric inequality is hence a problem of fundamental interest, and is considered in this paper. Before we present our results on this problem, we first need to explain the statement of the Ricci lower bound in the classical setting. In fact, it is instructive to start from the very definition of curvature which generalizes to the Ricci curvature for the case of a Riemannian manifold.\\\\
{\bf{Ricci curvature and Ricci lower bound (classical setting):}} Given a surface $\mathcal{S}$ embedded in the Euclidean space $\mathbb{R}^3$, the \textit{Gauss curvature} $\kappa$ of $\mathcal{S}$ is a measure of its local bendedness. More precisely, given a point $x\in\mathcal{S}$, and any two orthogonal unit tangent vectors $u,v$ at $x$, the distance between two geodesics $\gamma_u$ and $\gamma_v$, starting at $x$, with respective directions $u$ and $v$, obeys the following Taylor expansions:
\begin{align}\label{sect}
	d_g(\gamma_\mathbf{u}(t),\gamma_\mathbf{v}(t))=\sqrt{2}t\left(  1-\frac{\kappa(x)}{12}t^2+\mathcal{O}_{t\to 0}(t^3)\right), \quad t \geq 0,
\end{align}	
where $d_g$ is the geodesic distance defined with respect to the metric $g$ induced on $\mathcal{S}$ by the Euclidean metric. In the case when $\kappa=0$ uniformly on the surface, the latter is flat and we recover the Pythagoras theorem from \Cref{sect}.
\begin{figure}[!htbp]
	\centering
	\includegraphics[width=0.2\textwidth]{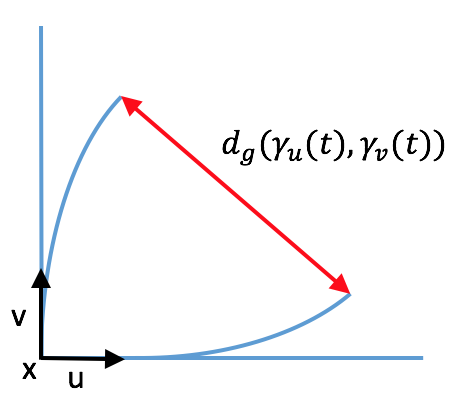}
	\caption{The Gauss curvature.}
	\label{fig3}
\end{figure}
 More generally, let $x$ be a point in a $d$-dimensional compact Riemannian manifold $\mathcal{M}$, let $u$ belong to the tangent space $T_x\mathcal{M}$ at the point $x$ of $\mathcal{M}$, and complete the vector $u$ into an orthonormal basis $(u,v_2,...,v_d)$ of $T_x\mathcal{M}$. Then, the \textit{Ricci curvature} of ${\mathcal{M}}$, evaluated at $u$ is the averaged Gauss curvature over orthogonal surfaces defined by all the geodesics starting at $x$ with direction given by the unit vectors belonging to the vector subspace spanned by $u$ and any other vector $v_i$, $i=2,...,d$. The expression for the Ricci curvature \cite{[V08]} is given in terms of the Laplace-Beltrami operator (denoted simply as $\Delta$) and hence the curvature is usually denoted as $\operatorname{Ric}(\Delta)$. Since $\Delta$ is the generator of the heat semigroup, the curvature provides a bridge between the geometry of the manifold and the evolution on it induced by the heat diffusion. There is an important inequality, known as the \textit{Ricci lower bound}, which is denoted by $\operatorname{Ric}(\Delta)\ge \kappa$ \cite{[BE85]}, and is the property that the Ricci curvature is uniformly bounded below by a real parameter $\kappa\ge 0$. Intuitively, the inequality is related to concentration of the uniform measure on $\cal{M}$, hich is known to be the unique invariant measure of heat diffusion (whose generator is $\Delta$). For example, in the case of the sphere, which has constant Ricci curvature given in terms of its radius, the Haar measure can be shown to concentrate around any great circle. One can relax the condition of uniformity of the measure in order to allow for the study of concentration of measure phenomena for different measures $\mu$, invariant for other diffusions processes on $\mathcal{M}$. In this more general framework, the Ricci lower bound is denoted by $\operatorname{Ric}(\mathcal{L})\ge \kappa$, where $\mathcal{L}$ denotes the generator of the diffusion semigroup associated to $\mu$.\\\\
More recently, 
Sturm \cite{sturm2006geometry,Sturm2006} and Lott-Villani \cite{lott2009ricci} showed that $\operatorname{Ric}(\mathcal{L})\ge \kappa$, can be viewed as a (refined) convexity property (called the $\kappa$-{\em{displacement convexity}}) of H along geodesics on the Riemannian manifold obtained by endowing the set of probability measures $\mathcal{P}(\mathcal{M})$ on $\mathcal{M}$ with the Wasserstein distance $W_2$ \cite{von2005transport}. This discovery led to a more robust notion of a Ricci lower bound which does not explicitly depend on the expression of the Ricci curvature, and hence can be extended to more general metric spaces. Starting from this convexity property, one can then construct a diffusion semigroup for which H decreases the most along the direction of evolution induced by the semigroup. In this case the path on the Riemannian manifold $({\cal{P}}({\cal{M}}), W_2)$, which corresponds to the actual evolution under the diffusion, is said to be {\em{gradient flow}} for H. It is a striking fact that this diffusion coincides with the one whose generator appears in the Bakry-\'{E}mery condition (see \cite{jordan1998variational,erbar2010}).\\\\
In \cite{[OV00]}, the authors introduced the so-called HWI$(\kappa)$-{\em{interpolation inequality}}, using which they reproved the so-called Bakry-\'{E}mery theorem, which states that for $\kappa>0$, $\operatorname{Ric}(\mathcal{L})\ge \kappa$ implies MLSI($\alpha$) (for diffusions on $\mathbb{R}^n$ with associated with generator $\cal{L}$). The letters W, I and H are, respectively, acronyms for the Wasserstein distance $W_2$ (appearing in TC$_2(c_2)$), the Fisher information (which arises in MLSI($\alpha$)) and the relative entropy (also called the Boltzmann H-functional, as mentioned above) which appears in both these inequalities. They also showed that MLSI($\alpha$) implies TC$_2(c_2)$. The term interpolation here comes from the fact that in the case $\kappa=0$ and $c>0$, TC$_2$($c$) together with HWI$(0)$ gives back MLSI($\alpha$).\\\\
In \cite{maas2011gradient,[EM12],erbar2016poincar}, a modified version of the Ricci lower bound was defined for Markov processes on finite sets, which led to the unification of the previously discussed functional and concentration inequalities in this discrete framework. In particular, it was proved in \cite{erbar2016poincar} that one can recover the Poincar\'{e} and modified log-Sobolev inequalities from the Ricci lower bound, provided the diameter of ${\cal{P}}({\cal{M}})$, with respect to the Wasserstein distance, $W_2$, is bounded.\\\\
{\bf{Ricci lower bound (quantum setting):}} In the case of a quantum system with a finite-dimensional Hilbert space ${\cal{H}}$, the set $\mathcal{P}(\mathcal{M})$ is replaced by the set ${\cal{D}}({\cal{H}})$ of quantum states (i.e.~density matrices) on ${\cal{H}}$. Then, in analogy with the classical case, starting with a primitive QMS with generator ${\cal{L}}$, Carlen and Maas \cite{[CM14],Carlen20171810} defined 
a quantum Wasserstein distance $W_{2,\mathcal{L}}$ which renders ${\cal{D}}({\cal{H}})$ with a Riemannian structure, and for which the master equation associated to the QMS is gradient flow for the quantum relative entropy.\\\\
In \cite{Carlen20171810}, the authors proved that a quantum MLSI$(\alpha)$, first introduced in \cite{[KT13]}, holds provided the quantum relative entropy (between a state on a geodesic on this manifold and the invariant state of the QMS) satisfies a quantum analogue of the $\kappa$-displacement convexity property along geodesics, for $\alpha=\kappa>0$. This is denoted below by Ric$(\mathcal{L})\ge \kappa$ in analogy with the classical case, with $\cal{L}$ being the generator of the QMS.\\\\
The quantum versions of the Ricci lower bound, the HWI inequality, and the functional and transportation cost inequalities,
all fit into a unifying picture which is analogous to the classical setting. It is given in the following figures.
\begin{figure}[!ht]
	\[
	\begin{tikzcd}
	\text{Ric\,($\mathcal{L}$)$\ge \kappa$}\arrow[r, Rightarrow,swap, ]&\text{HWI\,($\kappa$)}\arrow[rrr,Rightarrow,bend left,swap,"W_{2,\mathcal{L}}\le D\text{, }\lambda\propto D^{-2}",swap]\arrow[r,Rightarrow,bend left,swap,"\alpha=\kappa>0"] \arrow[r,Rightarrow,bend right,swap,"W_{2,\,\mathcal{L}}<D\text{, }\alpha\propto D^{-2}\text{, }\atop{(\Lambda_t)_{t\ge 0}\text{ unital}}"] &\text{MLSI}(\alpha)\arrow[r,Rightarrow,"c_2=\alpha^{-1}"]&\text{TC}_2(c_2)\arrow[d,Rightarrow,"c_1=d\,c_2",swap]\arrow[r,Rightarrow,"\lambda\propto c_2",swap]&\text{PI}(\lambda)\arrow[r,Rightarrow]&\text{Exp.}\\
	&&&\text{TC}_1(c_1)\arrow[r,Rightarrow]&\text{Gauss.}
	\end{tikzcd}
	\]
	\[\text{TC}_2(c_2)+\text{HWI}(\kappa)\Rightarrow \text{MLSI}(\alpha),~~\alpha=\max\left[ \frac{1}{4c_2}(1+c_2\kappa)^2,\kappa   \right].\]
	\caption{Chain of quantum functional- and Talagrand inequalities and related concentrations for a primitive semigroup $(\Lambda_t)_{t\ge 0}$ with generator $\LL$ defined on a Hilbert space of dimension $d$. The implication MLSI$(\alpha)$ $\Rightarrow$ PI$(\lambda)$ was proved in~\cite{[KT13]}. Here, ``Exp.'' refers to the notion of exponential concentration, whereas ``Gauss.'' refers to the stronger notion of Gaussian concentration. The implications MLSI($\alpha$)$\Rightarrow$TC$_2$($c_2$)$\Rightarrow$PI($\lambda$)$\Rightarrow$Exp., as well as TC$_2$($c_2$)$\Rightarrow$TC$_1$($c_1$)$\Rightarrow$Gauss.~were proved in \cite{rouze2017concentration}.}
	\label{fig2}
\end{figure}
\paragraph{Our contribution:}

In this paper, we analyse the quantum version of the Ricci lower bound introduced by Carlen and Maas \cite{Carlen20171810}, and derive various implications of it in \Cref{theofund}. Moreover, we show that $\operatorname{Ric}(\LL)\ge \kappa$ implies a quantum version of the celebrated $\operatorname{HWI}(\kappa)$ inequality which interpolates between the modified logarithmic Sobolev inequality and the transportation cost inequality (\Cref{theHWI}). We show that, in the case of $\kappa>0$, $\operatorname{HWI}(\kappa)\Rightarrow \operatorname{MLSI}(\kappa)$ (\Cref{theo5}), recovering the result of \cite{Carlen20171810}. 
  On the other hand, in \Cref{cor1}, we establish that in the case when $\kappa\in\RR$, Ric($\LL\ge \kappa$) together with TC$_2(c_2)$ imply MLSI$(\alpha)$. Moreover, in the case when $\kappa= 0$, we show that, under the assumption of boundedness of the diameter $D$ of the set of states with respect to the quantum Wasserstein distance $W_{2,\LL}$,~ $\operatorname{Ric}(\LL)\ge 0$ implies PI$(c_1D^{-2})$ for some universal constant $c_1$ (\Cref{poinca}). Moreover, in the case of a unital QMS (i.e.~one which has the completely mixed state as its unique invariant state), we show that it also implies MLSI$(c_2D^{-2})$ for some universal positive constant $c_2$ (\Cref{mlsi1}). We hence extend the results of \cite{erbar2016poincar} to the quantum regime.

\paragraph{Layout of the paper:}
In \Cref{sec2}, we introduce the necessary notations and definitions, including quantum Markov semigroups, the quantum Wasserstein distance and quantum functional inequalities. The quantum version of $\kappa$-displacement convexity is studied in \Cref{sec3}. In \Cref{333}, we prove the quantum HWI$(\kappa)$ inequality, show that it implies $\operatorname{MLSI}(\kappa)$ in the case when $\kappa>0$ and derive interpolation results between $\operatorname{MLSI}(\kappa)$ and $\operatorname{TC}_2(c_2)$ from it. In \Cref{poinc}, we show that in the case in which $\kappa=0$, $\operatorname{PI}(\lambda)$ holds with a constant $\lambda$ proportional to $D^{-2}$, where $D$ stands for the diameter of the set of states. In \Cref{mls}, we show that under the further assumption of the QMS being unital, $\operatorname{MLSI}(\alpha_1)$ holds with a constant $\alpha_1$ also proportional to $D^{-2}$.

\section{Notations and preliminaries}\label{sec2}

\subsection{Operators, states and entropic quantities}
In this paper, we denote by $(\cH,\langle .|.\rangle)$ a finite-dimensional Hilbert space of dimension $d$ with associated inner product $\langle.|.\rangle$, by $\cB(\cH)$ the algebra of linear operators acting on $\cH$, and by $\cB_{sa}(\cH) \subset \cB(\cH)$ the subspace of self-adjoint operators. Moreover, the Hilbert Schmidt inner product $\langle .,.\rangle$, where $\langle A,B\rangle=\tr(A^*B)$ $\forall A,B\in\cB(\cH)$, provides $\cB(\cH)$ with a Hilbert space structure. Let $\cP(\cH)$ be the cone of positive semi-definite operators on $\cH$ and $\cP_{+}(\cH) \subset  \cP(\cH)$ the set of (strictly) positive operators. Further, let $\cD(\cH):=\lbrace\rho\in\cP(\cH)\mid \tr\rho=1\rbrace$ denote the set of density operators (or states) on $\cH$, and $\cD_+(\cH):=\cD(\cH)\cap \cP_+(\cH)$ denote the subset of faithful states. We denote the support of an operator $A$ by ${\mathrm{supp}}(A)$. Let $\mathbb{I}\in\cP(\cH)$ be the identity operator on $\cH$, and $\id:\cB(\cH)\mapsto \cB(\cH)$ the identity map on operators on~$\cH$. For $p,q\ge 1$, the $p$-Schatten norm of an operator $A\in\cB(\cH)$ is denoted by $\|A\|_p:=(\tr|A|^p)^{1/p}$, and the $p\to q$-norm of a superoperator $\Lambda:\cB(\cH)\to\cB(\cH)$ by $\|\Lambda\|_{p\to q}$. Such a linear map is said to be unital if $\Lambda(\mathbb{I})=\mathbb{I}$. Given two states $\rho,\sigma\in\cD(\cH)$, the quantum relative entropy between $\rho$ and $\sigma$ is defined as:
\begin{align*}
	{D}(\rho\|\sigma):= \left\{\begin{aligned}
		&	\tr(\rho(\log\rho-\log\sigma))~~~~~\text{if} \supp(\rho)\subseteq \supp(\sigma),\\
		&	0~~~~~~~~~~~~~~~~~~~~~~~~~~~~~\text{else.}
	\end{aligned}
	\right.
\end{align*}
\subsection{Quantum Markov semigroups and the detailed balance condition}
 In the Heisenberg picture, a quantum Markov semigroup (QMS) on a finite dimensional Hilbert space $\mathcal{H}$ it is given by a one-parameter family $\left(\Lambda_t\right)_{t \geq 0}$ of linear, completely positive, unital maps on $\cB(\mathcal H)$ satisfying the following properties
\begin{itemize}
	\item $\Lambda_0 = {\rm{id}}$;
	\item $\Lambda_t \Lambda_s = \Lambda_{t+s}$ $\,-\,$ semigroup property;
	\item $\forall X\in\cB(\cH),~\underset{t\to 0}{\lim} || \Lambda_t(X) - X||_\infty = 0$ $\,-\,$ strong continuity.
\end{itemize}
The parameter $t$ plays the role of time. For each quantum Markov semigroup there exists an operator $\LL$ called the \textit{generator}, or \textit{Lindbladian}, of the semigroup, such that
\begin{align}\label{llambda}
	\frac{d}{dt}\Lambda_t=\Lambda_t\circ \LL=\LL\circ\Lambda_t.
\end{align}
In the Schr\"{o}dinger picture, the dual of $\Lambda_t$ is written $\Lambda_{*t}$, for any $t\ge 0$. The QMS is said to be \textit{primitive} if there exists a unique invariant state $\sigma$ i.e., such that $\Lambda_{*t}(\sigma)=\sigma$. Such a QMS is said to satisfy the \textit{detailed balance condition} if the following holds:
\begin{align}\label{DBC}
	\tr(\sigma \LL(X)^*Y)=\tr(	\sigma  X^*\LL(Y)),~~~~~~~~~X,Y\in\cB(\cH).
\end{align}
In the context of quantum logarithmic Sobolev inequalities, the \textit{quantum Fisher information} of $\rho$ with respect to the sate $\sigma$, first defined in \cite{[S78]}, is particularly useful:\footnote{This quantity is also referred to as \textit{entropy production}.}
\begin{align*}
	\operatorname{I}_\sigma(\rho):=\left\{\begin{aligned}
		&-\tr(\LL_*(\rho)(\log \rho-\log\sigma) ), ~~~~~~~~~~ \rho\in\cD_+(\cH)\\
		&+\infty,~~~~~~~~~~~~~~~~~~~~~~~~~~~~~~~~~~~~~~\text{otherwise}.
	\end{aligned}\right.
\end{align*}
 The following theorem provides a structure for the generators of primitive QMS satisfying the detailed balance condition:
				\begin{theorem}[\cite{A76,Carlen20171810}]Let $\sigma\in\cD_+(\cH)$, and let $(\Lambda_t)_{t\ge0}$ be a quantum Markov semigroup on $\cB(\cH)$. Suppose that the generator $\LL$ of  $(\Lambda_t)_{t\ge0}$ satisfies the detailed balance condition. Then there exists an index set $\mathcal{J}$ of cardinality $|\mathcal{J}|\le d^2-1$ such that $\LL$ takes the form 
					\begin{align}\label{LLDBC}
						\LL(f)&=\sum_{j\in \mathcal{J}}c_j\left( \e^{-\omega_j/2}\tilde{L}_j^*[f,\tilde{L}_j]+\e^{\omega_j/2}[\tilde{L}_j,f]\tilde{L}_j^*\right)
					\end{align}
					where $\omega_j\in\RR$ and $c_j>0$ for all $j\in\mathcal{J}$, and $\{\tilde{L}_j\}_{j\in\mathcal{J}}$ is a set of operators in $\cB(\cH)$ with the properties:
					\begin{itemize}
						\item[1] $\frac{1}{\dim(\cH)}\tr(\tilde{L}_j^*\tilde{L}_k)=\delta_{k,j}$ for all $j,k\in\mathcal{J}$
						\item[2] $\tr(\tilde{L}_j)=0$ for all $j\in\mathcal{J}$
						\item[3] $\{\tilde{L}_j\}_{j\in\mathcal{J}}=\{\tilde{L}_j^*\}_{j\in\mathcal{J}}$
						\item[4] $\{\tilde{L}_j\}_{j\in\mathcal{J}}$ consists of eigenvectors of the modular operator $\Delta_\rho:f\mapsto \rho f \rho^{-1}$ with
						\begin{align*}
							\Delta_\sigma(\tilde{L}_j)=\e^{-\omega_j}\tilde{L}_j.
						\end{align*}
					\end{itemize}
					Finally for each $j\in\mathcal{J}$
					\begin{align}\label{symcond}
						c_j=c_{j'}, \text{ and }~ \omega_j=-\omega_{j'}~\text{ when }\tilde{L}_j^*=\tilde{L}_{j'}.
					\end{align}
					Conversely, given any faithful state $\sigma$, any set $\{\tilde{L}_j\}_{j\in\mathcal{J}}$ satisfying the above four conditions for some $\{\omega_j\}_{j\in \mathcal{J}}\subset \RR$ and any set $\{c_j\}_{j\in\mathcal{J}}$ of positive numbers satisfying the symmetry condition \reff{symcond}, the operator $\LL$ given by \Cref{LLDBC} is the generator of a quantum Markov semigroup $(\Lambda_t)_{t\ge 0}$ which satisfies the detailed balance condition. 
				\end{theorem}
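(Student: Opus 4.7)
The plan is to start from the general Gorini--Kossakowski--Sudarshan--Lindblad (GKSL) form of a quantum Markov semigroup generator and then use the detailed balance condition \reff{DBC} to peel off a great deal of structure. By GKSL, any such $\LL$ acting on $\cB(\cH)$ admits a representation
\[
\LL(f)=i[H,f]+\sum_{k}\Bigl(L_k^*\,f\,L_k-\tfrac{1}{2}\{L_k^*L_k,f\}\Bigr),
\]
for some self-adjoint $H$ and operators $\{L_k\}$ which may, by a change of basis, be taken to be traceless and orthonormal in the Hilbert--Schmidt inner product (up to dimensional rescaling). The detailed balance condition, written in the GNS form $\tr(\sigma\LL(X)^*Y)=\tr(\sigma X^*\LL(Y))$, is a symmetry requirement for $\LL$ with respect to the inner product $\la X,Y\ra_\sigma:=\tr(\sigma X^*Y)$. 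My first step would be to unfold this symmetry: it forces the Hamiltonian term to commute with $\sigma$, and couples the dissipative Kraus operators with their adjoints in a way that is most transparent in the eigenbasis of the modular operator $\Delta_\sigma$.

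Next, I would diagonalise the span of $\{L_k\}\cup\{L_k^*\}$ with respect to the action of $\Delta_\sigma$. Because $\Delta_\sigma$ is a positive invertible superoperator on $\cB(\cH)$ and the span of Kraus operators is invariant under $\star$, one can, using the detailed-balance symmetry, perform a linear change of variables so that the new Kraus operators $\{\tilde L_j\}_{j\in\cJ}$ are simultaneously eigenvectors of $\Delta_\sigma$ with eigenvalues $\e^{-\omega_j}$, and so that the set is closed under the adjoint operation, proving conditions~3 and~4. Conditions~1 and~2 (orthonormality and tracelessness) can be arranged by choosing a convenient basis of the dissipative subspace, using that $\LL(\II)=0$ and the freedom of GKSL representations to replace $\{L_k\}$ by any linear combination modulo multiples of $\II$. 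Rewriting the dissipator in this new basis yields the symmetric form $\e^{-\omega_j/2}\tilde L_j^*[f,\tilde L_j]+\e^{\omega_j/2}[\tilde L_j,f]\tilde L_j^*$, with coefficients $c_j>0$ from positivity of the Kossakowski matrix, and the symmetry condition $c_j=c_{j'},\ \omega_j=-\omega_{j'}$ when $\tilde L_j^*=\tilde L_{j'}$ is exactly the pairing which ensures that $\LL$ is Hermiticity-preserving and detailed-balanced. The Hamiltonian term can finally be absorbed into the dissipator by the same symmetry constraint, or shown to vanish under detailed balance with respect to $\sigma$.

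The main obstacle is the simultaneous diagonalisation step: \emph{a priori} the Kraus operators are only defined up to arbitrary unitary mixing, so one must exploit the detailed balance condition to find a distinguished basis in which the modular operator acts diagonally and the adjoint symmetry is manifest. This involves a careful spectral analysis of $\Delta_\sigma$ restricted to the dissipative subspace, together with an algebraic argument that detailed balance intertwines the two halves of the dissipator. Once this representation is secured, the symmetry condition \reff{symcond} drops out by matching the coefficients of the two terms that pair $\tilde L_j$ and $\tilde L_{j'}=\tilde L_j^*$.

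Finally, for the converse, I would directly verify that the operator $\LL$ defined by \reff{LLDBC} under the four listed conditions and \reff{symcond} is (i) of GKSL form, hence generates a completely positive, unital semigroup, and (ii) satisfies \reff{DBC} with respect to $\sigma$. Step (i) is a rearrangement of the two commutator expressions into the standard $L^*fL-\tfrac12\{L^*L,f\}$ form with manifestly positive weights, while step (ii) follows by computing $\tr(\sigma\LL(X)^*Y)$ directly and using $\Delta_\sigma(\tilde L_j)=\e^{-\omega_j}\tilde L_j$ together with the symmetry condition to interchange the roles of $X$ and $Y$. This closes the proof in both directions.
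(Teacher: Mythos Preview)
The paper does not prove this theorem: it is stated with citation to \cite{A76,Carlen20171810} and no proof is given in the body of the article. There is therefore nothing in the paper against which to compare your argument.

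That said, your outline is the standard route taken in the cited references (in particular Carlen--Maas): start from the GKSL form, use the GNS detailed-balance symmetry to constrain the Hamiltonian part and to show that the dissipative subspace is invariant under $\Delta_\sigma$ and under the adjoint, then pass to a basis of modular eigenvectors to obtain conditions~1--4 and the pairing \reff{symcond}; the converse is a direct verification. One point you gloss over deserves more care: detailed balance in the sense of \reff{DBC} does not by itself kill the Hamiltonian term, since $i[H,\cdot]$ is GNS-symmetric whenever $H$ commutes with $\sigma$. In the Carlen--Maas treatment this is handled by imposing a slightly stronger (KMS-type) symmetry, or equivalently by absorbing the commuting Hamiltonian into the structure; your phrase ``absorbed into the dissipator \ldots\ or shown to vanish'' hides a genuine subtlety that the references resolve explicitly.
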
	
			
			\subsection{The Wasserstein distance $W_{2,\LL}$}
			In this section, we recall the construction of the Wasserstein metric $W_{2,\LL}$ first defined in \cite{Carlen20171810}. Assume given a generator $\LL$ of a primitive QMS, with invariant state $\sigma$, of the form of \reff{LLDBC}. Given an operator $U\in \cB(\cH)$, its noncommutative \textit{gradient} is defined as:
\begin{align*}
	\nabla U:=(\partial_1U,...,\partial_{\mathcal{J}}U), ~~~ U\in\cB(\cH),
\end{align*}	
where $\partial_j X=[\tilde{L}_j,X]$ for all $j\in\mathcal{J}$. Similarly, given a vector $\mathbf{A}\equiv (A_1,...,A_{|\cJ|})\in \bigoplus_{j\in\cJ}\cB(\cH)$, the \textit{divergence} of $\mathbf{A}$ is defined as
\begin{align*}
	\operatorname{div}(\mathbf{A}):=\sum_{j\in\mathcal{J}} c_j[A_j,\tilde{L}_j^*]\equiv- \sum_{j\in\cJ}c_j \partial^*_j A_j,
\end{align*}
where $\partial_j^*X:=[\tilde{L}_j^*,X]$. For $\vec{\omega}:=(\omega_1,...,\omega_{|\mathcal{J}|})$, define the linear operator $[\rho]_{\vec{\omega}}$ on $\bigoplus_{j\in \mathcal{J}}\cB(\cH)$ through
\begin{align*}
	[\rho]_{\vec{\omega}}\mathbf{A}:=([\rho]_{\omega_1}A_1,...,[\rho]_{\omega_{|\mathcal{J}|}}A_{|\mathcal{J}|}),~~~~~~~~\mathbf{A}\equiv(A_1,...,A_{|\mathcal{J}|}),
\end{align*}
where for any $\omega\in \RR$,
\begin{align}\label{fomega}
	[\rho]_\omega=R_\rho\circ f_\omega(\Delta_{\rho}),~~~~f_\omega(t):=\e^{\omega/2}\frac{t-\e^{-\omega}}{\log t+\omega},~~~~t\in\RR,
\end{align}
where $R_\rho:\cB(\cH)\to \cB(\cH)$ denotes the operator of left multiplication by $\rho$. Intuitively, $	[\rho]_\omega$ can be understood as a noncommutative way of multiplying by $\rho$:
\begin{lemma}[see Lemma 5.8 of \cite{Carlen20171810}]\label{lemma}
	For any $\omega\in\RR$, and $\rho\in \cD_+(\cH)$, 
	\begin{align*}
		[\rho]_{\omega}(A)=\int_0^1 \e^{\omega(1/2-s)}\rho^s A\rho^{1-s}ds.
	\end{align*}
\end{lemma}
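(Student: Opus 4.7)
My plan is to reduce the claimed operator identity to an elementary scalar computation by passing to an eigenbasis of $\rho$. Since $\rho\in\cD_+(\cH)$, I fix an orthonormal eigenbasis $(|i\rangle)_i$ with $\rho|i\rangle=\lambda_i|i\rangle$, where each $\lambda_i>0$. The rank-one operators $\{|i\rangle\langle j|\}_{i,j}$ form a basis of $\cB(\cH)$ and are simultaneous eigenvectors of left multiplication by $\rho$ (eigenvalue $\lambda_i$), right multiplication by $\rho$ (eigenvalue $\lambda_j$), and of the modular operator $\Delta_\rho$. Both sides of the claimed identity are linear in $A$, so it suffices to verify the equality on an arbitrary $|i\rangle\langle j|$.

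For the right-hand side, the observation $\rho^s|i\rangle\langle j|\rho^{1-s}=\lambda_i^s\lambda_j^{1-s}|i\rangle\langle j|$ reduces the integral to
\[
\int_0^1 e^{\omega(1/2-s)}\,\lambda_i^s\lambda_j^{1-s}\,ds=\lambda_j\,e^{\omega/2}\int_0^1\bigl(\lambda_i e^{-\omega}/\lambda_j\bigr)^s\,ds,
\]
which is an elementary integral of the form $\int_0^1 e^{as}\,ds=(e^a-1)/a$ with $a=\log(\lambda_i/\lambda_j)-\omega$. A short simplification then yields the scalar coefficient $\bigl(\lambda_i e^{-\omega/2}-\lambda_j e^{\omega/2}\bigr)/\bigl(\log(\lambda_i/\lambda_j)-\omega\bigr)$ in front of $|i\rangle\langle j|$. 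On the left-hand side, the spectral theorem applied to $\Delta_\rho$ gives $f_\omega(\Delta_\rho)(|i\rangle\langle j|)=f_\omega(\mu_{ij})\,|i\rangle\langle j|$ for the corresponding $\Delta_\rho$-eigenvalue $\mu_{ij}$, after which the composition with $R_\rho$ contributes the final scalar factor. What remains is the scalar identity between the two resulting expressions, which is a direct algebraic manipulation from the defining formula $f_\omega(t)=e^{\omega/2}(t-e^{-\omega})/(\log t+\omega)$.

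The only subtle point is the removable singularity where the denominator $\log(\lambda_i/\lambda_j)-\omega$ vanishes. At that value the integrand degenerates to a constant and $\int_0^1 1\,ds=1$ is trivially evaluated, while $f_\omega$ is understood via its continuous extension (the standard convention for such divided-difference-type expressions), so the identity persists by continuity. I expect no substantial obstacle beyond carefully tracking the sign conventions used for $\Delta_\rho$ and $R_\rho$ so that the scalar factors $\lambda_i$, $\lambda_j$, $e^{\pm\omega/2}$ on both sides line up correctly; once those are pinned down, the verification is a one-line algebraic check.
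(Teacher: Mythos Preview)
The paper does not give its own proof of this lemma; it simply defers to Lemma~5.8 of \cite{Carlen20171810}. Your eigenbasis reduction is precisely the standard argument used there: both sides are linear in $A$ and diagonal on the matrix units $|i\rangle\langle j|$, and your evaluation of the integral side to $(\lambda_i e^{-\omega/2}-\lambda_j e^{\omega/2})/(\log(\lambda_i/\lambda_j)-\omega)$ is correct, as is your treatment of the removable singularity by continuity.

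Your closing caveat about conventions is not merely cosmetic but essential here: the paper's surrounding text describes $R_\rho$ as ``left multiplication'' although it is used as right multiplication $R_\rho(A)=A\rho$ later on, and with the literal formula for $f_\omega$ displayed in the paper the sign of $\omega$ must be reconciled with that in \cite{Carlen20171810} before the two scalar expressions coincide. Once those conventions are fixed consistently with the source, the left-hand side on $|i\rangle\langle j|$ becomes $\lambda_j\, f_\omega(\lambda_i/\lambda_j)$ and the identity reduces to the one-line algebraic check you anticipate.
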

\noindent
Let $(\gamma(s))_{s\in (-\eps,\eps)}$ be a differential path in $\cD_+(\cH)$ for some $\eps>0$, and denote $\rho:=\gamma(0)$. Then $\tr(\dot{\gamma}(0))=\left.\frac{d}{ds}\right|_{s=0}\tr(\gamma(s))=0$. Carlen and Maas proved that there is a unique vector field $\mathbf{V}\in\bigoplus_{j\in\mathcal{J}} \cB(\cH) $ of the form $\mathbf{V}=\nabla U$, where $U\in\cB(\cH)$ is traceless and self-adjoint, for which the following non-commutative continuity equation holds:
\begin{align}\dot{\gamma}(0)=-\operatorname{div}([\rho]_{\vec{\omega}}\nabla U).\label{continuity2}
\end{align}
Define the inner product $\langle .,. \rangle_{\LL,\rho}$ on $\bigoplus_{j\in\mathcal{J}}\cB(\cH)$ through:
\begin{align}\label{innerj}
	\langle \mathbf{W},\mathbf{V}\rangle_{\LL,\rho}:=\sum_{j\in\mathcal{J}} c_j \langle W_j,[\rho]_{\omega_j}{V}_j\rangle ,
\end{align}
where $\langle A,B\rangle :=\tr(A^* B)$ denotes the usual Hilbert Schmidt inner product on $\cB(\cH)$. 
Hence, looking upon $\cD_+(\cH)$ as a manifold, for each $\rho\in \cD_+(\cH)$, we can identify the tangent space $T_\rho$ at $\rho$ with the set of gradient vector fields $\{\nabla U:~U\in\cB(\cH),~ U=U^*\}$ through the correspondence provided by the continuity equation (\ref{continuity2}). Defining the metric $g_\LL$ through the relation
\begin{align}\label{gammav}
	\|\dot{\gamma}(0)\|_{g_{\LL,\rho}}^2:= \|\mathbf{V}(s)\|^2_{\LL,\rho},
\end{align}
this endows the manifold $\cD_+(\cH)$ with a smooth Riemmanian structure. In this framework, Carlen and Maas then defined the \textit{modified non-commutative Wasserstein distance} $W_{2,\LL}$ to be the energy associated to the metric $g_{\LL}$, i.e.: 
\begin{align}\label{W2}
	W_{2,\LL}(\rho,\sigma):=\inf_{\gamma}\left\{ \left(\int_0^1 
	\|\mathbf{V}(s)\|^2_{\LL,\gamma(s)}
	% 		\left\|\frac{d{\gamma}(s)}{ds}\right\|_{g_{\LL,\gamma(s)}}^2
	ds\right)^{1/2}~:~\gamma(0)=\rho,~~\gamma(1)=\sigma\right\},
\end{align}
where the infimum is taken over smooth paths $\gamma:[0,1]\to \cD_+(\cH)$, and $\mathbf{V}:[0,1]\to \bigoplus_{i\in\cJ}\cB(\cH)$ is related to $\gamma$ through the continuity equation \reff{continuity2}. The paths achieving the infimum, if they exist, are the minimizing geodesics with respect to the metric $g_\LL$. The following lemma, proved in \cite{rouze2017concentration}, follows from a standard argument:
	\begin{lemma}\label{characwass}
	With the above notations, the Wasserstein distance between two faithful states $\rho,\sigma$ is equal to the minimal length over the smooth paths joining $\rho$ and $\sigma$:
	\begin{align}\label{length}
		W_{2,\LL}(\rho,\sigma)=\inf_{\gamma(s)\text{ const. speed}}\left\{\int_0^1 \|\dot{\gamma}(s)\|_{g_{\LL,\gamma(s)}}ds:~\gamma(0)=\rho,~\gamma(1)=\sigma\right\},
	\end{align}
	where the infimum is taken over curves $\gamma$ of constant speed, i.e. such that $s\mapsto \|\dot{\gamma}(s)\|_{g_{\LL,\gamma(s)}}$ is constant on $[0,1]$. 
\end{lemma}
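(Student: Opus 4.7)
The plan is to prove this lemma by the standard Riemannian-geometric argument that relates the energy functional $E(\gamma):=\int_0^1 \|\dot\gamma(s)\|_{g_{\LL,\gamma(s)}}^2\,ds$ (appearing in the definition \reff{W2} of $W_{2,\LL}$) to the length functional $L(\gamma):=\int_0^1 \|\dot\gamma(s)\|_{g_{\LL,\gamma(s)}}\,ds$, using the Cauchy--Schwarz inequality together with a constant-speed reparametrization. First, the definition \reff{gammav} of $g_{\LL,\rho}$ together with uniqueness of the vector field $\mathbf{V}(s)=\nabla U(s)$ solving the continuity equation \reff{continuity2} identifies $\|\mathbf{V}(s)\|_{\LL,\gamma(s)}^2$ with $\|\dot\gamma(s)\|_{g_{\LL,\gamma(s)}}^2$, so that $W_{2,\LL}(\rho,\sigma)=\inf_\gamma \sqrt{E(\gamma)}$.

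Next I would establish the pointwise inequality $L(\gamma)^2\le E(\gamma)$ by applying Cauchy--Schwarz to the pair of functions $\|\dot\gamma(s)\|_{g_{\LL,\gamma(s)}}$ and $1$ on $[0,1]$, recording the well-known equality case: $L(\gamma)^2=E(\gamma)$ if and only if $s\mapsto \|\dot\gamma(s)\|_{g_{\LL,\gamma(s)}}$ is constant on $[0,1]$. This immediately yields one of the two required inequalities: for any constant-speed path $\gamma$ joining $\rho$ to $\sigma$, one has $L(\gamma)=\sqrt{E(\gamma)}\ge W_{2,\LL}(\rho,\sigma)$, and taking the infimum over such paths gives $\inf_{\gamma\text{ const.\ speed}} L(\gamma)\ge W_{2,\LL}(\rho,\sigma)$.

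For the reverse inequality, I would use a reparametrization argument. Given any smooth path $\gamma:[0,1]\to\cD_+(\cH)$ with $\gamma(0)=\rho$, $\gamma(1)=\sigma$ and $L(\gamma)>0$, I define the increasing function $t\mapsto s(t):=L(\gamma|_{[0,t]})/L(\gamma)$ and set $\tilde\gamma(s):=\gamma(t(s))$; a direct chain-rule computation shows that $\tilde\gamma$ has constant speed equal to $L(\gamma)$, so that $L(\tilde\gamma)=L(\gamma)$ and, by the equality case above, $\sqrt{E(\tilde\gamma)}=L(\tilde\gamma)$. Combining this with $L(\gamma)\le \sqrt{E(\gamma)}$ from Cauchy--Schwarz yields
\[
\inf_{\gamma\text{ const.\ speed}} L(\gamma)\;\le\; L(\tilde\gamma)\;=\;L(\gamma)\;\le\; \sqrt{E(\gamma)},
\]
and taking the infimum over arbitrary $\gamma$ gives $\inf_{\gamma\text{ const.\ speed}} L(\gamma)\le W_{2,\LL}(\rho,\sigma)$, completing the proof.

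The main technical obstacle I anticipate is making the reparametrization rigorous when the original path $\gamma$ has vanishing speed on a nontrivial subset of $[0,1]$, since then $t\mapsto s(t)$ is only continuous and weakly increasing rather than a smooth diffeomorphism, and $\tilde\gamma$ may fail to be smooth. I would handle this by an approximation argument within the smooth category: perturbing $\gamma$ so that it has everywhere strictly positive speed (for instance by reparametrizing via $s(t):=(L(\gamma|_{[0,t]})+\eps t)/(L(\gamma)+\eps)$ and passing to the limit $\eps\to 0$), using continuity of $L$ and $E$ on smooth perturbations to conclude that the infimum is unchanged. The remaining ingredients are all standard calculus in the finite-dimensional Riemannian manifold $(\cD_+(\cH),g_{\LL})$.
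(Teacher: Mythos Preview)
Your proposal is correct and is precisely the ``standard argument'' the paper alludes to; the paper itself does not write out a proof but simply refers the reader to \cite{rouze2017concentration}. The Cauchy--Schwarz comparison between $L$ and $\sqrt{E}$ together with constant-speed reparametrization is exactly what is meant, and your discussion of the degenerate-speed case is an appropriate level of care.
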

\noindent
This definition for the quantum Wasserstein distance, $W_{2,\LL}$, is natural in the sense that the master equation
\begin{align*}
	\dot{\rho}_t=\LL_*\rho_t,
\end{align*}
is \textit{gradient flow} for $D(.\|\sigma)$, where $\sigma$ is the invariant state associated to $\LL$. This means that $\LL_* \rho=-\operatorname{grad}_\LL D(\rho\|\sigma)$, where the gradient $\operatorname{grad}_{\LL}$ of a differentiable functional $\mathcal{F}:\cD_+(\cH)\to \RR$ is defined as the unique element in the tangent space at $\rho$ so that 
\begin{align}\label{gradientflow}
	\left.	\frac{d}{dt}\mathcal{F}(\gamma(t))\right|_{t=0}=g_{\LL,\rho}(\dot{\gamma},\operatorname{grad}_{g_{\LL,\rho}}\mathcal{F}(\rho))
\end{align}
for all smooth paths $\gamma(t)$ defined on $(-\eps,\eps)$ for some $\eps>0$ with $\gamma(0)=\rho$. In particular, for $\gamma(t)=\rho_t\equiv \Lambda_{*t}(\rho)$,
\begin{align}\label{gradflow}
	\left.\frac{d}{dt}D(\rho_t\|\sigma)\right|_{t=0}=-g_{\LL,\rho}(\LL_*(\rho),\LL_*(\rho))=-\|\LL_*(\rho)\|_{g_{\LL,\rho}}^2.
\end{align}	 
The following lemma is going to play a crucial role in the rest of this paper:
\begin{lemma}\label{lemmametric}
For any $\rho\in\cD_+(\cH)$, the map $D_{\vec{\omega}}(\rho): U\mapsto -\operatorname{div}([\rho]_{\vec{\omega}} \nabla U)  $ is invertible and positive on the space of self-adjoint, traceless operators. Moreover, if $\rho\ge \eps\mathbb{I}$ for some $\eps>0$, then:
		\begin{align*}
	 D_{\vec{\omega}}(\rho)^{-1}  \le K_\LL~  \eps^{-1} ~\id,
	\end{align*}
	where $K_\LL:=\sup_{j\in\cJ}  \frac{\omega_j}{\e^{\omega_j/2}-\e^{-\omega_j/2}}\|(-\operatorname{div}\circ \nabla (.))^{-1}\|_{\infty\to\infty}>0  $.
\end{lemma}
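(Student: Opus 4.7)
The plan is to reduce $D_{\vec{\omega}}(\rho)$ to the ``unweighted'' Laplacian $-\operatorname{div}\circ\nabla$ by extracting a spectral lower bound on each multiplier $[\rho]_{\omega_j}$ and inverting the resulting superoperator inequality. Integration by parts against $\operatorname{div}\mathbf{A}=-\sum_{j\in\cJ}c_j\partial_j^*A_j$ recasts the quadratic form as $\langle U,D_{\vec{\omega}}(\rho)U\rangle=\sum_{j\in\cJ}c_j\langle \partial_j U,[\rho]_{\omega_j}\partial_j U\rangle$. Substituting the integral representation of \Cref{lemma} and diagonalising $\rho=\sum_k\lambda_k|k\rangle\langle k|$ gives
\[
\langle A,[\rho]_\omega A\rangle=\sum_{k,l}|A_{kl}|^2\,h_\omega(\lambda_k,\lambda_l),\qquad h_\omega(a,b):=\int_0^1 \e^{\omega(1/2-s)}a^s b^{1-s}\,ds>0,
\]
so each $[\rho]_{\omega_j}$ is a strictly positive Hilbert--Schmidt-self-adjoint superoperator and hence $D_{\vec{\omega}}(\rho)\ge 0$. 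Vanishing of the quadratic form forces $\partial_j U=[\tilde L_j,U]=0$ for every $j\in\cJ$; since condition~3 makes $\{\tilde L_j\}_{j\in\cJ}$ closed under adjoints, such a $U$ lies in the commutant of a $*$-closed family of Lindblad operators, which by primitivity of $(\Lambda_t)_{t\ge0}$ coincides with $\ker\cL=\C\II$. Tracelessness then forces $U=0$, establishing invertibility on the traceless self-adjoint subspace.

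Next I would convert the hypothesis $\rho\ge \eps\II$ into a quantitative lower bound on $[\rho]_\omega$. Since $\lambda_k\ge\eps$ for every eigenvalue, $\lambda_k^s\lambda_l^{1-s}\ge\eps$ for all $s\in[0,1]$, whence
\[
h_\omega(\lambda_k,\lambda_l)\ge\eps\int_0^1 \e^{\omega(1/2-s)}\,ds=\eps\cdot\frac{\e^{\omega/2}-\e^{-\omega/2}}{\omega}
\]
(with the obvious limit $\eps$ at $\omega=0$). Writing $\kappa_\omega:=(\e^{\omega/2}-\e^{-\omega/2})/\omega>0$ and $\kappa_{\min}:=\min_{j\in\cJ}\kappa_{\omega_j}$, this translates into $[\rho]_{\omega_j}\ge\eps\,\kappa_{\omega_j}\,\id$ as HS-self-adjoint superoperators, hence $D_{\vec{\omega}}(\rho)\ge \eps\,\kappa_{\min}\sum_{j\in\cJ}c_j\,\partial_j^*\partial_j=\eps\,\kappa_{\min}\,(-\operatorname{div}\circ\nabla)$ on the traceless self-adjoint subspace.

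Finally, both $D_{\vec{\omega}}(\rho)$ and $-\operatorname{div}\circ\nabla$ are positive and invertible on that subspace (the latter by the same primitivity argument applied with all $\omega_j=0$), so taking inverses reverses the order and yields
\[
D_{\vec{\omega}}(\rho)^{-1}\le\frac{1}{\eps\,\kappa_{\min}}(-\operatorname{div}\circ\nabla)^{-1}\le\frac{\|(-\operatorname{div}\circ\nabla)^{-1}\|}{\eps\,\kappa_{\min}}\,\id=\frac{K_\cL}{\eps}\,\id,
\]
matching the stated constant. The most delicate point throughout is the bookkeeping of superoperator topology: the natural chain of estimates above lives in the HS (equivalently $2\!\to\!2$) operator order, and one must verify that the $\infty\!\to\!\infty$ norm appearing in $K_\cL$ is a valid upper estimate for the relevant quantity --- the remaining ingredients (the positive kernel $h_\omega$, the spectral lower bound from $\rho\ge\eps\II$, and the sandwich with $-\operatorname{div}\circ\nabla$) are unaffected by this choice of convention.
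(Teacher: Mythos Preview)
Your argument is correct and follows essentially the same route as the paper: rewrite the quadratic form via $\langle U,D_{\vec{\omega}}(\rho)U\rangle=\sum_j c_j\langle\partial_jU,[\rho]_{\omega_j}\partial_jU\rangle$, use the integral representation of $[\rho]_\omega$ together with $\rho\ge\eps\II$ to obtain $[\rho]_{\omega_j}\ge\eps\,\kappa_{\omega_j}\,\id$, sandwich to get $D_{\vec{\omega}}(\rho)\ge\eps\,\kappa_{\min}(-\operatorname{div}\circ\nabla)$, and invert. The only cosmetic differences are that the paper works directly with the operator bound $L_\rho^\alpha R_\rho^{1-\alpha}\ge\eps\,\id$ rather than diagonalising $\rho$, and that for injectivity the paper first invokes Theorem~7.3 of \cite{Carlen20171810} before using the same primitivity argument you give (namely $\ker\nabla=\C\II$); your self-contained version is arguably cleaner. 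Your caveat about the $\infty\!\to\!\infty$ versus Hilbert--Schmidt operator norm in $K_\LL$ is well taken: the paper's proof also passes silently from $(-\operatorname{div}\circ\nabla)^{-1}$ to $\|(-\operatorname{div}\circ\nabla)^{-1}\|_{\infty\to\infty}\,\id$ without comment, so this is a subtlety in the statement itself rather than a defect in your argument.
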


\begin{proof}
Let $\mathcal{W}$ be the space of self-adjoint, traceless operators on $\cH$. From Theorem 7.3 of \cite{Carlen20171810}, for any $C^1$ path $(\gamma(t))_{t\in(-\eps,\eps)}$, with $\gamma(0)=\rho$, there exists a unique vector field of the form $\nabla U$ for which the continuity equation $\dot{\gamma}(0)= -\operatorname{div} ([\rho]_{\vec{\omega}}(\nabla U))$ holds. Moreover, by ergodicity of $(\Lambda_t)_{t\ge 0}$, $\ker (\nabla)$ consists of multiples of the identity. Therefore, there exists a unique $U\in \mathcal{W}$ such that $\dot{\gamma}(0)= -\operatorname{div} ([\rho]_{\vec{\omega}}(\nabla U))$. Now, for any $U,V\in \mathcal{W}$:
	\begin{align*}
		\langle U,D_{\vec{\omega}}(\rho)[V]\rangle&=\sum_{j\in\cJ} c_j \langle \partial_jU,[\rho]_{\omega_j} \partial_j V\rangle\\
		&= \sum_{j\in\cJ} c_j\langle [\rho]_{\omega_j}\partial_j U, \partial_j V\rangle\\
		&=\sum_{j\in\cJ} c_j \langle \partial_j^*([\rho]_{\omega_j} \partial_j U), V\rangle\\
		&=-\langle \operatorname{div}([\rho]_{\vec{\omega}}\nabla U),V\rangle\\
		&= \langle D_{\vec{\omega}}(\rho)[U],V\rangle,
	\end{align*}
	which means that $D_{\vec{\omega}}(\rho)$ is indeed self-adjoint. By the same argument, we can show that the superoperators $\nabla: \cB(\cH)\to \sum_{j\in\cJ}\cB(\cH) $ and $-\operatorname{div}:\oplus_{j\in\cJ}\cB(\cH)\to \cB(\cH) $ are adjoint to each other, where $\cB(\cH)$ and $\oplus_{j\in\cJ}\cB(\cH)$ are provided with the inner products $\langle .,. \rangle$ and $\sum_{j\in \cJ} c_j\, \langle .,.\rangle$, respectively. Indeed for any $U\in \cB(\cH)$ and $\mathbf{V}=(V_1,...,V_{|\cJ|})\in\oplus_{j\in\cJ} \cB(\cH)$:
\begin{align*}
	\sum_{j\in\cJ} c_j  \langle V_j	,\partial_j U\rangle= \sum_{j\in\cJ} c_j \tr ((V_j)^* [\tilde{L}_j,U]  )= \sum_{j\in\cJ} c_j \tr( [V_j^* ,\tilde{L}_j] U  )&= -  \sum_{j\in\cJ} c_j \tr( [V_j ,\tilde{L}_j^*]^* U  )\\
	&=\langle -\operatorname{div}(\mathbf{V}),U\rangle.
	\end{align*}
Assume now that $\rho\ge \eps\mathbb{I}$ for some $\eps>0$, so that for any $j\in \cJ$:
\begin{align*}
	[\rho]_{\omega_j}=\int_0^1 \e^{\omega_j(1/2-\alpha)}L_\rho^\alpha R_\rho^{1-\alpha}d\alpha \ge \eps \frac{\e^{\omega_j/2}-\e^{-\omega_j/2}}{\omega_j}\id>0.
\end{align*}	
	Hence, $	-\operatorname{div} \circ[\rho]_{\vec{\omega}}\circ \nabla( .)$ is positive and
	\begin{align*}
	(	-\operatorname{div} \circ[\rho]_{\vec{\omega}}\circ \nabla( .) )^{-1}\le \eps^{-1}\sup_{j\in\cJ}  \frac{\omega_j}{\e^{\omega_j/2}-\e^{-\omega_j/2}}   (-\operatorname{div}\circ \nabla (.))^{-1},
		\end{align*}
	and the result follows.
	\qed
\end{proof}	
\noindent
The above lemma allows us to extend the definition of the Wasserstein distance to non-faithful states:
\begin{proposition}[Extension of the metric to $\cD(\cH)$]\label{prop2}
	Let $\rho,\omega\in\cD(\cH)$ and let $\{\rho_n\}_{n\in\NN}$ and $\{\omega_n\}_{n\in\NN}$ be sequences of faithful states satisfying
	\begin{align}\label{conv}
		\tr[(\rho-\rho_n)^2]\to0,~~~~~\tr[(\omega-\omega_n)^2]\to0,
		\end{align}
	as $n\to\infty$. Then the sequence $\{W_{2,\LL}(\rho_n,\omega_n)\}_{n\in\NN}$ converges.
\end{proposition}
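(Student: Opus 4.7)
The plan is to prove that $\{W_{2,\LL}(\rho_n,\omega_n)\}_{n\in\NN}$ is a Cauchy sequence in $\RR$, from which its convergence follows by completeness of $\RR$. By two applications of the triangle inequality for $W_{2,\LL}$ on $\cD_+(\cH)$,
\[|W_{2,\LL}(\rho_n,\omega_n)-W_{2,\LL}(\rho_m,\omega_m)|\le W_{2,\LL}(\rho_n,\rho_m)+W_{2,\LL}(\omega_n,\omega_m),\]
so the whole argument reduces to establishing the quantitative H\"older-type estimate
\[W_{2,\LL}(\eta,\xi)\le C_{\LL,d}\,\|\eta-\xi\|_2^{1/2}\qquad\forall\,\eta,\xi\in\cD_+(\cH),\]
with a constant $C_{\LL,d}$ depending only on $\LL$ and $d$ and, crucially, independent of how close $\eta,\xi$ come to the boundary of $\cD(\cH)$. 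Granted such an estimate, the Hilbert--Schmidt convergence \reff{conv} forces the right-hand side of the triangle inequality to vanish as $n,m\to\infty$.

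To prove the key estimate I would use the length formulation of \Cref{characwass} together with an explicit three-piece path. Along a straight-line segment $\gamma(s)=(1-s)\alpha+s\beta$ between faithful states, the continuity equation \reff{continuity2} is solved by $U(s)=D_{\vec{\omega}}(\gamma(s))^{-1}(\beta-\alpha)$, so that combining \reff{gammav}, \reff{innerj} and the self-adjointness of $D_{\vec\omega}(\gamma(s))$ from \Cref{lemmametric} yields the pointwise bound
\[\|\dot\gamma(s)\|_{g_{\LL,\gamma(s)}}^2=\langle \beta-\alpha,\,D_{\vec{\omega}}(\gamma(s))^{-1}(\beta-\alpha)\rangle\le \frac{K_\LL}{\lambda_{\min}(\gamma(s))}\,\|\beta-\alpha\|_2^2.\]
Applied naively to the line from $\eta$ to $\xi$, this bound is not integrable once $\lambda_{\min}(\eta)$ or $\lambda_{\min}(\xi)$ approaches $0$. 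To cure this, I would introduce a regularization parameter $\mu\in(0,1)$, set $\tilde\eta:=(1-\mu)\eta+\mu\II/d$ and $\tilde\xi:=(1-\mu)\xi+\mu\II/d$ (so that $\tilde\eta,\tilde\xi\ge(\mu/d)\II$ uniformly), and split
\[W_{2,\LL}(\eta,\xi)\le W_{2,\LL}(\eta,\tilde\eta)+W_{2,\LL}(\tilde\eta,\tilde\xi)+W_{2,\LL}(\tilde\xi,\xi).\]

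I would then estimate each piece by the length of the straight-line path joining its endpoints. For the central piece the uniform lower bound gives $W_{2,\LL}(\tilde\eta,\tilde\xi)\le\sqrt{K_\LL d/\mu}\,\|\eta-\xi\|_2$; for each outer piece, e.g.\ $\gamma(s)=\eta+s\mu(\II/d-\eta)$, the weaker pointwise bound $\gamma(s)\ge s(\mu/d)\II$ still produces an integrable $s^{-1/2}$ integrand, yielding $W_{2,\LL}(\eta,\tilde\eta)\le 2\sqrt{K_\LL d\mu}\,\|\II/d-\eta\|_2$ and analogously for $W_{2,\LL}(\tilde\xi,\xi)$. Using the uniform dimensional bound $\|\II/d-\rho\|_2\le 2$ valid on all of $\cD(\cH)$, summing the three pieces gives $W_{2,\LL}(\eta,\xi)\le A\sqrt{\mu}+B\|\eta-\xi\|_2/\sqrt{\mu}$ with $A,B$ depending only on $\LL$ and $d$. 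Optimizing in $\mu$ (with $\mu\sim\|\eta-\xi\|_2$) produces the desired H\"older bound, completing the Cauchy argument.

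The principal obstacle is precisely the endpoint blow-up of $D_{\vec{\omega}}(\rho)^{-1}$ as $\rho$ approaches the boundary of $\cD(\cH)$ highlighted in \Cref{lemmametric}: no single straight-line path between $\rho_n$ and $\rho_m$ admits a length estimate uniform in $n,m$. The three-piece construction bypasses this by paying a detour cost of order $\sqrt{\mu}$ to reach well-conditioned states, in exchange for an improved middle-segment cost of order $1/\sqrt{\mu}$; the balance of these competing terms both forces the H\"older exponent $1/2$ and delivers a constant $C_{\LL,d}$ independent of the endpoints.
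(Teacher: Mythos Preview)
Your argument is correct and rests on the same core mechanism as the paper's proof: straight-line interpolations combined with the operator bound of \Cref{lemmametric}, the integrability of $s^{-1/2}$ coming from a segment that ends at a uniformly regularized state, and a balance of the regularization scale against the Hilbert--Schmidt displacement.

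The packaging differs slightly. You prove a \emph{uniform} H\"older bound $W_{2,\LL}(\eta,\xi)\le C_{\LL,d}\|\eta-\xi\|_2^{1/2}$ on all of $\cD_+(\cH)$ via a three-piece detour through the two regularized points $\tilde\eta,\tilde\xi$, and then feed this into the Cauchy reduction. The paper instead fixes a single anchor $\bar\rho=(1-\eps)\rho+\eps\,\II$ built from the (possibly non-faithful) limit $\rho$, chooses $\eps$ so that $\|\rho-\rho_n\|_2\le\eps$, and estimates $W_{2,\LL}(\rho_n,\bar\rho)\lesssim\sqrt{\eps}$ directly from the single segment $\rho_n\to\bar\rho$ (which already satisfies $\gamma(s)\ge s\eps\,\II$); the Cauchy property then follows from $W_{2,\LL}(\rho_n,\rho_m)\le W_{2,\LL}(\rho_n,\bar\rho)+W_{2,\LL}(\bar\rho,\rho_m)$. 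Your route yields a stand-alone statement (uniform $\tfrac12$-H\"older continuity of $W_{2,\LL}$) of independent interest; the paper's route is a little shorter because a two-piece path via one anchor suffices once the limit point is used.
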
	

\begin{proof}
	The proof is similar to the one given in Proposition 4.5 of \cite{[CM14]}. It is enough to show that $\{W_{2,\LL}(\rho_n,\omega_n)\}_{n\in\NN}$ is Cauchy. By the triangle inequality, it is even enough to prove that $W_{2,\LL}(\rho_n,\rho_m)\to 0$ as $m,n\to\infty$.	Let $\eps\in (0,1)$ and set $\bar{\rho}:=(1-\eps)\rho+\eps \mathbb{I}$. Let $N\in\NN$ be such that for any $n\ge N$, $\tr[(\rho-\rho_n)^2]\le \eps^2$. For $n\ge N$, consider the convex interpolation $\gamma(s):= (1-s) \rho_n + s\bar{\rho} $. Since $\gamma(s)  \ge \eps s\mathbb{I}$ for $s\in [0,1]$, we find from \Cref{length} that
	\begin{align*}
		W_{2,\LL}&(\rho_n,\bar{\rho})\le \int_0^1   \|\dot{\gamma}(s)\|_{g_{\LL,\gamma(s)}}ds\\
		&=\int_0^1 \left[{ \sum_{j\in\cJ}   c_j \langle  \partial_j (-\operatorname{div}\circ [\gamma(s)]_{\vec{\omega}}\circ\nabla)^{-1} (\dot{\gamma}(s)), [\gamma(s)]_{\omega_j} \partial_j   (-\operatorname{div}\circ [\gamma(s)]_{\vec{\omega}}\circ\nabla)^{-1} (\dot{\gamma}(s)) \rangle   } \right]^{1/2}ds \\
		&=   \int_0^1  \sqrt{\langle  \dot{\gamma}(s),  (   -\operatorname{div}([\gamma(s)]_{\vec{\omega}} \nabla (.)) )^{-1}~ \dot{\gamma}(s)\rangle} ds \\
		&\le \sqrt{\frac{K_\LL}{\eps }}\int_0^1 s^{-1/2} \sqrt{\tr[(\dot{\gamma}(s))^2]}  ds,  
		\end{align*}
	where we used \Cref{lemmametric} in the second, third and fourth above lines. Now 
	\begin{align*}
	\tr[(\dot{\gamma}(s))^2]&=\tr[ (\rho-\rho_n  +\eps(\mathbb{I}   -\rho))^2  ]\\
	&\le 2\tr [  (  \rho-\rho_n   )^2  ]+2\eps^2 \tr[(\mathbb{I}  -\rho)^2]\\
	&\le 2\left(  1+\tr[(\mathbb{I}-\rho)^2]\right)\eps^2.
\end{align*}	
Hence, $W_{2,\LL}(\rho_n,\bar{\rho})\le \sqrt{K(\LL,\rho) \eps } $, for some constant $K(\LL,\rho)$ depending on $\rho$ and $\LL$. Since $\eps$ was arbitrary, we conclude by triangle inequality that $W_{2,\LL}(\rho_m,\rho_n)\le W_{2,\LL}(\rho_m,\bar{\rho})+ W_{2,\LL}(\bar{\rho},\rho_n)   \to 0$.
	\qed
	\end{proof}
\bigskip
\noindent
The above proposition justifies the following definition: The modified Wasserstein distance $W_{2,\LL}$ between two states $\rho,\omega\in\cD(\cH)$ is defined as
\begin{align*}
	W_{2,\LL}(\rho,\omega):=\lim_{n\to\infty} W_{2,\LL}(\rho_n,\omega_n),
\end{align*}	 
where $\{\rho_n\}_{n\in\NN}$ and $\{\omega_n\}_{n\in\NN}$ are arbitrary sequences in $\cD_+(\cH)$ satisfying \reff{conv}. It can be shown that $(\cD(\cH),W_{2,\LL})$ forms a complete metric space as follows:
\begin{lemma}\label{W1W2}
	For any $\rho,\omega\in\cD(\cH)$, 
	\begin{align*}
		\|\rho-\omega\|_1\le  2 \left(  \sum_{j\in\mathcal{J}} c_j (\e^{-\omega_j/2}+\e^{\omega_j/2})\|\tilde{L}_j\|^2_\infty\right)^{1/2} ~W_{2,{\LL}}(\rho,\omega).
	\end{align*}
\end{lemma}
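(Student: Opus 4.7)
The plan is to first prove the bound for faithful $\rho, \omega \in \cD_+(\cH)$ and then extend to general states by approximation, using \Cref{prop2} for $W_{2,\LL}$ and the elementary continuity of $\|\cdot\|_1$ with respect to $\|\cdot\|_2$ to pass both sides to the limit. In the faithful case, fix $\eta>0$ and, by \Cref{characwass}, choose a constant-speed smooth path $\gamma:[0,1]\to \cD_+(\cH)$ from $\rho$ to $\omega$ of length at most $W_{2,\LL}(\rho,\omega)+\eta$, together with the associated self-adjoint traceless potentials $U(s)$ for which $\dot\gamma(s)=-\operatorname{div}([\gamma(s)]_{\vec\omega}\nabla U(s))$. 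The goal is the pointwise estimate
\begin{align*}
\|\dot\gamma(s)\|_1 \le 2\left(\sum_{j\in\cJ} c_j (e^{\omega_j/2}+e^{-\omega_j/2})\|\tilde{L}_j\|_\infty^2\right)^{\!1/2}\|\dot\gamma(s)\|_{g_{\LL,\gamma(s)}},
\end{align*}
after which integrating in $s$ and using the constant-speed identity $\int_0^1 \|\dot\gamma(s)\|_{g_{\LL,\gamma(s)}}\,ds=\text{length}(\gamma)\le W_{2,\LL}(\rho,\omega)+\eta$, then letting $\eta\downarrow 0$, finishes the proof.

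To obtain the pointwise bound I would use the duality $\|\dot\gamma(s)\|_1=\sup\{\tr(\dot\gamma(s) B): B=B^*,\|B\|_\infty\le 1\}$ and expand via the continuity equation, using the adjointness $\tr(\operatorname{div}(\mathbf{A}) B)=\sum_{j\in\cJ} c_j \tr(A_j\,\partial_j^* B)$ (the same integration by parts that appears in the proof of \Cref{lemmametric}) to write
\begin{align*}
-\tr(\dot\gamma(s) B) = \sum_{j\in\cJ} c_j \tr\bigl([\gamma(s)]_{\omega_j}(\partial_j U(s))\,\partial_j^* B\bigr).
\end{align*}
From \Cref{lemma} one sees that each multiplier $[\rho]_{\omega_j}$ is self-adjoint and positive for the Hilbert-Schmidt inner product, through the identity $\tr(A^*[\rho]_\omega A)=\int_0^1 e^{\omega(1/2-s)}\|\rho^{s/2}A\rho^{(1-s)/2}\|_2^2\,ds$. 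Therefore Cauchy-Schwarz in the positive bilinear form $(X,Y)\mapsto \tr(X^*[\gamma(s)]_{\omega_j}Y)$ summand by summand, followed by a second Cauchy-Schwarz in the sum over $j$ with weights $c_j$, bounds the right-hand side above in absolute value by $\|\dot\gamma(s)\|_{g_{\LL,\gamma(s)}}$ (recognised through \reff{innerj}--\reff{gammav}) times $\bigl(\sum_{j\in\cJ} c_j\,\langle(\partial_j^* B)^*,[\gamma(s)]_{\omega_j}(\partial_j^* B)^*\rangle\bigr)^{1/2}$.

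What remains is a uniform bound $\langle X^*,[\rho]_\omega X^*\rangle \le (e^{\omega/2}+e^{-\omega/2})\|X\|_\infty^2$, since then the commutator estimate $\|\partial_j^* B\|_\infty\le 2\|\tilde{L}_j\|_\infty$ completes the job. For this, \Cref{lemma} writes the inner product as $\int_0^1 e^{\omega(1/2-\alpha)}\tr(X\rho^\alpha X^*\rho^{1-\alpha})\,d\alpha$, and the integrand is bounded pointwise by $\|X\|_\infty^2$ via the Hölder estimate $\tr(X\rho^\alpha X^*\rho^{1-\alpha})=\|\rho^{(1-\alpha)/2}X\rho^{\alpha/2}\|_2^2 \le \|\rho^{(1-\alpha)/2}\|^2_{2/(1-\alpha)}\|X\|_\infty^2\|\rho^{\alpha/2}\|^2_{2/\alpha}=\|X\|_\infty^2$, using $\|\rho^\beta\|_{1/\beta}=\tr(\rho)=1$. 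The elementary inequality $\int_0^1 e^{\omega(1/2-\alpha)}d\alpha\le e^{\omega/2}+e^{-\omega/2}$ then produces the stated frequency-dependent factor. The main obstacle is conceptual rather than computational: one must absorb the noncommutative multiplier $[\gamma(s)]_{\omega_j}$ into a \emph{state-independent} prefactor, and the Hölder bound $\tr(X\rho^\alpha X^*\rho^{1-\alpha})\le \|X\|_\infty^2$ is precisely the step that achieves this disentanglement.
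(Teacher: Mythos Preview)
Your argument is correct and follows essentially the same route as the paper: both use the $\|\cdot\|_1$--$\|\cdot\|_\infty$ duality, pass from test operators to their commutators $\partial_j B$ via integration by parts against the continuity equation, apply Cauchy--Schwarz in the weighted inner product $\langle\cdot,\cdot\rangle_{\LL,\gamma(s)}$, and finish with the commutator bound $\|[\tilde L_j,B]\|_\infty\le 2\|\tilde L_j\|_\infty\|B\|_\infty$. The only difference is packaging: the paper invokes the ready-made dual estimate $|\tr(X(\rho-\omega))|\le \sqrt{d}\,\|X\|_{\operatorname{Lip}}\,W_{2,\LL}(\rho,\omega)$ from \cite{rouze2017concentration} and then bounds $\|X\|_{\operatorname{Lip}}$, whereas you reprove that estimate from scratch along a near-minimizing path. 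Your version is more self-contained; the paper's is shorter by citation.
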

\begin{proof}
	The proof follows from a direct application of inequality (2.39) of Lemma 6 of \cite{rouze2017concentration}: for any $X\in\cB_{sa}(\cH)$:
	\begin{align*}
		|\tr(X(\rho-\omega))|\le \sqrt{d}~\|X\|_{\operatorname{Lip}}W_{2,\LL}(\rho,\omega),
	\end{align*}
where 
	\begin{align*}
	\|X\|_{\operatorname{Lip}}&:= \left(\frac{1}{d}  \sum_{j\in\mathcal{J}} c_j (\e^{-\omega_j/2}+\e^{\omega_j/2})\|\partial_jX\|_{\infty}^2\right)^{1/2}\\
	&\le \frac{2}{\sqrt{d}}\left(  \sum_{j\in\mathcal{J}} c_j (\e^{-\omega_j/2}+\e^{\omega_j/2})\|\tilde{L}_j\|^2_\infty\right)^{1/2}\|X\|_\infty.
\end{align*}
The result follows from the duality relation between the norms $\|.\|_\infty$ and $\|.\|_1$.
	\qed
\end{proof}
\begin{proposition}\label{complete}
	The metric space $(\cD(\cH),W_{2,\LL})$ is complete.
\end{proposition}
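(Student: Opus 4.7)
The plan is to reduce completeness in $W_{2,\LL}$ to completeness in trace norm via \Cref{W1W2}, and then to upgrade trace-norm convergence to $W_{2,\LL}$-convergence by a perturbation argument analogous to the one used inside \Cref{prop2}.

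First I would take a Cauchy sequence $\{\rho_n\}_{n\in\NN}$ in $(\cD(\cH),W_{2,\LL})$. \Cref{W1W2} supplies a constant $M=M(\LL)>0$ with $\|\rho_n-\rho_m\|_1\le M\,W_{2,\LL}(\rho_n,\rho_m)$, so $\{\rho_n\}$ is Cauchy for $\|\cdot\|_1$. Since $\cD(\cH)$ is a closed, bounded subset of the finite-dimensional space $\cB(\cH)$, it is complete in every norm, and hence there is $\rho\in\cD(\cH)$ with $\|\rho_n-\rho\|_2\to 0$.

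It remains to show $W_{2,\LL}(\rho_n,\rho)\to 0$. Fix $\eps\in(0,1)$ and set
\[
\bar\rho_n:=(1-\eps)\rho_n+(\eps/d)\II,\qquad \bar\rho:=(1-\eps)\rho+(\eps/d)\II,
\]
both of which are faithful with $\bar\rho_n,\bar\rho\ge (\eps/d)\II$. By the triangle inequality,
\[
W_{2,\LL}(\rho_n,\rho)\le W_{2,\LL}(\rho_n,\bar\rho_n)+W_{2,\LL}(\bar\rho_n,\bar\rho)+W_{2,\LL}(\bar\rho,\rho).
\]
For the outer terms I would rerun the convex-interpolation estimate inside \Cref{prop2}: for any $\eta\in\cD(\cH)$ the path $s\mapsto\eta+s\eps((\II/d)-\eta)$ stays above $s(\eps/d)\II$ and has constant Hilbert--Schmidt derivative of norm at most $2\eps$ uniformly in $\eta$; combining \Cref{characwass} and \Cref{lemmametric} then yields $W_{2,\LL}(\eta,(1-\eps)\eta+(\eps/d)\II)\le C(\LL)\sqrt{\eps}$ with $C(\LL)$ independent of $\eta$. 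Applied to $\eta=\rho_n$ and $\eta=\rho$ this bounds both outer terms uniformly in $n$.

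For the middle term I would use the straight-line path $\gamma(s):=(1-s)\bar\rho_n+s\bar\rho\ge(\eps/d)\II$, whose constant derivative is $(1-\eps)(\rho-\rho_n)$. The same use of \Cref{characwass} and \Cref{lemmametric} gives
\[
W_{2,\LL}(\bar\rho_n,\bar\rho)\le\sqrt{d\,K_\LL/\eps}\;\|\rho-\rho_n\|_2,
\]
which tends to $0$ as $n\to\infty$ for each fixed $\eps$. Taking $\limsup_n$ in the triangle bound leaves $\limsup_n W_{2,\LL}(\rho_n,\rho)\le 2C(\LL)\sqrt{\eps}$, and letting $\eps\downarrow 0$ concludes. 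The main subtlety, and the expected main obstacle, is uniformity of the constant $C(\LL)$ across states in the outer-term bound; this is clean because the only state-dependent quantity entering that estimate is $\tr[(\II-\eta)^2]\le d$.
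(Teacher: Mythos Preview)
Your argument is correct and follows essentially the same approach as the paper: use \Cref{W1W2} to pass from a $W_{2,\LL}$-Cauchy sequence to a trace-norm limit $\rho\in\cD(\cH)$, then upgrade trace-norm convergence to $W_{2,\LL}$-convergence. The paper compresses this second step into a one-line appeal to \Cref{prop2}, whereas you unpack it explicitly via the three-term triangle decomposition and the $\eps$-regularization estimate from inside \Cref{prop2}'s proof; your version is arguably cleaner, since the paper's invocation of \Cref{prop2} tacitly assumes the $\rho_n$ are faithful (which need not hold), while your direct argument handles arbitrary states uniformly.
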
	
\begin{proof}
	This directly follows from \Cref{W1W2} and \Cref{prop2}: assume that $\{\rho_n\}_{n\in\NN}$ is a Cauchy sequence in $(\cD(\cH),W_{2,\LL})$, that is $W_{2,\LL}(\rho_n,\rho_m)\to 0$ as $m,n\to \infty$. Then, by \Cref{W1W2}, $\{\rho_n\}_{n\in\NN}$ is also Cauchy with respect to the trace norm $\|.\|_1$. By completeness of the normed vector space $(\cB(\cH),\|.\|_1)$, this implies existence of $\rho_\infty\in\cB(\cH)$ such that $\|\rho_n-\rho_\infty\|_1\to0$ as $n\to\infty$. Moreover, $\rho_\infty\in\cD(\cH)$: indeed, for any $\psi\in(\cH,\langle .|.\rangle)$, 
	\begin{align*}
		\langle \psi| \rho_\infty\psi\rangle&=\langle \psi| (\rho_\infty-\rho_n)\psi\rangle+\langle\psi|\rho_n\,\psi\rangle,
	\end{align*}	
 which implies the positivity of $\rho_\infty$, since $|\langle \psi| (\rho_\infty-\rho_n)\psi\rangle|\le\|\rho_\infty-\rho_n\|_1\langle\psi|\psi\rangle\to0$ as $n\to\infty$, and $\langle \psi|\rho_n\psi\rangle\ge 0$ for all $n$. Moreover 
 \begin{align*}
 	|\tr (\rho_n-\rho_\infty)|\le \|\rho_n-\rho_\infty\|_1\to0, ~~~n\to\infty,
 	\end{align*}
 which implies $\tr\rho_\infty=1$. We conclude that $W_{2,\LL}(\rho_n,\rho_\infty)\to W_{2,\LL}(\rho_\infty,\rho_\infty)=0$ by \Cref{prop2}.
\qed
\end{proof}

\subsection{Quantum functional and transportation cost inequalities}\label{2.4}
A primitive QMS $(\Lambda_t)_{t\ge 0}$ with unique invariant state $\sigma$ is said to satisfy:
\begin{itemize}
	\item[1.] a Poincar\'{e} inequality with constant $\lambda>0$,  if for all $f\in\cB_{sa}(\cH)$ with $\tr(\sigma f)=0$,
	\begin{align}\label{p}
		\tag{PI($\lambda$)}
		\lambda	\operatorname{Var}_\sigma(f)^2\le -\tr( \sigma f\LL(f)),
	\end{align}
where $\operatorname{Var}_\sigma(f):=\tr (\sigma f^2)-\tr(\sigma f)^2 $.
	\item[2.] a modified logarithmic Sobolev inequality with constant $\alpha_1>0$ if for all $\rho\in\cD_+(\cH)$,
	\begin{align}\label{ls1}\tag{$\operatorname{MLSI}(\alpha_1)$}
		2	\alpha_1 D(\rho\|\sigma)\le  \operatorname{EP}_\sigma(\rho)=\operatorname{I}_\sigma(\rho).
	\end{align}
\item[3.] a transportation-cost inequality of order $2$ with constant $c_2>0$ if for all $\rho\in \cD_+(\cH)$,
\begin{align}
	\tag{TC$_2$($c_2$)}\label{t2}
	W_{2,\LL}(\rho,\sigma)\le \sqrt{2c_2D(\rho\|\sigma)}.
\end{align}
\item[4.] MLSI+TC$_2$($c$) inequality with constant $c>0$ if for all $\rho\in \cD_+(\cH)$,
\begin{align}\tag{MLSI+TC$_2$($c$)}\label{lst2}
	W_{2,\LL}(\rho,\sigma)\le c\sqrt{\operatorname{I}_\sigma(\rho)}.
\end{align}	
\end{itemize}
That \reff{ls1} implies \reff{t2} for $c_2=\alpha_1^{-1}$ was proved in \cite{rouze2017concentration}. Hence, the following theorem easily follows:
\begin{proposition}
	Assume that $(\Lambda_t)_{t\ge 0}$ satisfies \reff{ls1} for some $\alpha_1>0$. Then it also satisfies \reff{lst2} with $c=\alpha_1^{-1}$.
\end{proposition}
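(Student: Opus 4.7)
The plan is a one-line chaining argument that composes the two inequalities already at our disposal. The hypothesis \reff{ls1} bounds the relative entropy in terms of the Fisher information, while the implication \reff{ls1}$\Rightarrow$\reff{t2} with $c_2=\alpha_1^{-1}$ (proved in \cite{rouze2017concentration} and recalled in the sentence immediately preceding the proposition) bounds $W_{2,\LL}(\rho,\sigma)$ by $\sqrt{2c_2D(\rho\|\sigma)}$. Substituting the first bound into the square root appearing in the second delivers \reff{lst2} with constant $c=\alpha_1^{-1}$.

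Concretely, for any $\rho\in\cD_+(\cH)$, I would write the two available estimates as
\begin{align*}
D(\rho\|\sigma)\le \frac{1}{2\alpha_1}\operatorname{I}_\sigma(\rho)\qquad\text{and}\qquad W_{2,\LL}(\rho,\sigma)\le \sqrt{\frac{2}{\alpha_1}\,D(\rho\|\sigma)},
\end{align*}
and then combine them via
\begin{align*}
W_{2,\LL}(\rho,\sigma)\le \sqrt{\frac{2}{\alpha_1}\cdot \frac{1}{2\alpha_1}\,\operatorname{I}_\sigma(\rho)}=\frac{1}{\alpha_1}\sqrt{\operatorname{I}_\sigma(\rho)},
\end{align*}
which is exactly \reff{lst2} with $c=\alpha_1^{-1}$.

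There is no genuine obstacle here: all the substantive work has been absorbed into the cited implication \reff{ls1}$\Rightarrow$\reff{t2} from \cite{rouze2017concentration}, after which only elementary algebra remains. The only point worth double-checking is that the constant produced by that reference is indeed the reciprocal $\alpha_1^{-1}$ (so that the two $\sqrt{\cdot}$'s combine to give exactly $\alpha_1^{-1}$ in front of $\sqrt{\operatorname{I}_\sigma(\rho)}$, rather than a worse prefactor), and that the two input inequalities are valid on the same class of states $\cD_+(\cH)$, which is the case in the definitions given in \Cref{2.4}.
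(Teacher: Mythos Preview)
Your proposal is correct and matches the paper's approach exactly: the paper does not even spell out a proof, merely remarking that the statement ``easily follows'' from the implication \reff{ls1}$\Rightarrow$\reff{t2} with $c_2=\alpha_1^{-1}$ established in \cite{rouze2017concentration}, which is precisely the chaining you wrote down.
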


\section{Quantum Ricci lower bound and $\kappa$-displacement convexity}\label{sec3}
In their celebrated paper \cite{[BE85]} (see also \cite{[B06]}), Bakry and Emery found an elegant criterion which implies the logarithmic Sobolev inequality in the setting of diffusions. In this case of Markov semigroups defined on a Riemannian manifold $\mathcal{M}$, this criterion, called the \textit{Ricci lower bound}, which is a special case of the Bakry-Emery condition, was shown later on to be equivalent to the so-called $\kappa$-displacement convexity of the relative entropy along geodesics in the Wasserstein space of probability measures on $\mathcal{M}$ in \cite{[RS05]}. This notion of $\kappa$-displacement convexity was extended to the framework of (necessarily non-diffusive) finite Markov chains by Maas in \cite{maas2011gradient}. Carlen and Maas generalized this notion to the quantum regime in \cite{Carlen20171810} and proved that it implies the modified logarithmic Sobolev inequality as well as the contractivity of the Wasserstein metric under the flow associated to the underlying quantum semigroup $(\Lambda_t)_{t\ge 0}$. In their previous article \cite{[CM14]}, the same authors had already studied this quantum extension of the notion of $\kappa$-displacement convexity in the particular case of the Fermionic Fokker-Planck equation. In this section, we provide a systematic analysis of the $\kappa$-displacement convexity, including a study of the geodesic equations on the Riemannian manifold $(\cD_+(\cH),g_{\LL})$.  
\subsection{Geodesic equations}
Similarly to Theorem 2.4 of \cite{[EM12]}, Carlen and Maas provided in \cite{Carlen20171810} the set of faithful states $\cD_+(\cH)$ with a Riemannian structure with associated Riemannian distance given by $W_{2,\LL}$. Therefore, the local existence and uniqueness of constant speed geodesics is garanteed by standard Riemannian geometry. We first recall that a constant speed geodesic $(\gamma(s),U(s))_{s\in [0,1]}$, where $U$ is related to $\gamma$ through \Cref{continuity2}, satisfies a Euler-Lagrange equation that we derive in \Cref{eulerlagrangeus}. This result is a direct generalization of Theorem 5.3 in \cite{[CM14]}. We start by recalling the abstract framework. Let $(\mathcal{V},\langle.,.\rangle)$ be a finite-dimensional real Hilbert space. Let $\mathcal{W}\subset \mathcal{V}$ be a subspace of $\mathcal{V}$, and $z\in \mathcal{V}\backslash \mathcal{W}$. Consider the affine subspace $\mathcal{W}_z:=z+\mathcal{W}$, and let $\mathcal{M}\subset \mathcal{W}_z$ be a relatively open subset. Let $D:\mathcal{M}\to \cB(\mathcal{W})$ be a smooth function such that $D(x)$ is self-adjoint and invertible for all $x\in \mathcal{M}$. We shall write $C(x):=D(x)^{-1}$. Consider the Lagrangian $L:\mathcal{W}\times \mathcal{M}\to \RR$ defined by $L(p,x)=\langle C(x) p,p\rangle$ and the associated minimization problem:
\begin{align*}
	\inf_{u(.)\in C^1([0,1],\mathcal{M})}\left(\int_0^1 L(u'(t),u(t))dt:~~ u(0)=u_0,~ u(1)=u_1\right),
	\end{align*}
	where $u_0, u_1\in \mathcal{M}$ are given boundary values. Then the Euler-Lagrange equations are equivalent to the following system of equations:
	\begin{align}\label{eulerlagrange}
		\left\{ 
		\begin{aligned}
			&u'(t)-D(u(t))v(t)=0,\\
			& v'(t)+\frac{1}{2}\langle \partial_x D(u(t))v(t),v(t)\rangle=0.
			\end{aligned}
			\right.
			\end{align}
Here, we apply this abstract result to the case where $\mathcal{V}=\cB_{sa}(\cH)$, with inner product $\langle .,. \rangle$ the usual Hilbert-Schmidt inner product, $\mathcal{W}=\{A\in \mathcal{V}:~ \tr(A)=0\}$, $z:=\mathbb{I}/\dim(\cH)$, and $\mathcal{M}=\cD_+(\cH)$.
Indeed any density operator $\rho$ can be written as $\rho=\mathbb{I}/\dim{\cH}+K$, for some self-adjoint and traceless operator $K$. For any $\rho\in\cD_+(\cH)$, we already proved in \Cref{lemmametric} that $D_{\vec{\omega}}(\rho): U\mapsto -\operatorname{div}([\rho]_{\vec{\omega}}\nabla U)$ is invertible and self-adjoint. Now we use the following identity (see \cite{[CM14]} p. 21):
	\begin{align}\label{eq3}
		\left. \frac{d}{dt} (\rho+t A)^{\alpha}\right|_{t=0}=\int_0^1 \int_0^\alpha \frac{\rho^{\alpha -\beta}}{(1-s)\mathbb{I}+s\rho} A \frac{\rho^\beta}{(1-s)\mathbb{I}+s \rho}d\beta ds
		\end{align}
		for any $0<\alpha <1$, $\rho\in \cD_+(\cH)$ and $A \in\mathcal{W}$. Hence for all $A,U\in \mathcal{W}$,
		\begin{align}
			\left. \frac{d}{dt}\right|_{t=0} \langle D_{\vec{\omega}}(\rho+tA)[U],U\rangle&=\left.\frac{d}{dt}\right|_{t=0} \sum_{j\in\cJ} c_j \langle \partial_j U,[\rho+tA]_{\omega_j} \partial_j U\rangle\nonumber\\
			&=\left.\frac{d}{dt}\right|_{t=0}\sum_{j\in\cJ} c_j \langle \partial_j U,\int_0^1\e^{\omega_j(1/2-\alpha)} (\rho+t A)^\alpha \partial_j U(\rho+ tA)^{1-\alpha}\rangle d\alpha\nonumber\\
			&=\langle A,\nabla U._\rho\nabla U\rangle,\label{eq11}
			\end{align}
		where for two vectors $\vec{V}_1, \vec{V}_2$ in $\bigoplus_j \cB(\cH)$,
	\begin{align*}
			\vec{V}_1._\rho \vec{V}_2:= \sum_{j\in\cJ}c_j \int_0^1\int_0^1\e^{\omega_j(1/2-\alpha)}(   \chi_j(\vec{V}_1,\vec{V}_2^*,\rho,\alpha,s)+\chi_j(\vec{V}_1^*,\vec{V}_2,\rho,1-\alpha,s)) d\alpha ds,
			\end{align*}
where
\begin{align*}
	\chi_j(\vec{V}_1,\vec{V}_2,\rho,\alpha,s):= \int_0^\alpha\frac{\rho^\beta}{(1-s)\mathbb{I}+s\rho}(V_1)_j~\rho^{1-\alpha}~(V_2)_j\frac{\rho^{\alpha-\beta}}{(1-s)\mathbb{I}+s\rho}d\beta.
		\end{align*}
			Therefore, in our context the Euler-Lagrange equations \reff{eulerlagrange} reduce to the following:
		\begin{theorem}\label{eulerlagrangeus}
			The geodesic equations in the Riemannian manifold $(\cD_+(\cH),W_{2,\LL})$ are given by
		\begin{align}\label{geod}
	\left\{\begin{aligned}		&\frac{d}{ds}\gamma(s)+\operatorname{div}([\gamma(s)]_{\vec{\omega}} \nabla U(s))=0,\\
			& \frac{dU(s)}{ds}+\frac{1}{2} \nabla U(s)._{\gamma(s)} \nabla U(s)=0.
			\end{aligned}
			\right.
			\end{align}
\end{theorem}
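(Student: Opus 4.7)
The plan is to specialize the abstract Euler--Lagrange system \reff{eulerlagrange} to the Riemannian manifold $(\cD_+(\cH),g_\LL)$ under the identifications already set up before the theorem: $\mathcal{V}=\cB_{sa}(\cH)$ with the Hilbert--Schmidt inner product, $\mathcal{W}=\{A\in\mathcal{V}:\tr(A)=0\}$, $z=\mathbb{I}/\dim(\cH)$, $\mathcal{M}=\cD_+(\cH)$, and $D(\rho)=D_{\vec{\omega}}(\rho)$. By \Cref{lemmametric}, $D_{\vec{\omega}}(\rho)$ is self-adjoint and invertible on $\mathcal{W}$, so $C(\rho):=D_{\vec{\omega}}(\rho)^{-1}$ is well-defined, and the Lagrangian $L(\dot\gamma,\gamma)=\langle C(\gamma)\dot\gamma,\dot\gamma\rangle$ is smooth in both arguments.

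I would first verify that $L$ reproduces the Wasserstein length element. The continuity equation \reff{continuity2} determines a unique $U(s)\in\mathcal{W}$ with $\dot\gamma(s)=-\operatorname{div}([\gamma(s)]_{\vec{\omega}}\nabla U(s))=D_{\vec{\omega}}(\gamma(s))U(s)$, so $U(s)=C(\gamma(s))\dot\gamma(s)$. Combined with \reff{innerj} and the fact, checked in the proof of \Cref{lemmametric}, that $-\operatorname{div}$ is the adjoint of $\nabla$ with respect to the weighted inner product, this gives
\begin{align*}
\|\dot\gamma(s)\|_{g_{\LL,\gamma(s)}}^{2}=\langle\nabla U(s),\nabla U(s)\rangle_{\LL,\gamma(s)}=\langle U(s),\dot\gamma(s)\rangle=\langle C(\gamma(s))\dot\gamma(s),\dot\gamma(s)\rangle=L(\dot\gamma(s),\gamma(s)).
\end{align*}
By \Cref{characwass}, the minimizers of $\int_0^1 L(\dot\gamma,\gamma)\,ds$ at fixed endpoints are precisely the constant-speed $W_{2,\LL}$-geodesics, so they obey \reff{eulerlagrange}.

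The first line of \reff{eulerlagrange} then reads $\dot\gamma(s)=D_{\vec{\omega}}(\gamma(s))U(s)=-\operatorname{div}([\gamma(s)]_{\vec{\omega}}\nabla U(s))$, which is the first equation of \reff{geod}. For the second line, I would identify $\langle\partial_{x}D(u)v,v\rangle$, understood as the Riesz representative of the linear form $A\mapsto\left.\frac{d}{dt}\right|_{t=0}\langle D_{\vec{\omega}}(\rho+tA)[U],U\rangle$, via the computation carried out in \reff{eq11}:
$$
\left.\frac{d}{dt}\right|_{t=0}\langle D_{\vec{\omega}}(\rho+tA)[U],U\rangle=\langle A,\nabla U._{\rho}\nabla U\rangle,\qquad A\in\mathcal{W}.
$$
This identifies $\langle\partial_{x}D_{\vec{\omega}}(\gamma(s))U(s),U(s)\rangle$ with the element $\nabla U(s)._{\gamma(s)}\nabla U(s)\in\mathcal{W}$, so the second equation of \reff{eulerlagrange} becomes $\dot U(s)+\tfrac{1}{2}\nabla U(s)._{\gamma(s)}\nabla U(s)=0$, as required.

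The only genuinely computational step is the directional derivative formula above; its content is already recorded in \reff{eq11} and relies on the integral representation of $[\rho]_\omega$ from \Cref{lemma} together with the Duhamel-type identity \reff{eq3} for fractional matrix powers. Modulo this, the derivation of \reff{geod} is a direct specialization of the abstract Euler--Lagrange system, and I do not anticipate any obstacle beyond confirming that $\dot\gamma(s)\in\mathcal{W}$ along any smooth curve in $\cD_+(\cH)$, which is immediate from $\tr\gamma(s)=1$.
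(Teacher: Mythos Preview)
Your proposal is correct and follows the same route as the paper: you specialize the abstract Euler--Lagrange system \reff{eulerlagrange} to $(\cD_+(\cH),g_\LL)$ via the identifications $D(\rho)=D_{\vec{\omega}}(\rho)$ and $v=U$, invoking \Cref{lemmametric} for invertibility and self-adjointness and the directional-derivative computation \reff{eq11} for the second equation. The only addition relative to the paper is your explicit verification that $L(\dot\gamma,\gamma)=\|\dot\gamma\|_{g_{\LL,\gamma}}^2$, which the paper leaves implicit in the setup.
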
		
\subsection{Different formulations of quantum $\kappa$-displacement convexity}
In analogy with \cite{[EM12]}, we say that a primitive quantum Markov semigroup $(\Lambda_t)_{t\ge 0}$ with associated invariant state $\sigma$ and generator $\LL$ of the form of \Cref{LLDBC} has \textit{Ricci curvature bounded from below} by a constant $\kappa\in\RR$ if the following inequality holds:
\begin{align}\tag{$\operatorname{Ric}(\LL)\ge \kappa$}\label{cd}
	\left.	\frac{d^2}{ds^2}\right|_{s=0}D(\gamma(s)\|\sigma)\ge \kappa \| \dot{\gamma}(0)\|_{g_{\LL,\rho}}^2,
\end{align}
where $(\gamma(s),U(s))_{s\in(-\eps,\eps)}$ is the unique solution to the geodesic equation \reff{geod}  such that $\cD_+(\cH)\ni\rho:=\gamma(0)$ and $U(0)=U$. We also refer to the above inequality as the \textit{quantum Ricci lower bound}.
\noindent
\Cref{eulerlagrangeus} is useful to derive an expression for the second derivative of the relative entropy $D(\gamma(s)\|\sigma)$ with respect to $s$, where $(\gamma(s))_{s\in (-\eps,\eps)}$ is a constant speed geodesic with associated tangent vector $\nabla U(s)$ at each $s$. We already know from the gradient flow equation \reff{gradflow} that 
\begin{align*}
	\frac{d}{ds} D(\gamma(s)\|\sigma)&=-g_{\LL,\gamma(s)}(\dot{\gamma}(s),\LL_*(\gamma(s)))\\
	&= \sum_{j\in\cJ} c_j \langle \partial_j U(s),[\gamma(s)]_{\omega_j}\partial_j(\log \gamma(s)-\log\sigma)\rangle\\
	&=\sum_{j\in\cJ} c_j \langle \partial_j U(s),[\gamma(s)]_{\omega_j} (\tilde{L}_j\log(\e^{-\omega_j/2}\gamma(s))-\log(\e^{\omega_j/2}\gamma(s))\tilde{L}_j)\rangle,
	\end{align*}
where the second line comes from Theorem 5.10 in \cite{Carlen20171810}, and the last identity comes from Lemma 5.9 of \cite{Carlen20171810}. Now by identity (5.6) of the same paper,
\begin{align*}
	[\gamma(s)]_{\omega_j} (\tilde{L}_j\log(\e^{-\omega_j/2}\gamma(s))-\log(\e^{\omega_j/2}\gamma(s))\tilde{L}_j)=\e^{-\omega_j/2}\tilde{L}_j\gamma(s)-\e^{\omega_j/2}\gamma(s)\tilde{L}_j,
	\end{align*}
 so that we finally get
 \begin{align*}
 	\frac{d}{ds}D(\gamma(s)\|\sigma)=\sum_{j\in\cJ} c_j \langle \partial_j U(s),\e^{-\omega_j/2}\tilde{L}_j\gamma(s)-\e^{\omega_j/2}\gamma(s)\tilde{L}_j\rangle.
 	\end{align*}
 	Differentiating once more, we get:
 \begin{align}
 \left.	\frac{d^2}{ds^2}D(\gamma(s)\|\sigma)\right|_{s=0}=\sum_{j\in\cJ} c_j ~&\left\{ \langle\partial_j \left.\frac{d}{ds}U(s)\right|_{s=0},\e^{-\omega_j/2}  \tilde{L}_j\rho-\e^{\omega_j/2}\rho \tilde{L}_j\rangle\right.\nonumber\\
 	&\left.+\langle \partial_j U,\e^{-\omega_j/2}\tilde{L}_j \dot{\gamma}(0)-\e^{\omega_j/2}\dot{\gamma}(0) \tilde{L}_j\rangle\right\}.\label{eq12}
 	\end{align}
We first take care of the second line of \Cref{eq12}. Using \Cref{eulerlagrangeus} as well as \Cref{LLDBC}, we find
 \begin{align*}
 	\langle \partial_j U,\e^{-\omega_j/2}\tilde{L}_j \dot{\gamma}(0)&-\e^{\omega_j/2}\dot{\gamma}(0) \tilde{L}_j\rangle\\
 	&=	-\langle \partial_j U,\e^{-\omega_j/2}\tilde{L}_j \operatorname{div}([\rho]_{\vec{\omega}} \nabla U)-\e^{\omega_j/2}\operatorname{div}([\rho]_{\vec{\omega}} \nabla U)\tilde{L}_j\rangle\\
 	&= -\langle \partial_j U,\e^{-\omega_j/2}\tilde{L}_j\sum_{k\in\cJ} c_k[[\rho]_{\omega_k} \partial_k U, \tilde{L}_k^*  ]-\e^{\omega_j/2} \sum_{k\in\cJ} c_k [[\rho]_{\omega_k}  \partial_k U,\tilde{L}_k^*  ]\tilde{L}_j\rangle\\
 	&=\sum_{k\in\cJ} c_k \left( \e^{-\omega_j/2}\langle \partial_k (\tilde{L}_j^* \partial_j U),[\rho]_{\omega_k}\partial_k U\rangle-\e^{\omega_j/2}\langle \partial_k ( \partial_j U \tilde{L}_j^*),[\rho]_{\omega_k}\partial_k U\rangle\right)\\
 	&= \sum_{k\in\cJ}  c_k \langle \partial_k \left( \e^{-\omega_j/2} \tilde{L}_j^* \partial_j U-\e^{\omega_j/2} \partial_j U \tilde{L}_j^*\right),[\rho]_{\omega_k}\partial_k U\rangle.
 	\end{align*}
 Hence by \reff{LLDBC},
 \begin{align}
\sum_{j\in\cJ} c_j 	\langle \partial_j U,\e^{-\omega_j/2}\tilde{L}_j \dot{\gamma}(0)-\e^{\omega_j/2}\dot{\gamma}(0) \tilde{L}_j\rangle&= -\sum_k c_k \langle \partial_k \LL(U),[\rho]_{\omega_k}\partial_k U\rangle\nonumber\\
&=-\langle \nabla \LL(U),\nabla U\rangle_{\LL,\rho}.\label{eq13}
 	\end{align}
 By \reff{geod}, the first line of \reff{eq12} is equal to  
 \begin{align}
 	\frac{1}{2} \sum_{j\in\cJ} c_j \langle \partial_j (\nabla U._{\rho}\nabla U),&\e^{\omega_j/2} \rho\tilde{L}_j-\e^{-\omega_j/2} \tilde{L}_j\rho\rangle\nonumber\\
 	&=\frac{1}{2}  \sum_{j\in\cJ} c_j \langle \nabla U._\rho\nabla U,[\tilde{L}_j^*,\rho \tilde{L}_j]\e^{\omega_j/2}-\e^{-\omega_j/2} [\tilde{L}_j^*,\tilde{L}_j\rho]\rangle\nonumber\\
 	&= \frac{1}{2}  \langle \nabla U._{\rho} \nabla U, \LL_*(\rho)\rangle\label{eq14},
 	\end{align}
 where we used that, replacing $\tilde{L}_j$ by $\tilde{L}_j^*$ so that $\omega_j\rightarrow -\omega_j$ and $c_j\to c_j$,
 \begin{align*}
 	\LL_*(\rho)=\sum_{j\in\cJ} c_j \left(  \e^{\omega_j/2}   [\tilde{L}_j^*\rho,\tilde{L}_j]+\e^{-\omega_j/2}[\tilde{L}_j,\rho \tilde{L}_j^*]    \right)= \sum_{j\in\cJ} c_j \left( \e^{-\omega_j/2}  [\tilde{L}_j\rho,\tilde{L}_j^*]+\e^{\omega_j/2} [\tilde{L}_j^*,\rho \tilde{L}_j]\right),
 \end{align*}
Hence, using \reff{eq13} and \reff{eq14}, \reff{eq12} reduces to
 	\begin{align}\label{secondderiv}
 	\left.\frac{d^2}{ds^2}D(\gamma(s)\|\sigma)\right|_{s=0}&=\frac{1}{2} \langle \nabla U._{\rho}\nabla U,\LL_*(\rho)\rangle	-\langle \nabla\LL(U),\nabla U\rangle_{\LL,\rho}.
 \end{align}	
 	One can compare this expression with the one derived in Proposition 4.3 of \cite{[EM12]}. To make this analogy more clear, we denote the quantity on the right hand side of \Cref{secondderiv} by $B(\rho,U)$ so that
 	\begin{align}\label{BB}
 	\left.	\frac{d^2}{ds^2}D(\gamma(s)\|\sigma)\right|_{s=0}= B(\rho,U).
 		\end{align}
 	The following lemma extends Lemma 4.6 of \cite{[EM12]} to the quantum regime, as well as part of the proof of Proposition 5.11 of \cite{[CM14]}, and is proven to be useful in what follows:
 	\begin{lemma}\label{lemmaimportant}
 		Let $(\gamma(s))_{s\in[0,1]}$ be a smooth curve in $\cD_+(\cH)$. For each $t\ge 0$, set $\gamma(s,t):=\Lambda_{*st}(\gamma(s))$, and let $(U(s,t))_{s\in[0,1]}$ be a smooth curve satisfying the continuity equation
 		\begin{align}
 			\partial_s(\gamma(s,t))+\operatorname{div} ([\gamma(s,t)]_{\vec{\omega}}\nabla U(s,t))=0,~~~~~~~s\in[0,1].\label{eq15}
 			\end{align}
 		Therefore,
 		\begin{align*}
 			\frac{1}{2}\partial_t\| \partial_s\gamma(s,t)\|^2_{g_{\LL,\gamma(s,t)}}+\partial_s D(\gamma(s,t)\|\sigma)=-s B(\gamma(s,t),U(s,t)).
 		\end{align*}	
 	\end{lemma}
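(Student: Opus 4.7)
The key observation is that for each fixed $s$, the curve $t\mapsto \gamma(s,t)=\Lambda_{*st}(\gamma(s))$ is the gradient flow of $D(\cdot\|\sigma)$ on the Riemannian manifold $(\cD_+(\cH),g_\LL)$, time-rescaled by a factor of $s$. Indeed, a direct differentiation yields $\partial_t\gamma(s,t)=s\LL_*(\gamma(s,t))$, and \reff{gradientflow}--\reff{gradflow} give $\LL_*(\rho)=-\operatorname{grad}_{g_{\LL,\rho}}D(\cdot\|\sigma)$, so $\partial_t\gamma=-s\,\operatorname{grad}_{g_\LL}D|_\gamma$. The identity to be proved is then an instance of the standard second-variation formula for the squared velocity along a gradient flow, which I propose to derive by invoking the Levi-Civita calculus on the Riemannian manifold $(\cD_+(\cH),g_\LL)$.

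Writing $\nabla^{\operatorname{LC}}$ for the Levi-Civita connection of $g_\LL$, metric compatibility combined with the torsion-free exchange $\nabla^{\operatorname{LC}}_t\partial_s\gamma=\nabla^{\operatorname{LC}}_s\partial_t\gamma$ (which is valid since $[\partial_s,\partial_t]=0$ for the coordinate fields) yields
\[
\frac{1}{2}\partial_t\|\partial_s\gamma\|^2_{g_{\LL,\gamma}}=g_{\LL,\gamma}\!\left(\nabla^{\operatorname{LC}}_s\partial_t\gamma,\,\partial_s\gamma\right).
\]
Substituting $\partial_t\gamma=s\LL_*(\gamma)$ and applying the Leibniz rule gives $\nabla^{\operatorname{LC}}_s\partial_t\gamma=\LL_*(\gamma)+s\,\nabla^{\operatorname{LC}}_{\partial_s\gamma}\LL_*$. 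Pairing with $\partial_s\gamma$ and using $\LL_*=-\operatorname{grad}_{g_\LL}D(\cdot\|\sigma)$, together with the defining identities of the gradient and of the Hessian, the two summands collapse to $-\partial_sD(\gamma(s,t)\|\sigma)$ and $-s\,\operatorname{Hess}D(\partial_s\gamma,\partial_s\gamma)$, respectively. This produces precisely the claimed identity once $\operatorname{Hess}D(\partial_s\gamma,\partial_s\gamma)$ is recognized as $B(\gamma(s,t),U(s,t))$.

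This last identification is essentially built into the paper's setup: the Riemannian Hessian of $D(\cdot\|\sigma)$ at $\rho$ on a tangent vector $V$ equals the second derivative of $D$ along any geodesic at $\rho$ tangent to $V$, and for $V=-\operatorname{div}([\rho]_{\vec{\omega}}\nabla U)$ this second derivative is exactly the quantity $B(\rho,U)$ computed in \reff{secondderiv}--\reff{BB}. The main obstacle is to make the abstract Levi-Civita manipulation rigorous in the quantum Wasserstein setting, in particular the identification of the Hessian with $B(\rho,U)$ without leaving the concrete framework of \cite{Carlen20171810}. If one wishes to avoid appealing to Riemannian machinery, the same result can be obtained by differentiating $\|\partial_s\gamma\|^2=\sum_{j\in\cJ}c_j\langle\partial_jU,[\gamma]_{\omega_j}\partial_jU\rangle$ in $t$ directly: the variation of $[\gamma]_{\omega_j}$ under $\partial_t\gamma=s\LL_*(\gamma)$ is controlled by the identity \reff{eq11}, while the $\partial_tU$ contribution is eliminated by cross-differentiating the continuity equation in $t$, in close parallel with the manipulations \reff{eq13}--\reff{eq14} that led to \reff{BB}.
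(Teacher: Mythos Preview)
Your argument is correct, but the route is genuinely different from the paper's. You invoke the abstract Riemannian machinery on $(\cD_+(\cH),g_\LL)$: metric compatibility and torsion-freeness of the Levi-Civita connection give $\frac{1}{2}\partial_t\|\partial_s\gamma\|^2=g_\LL(\nabla^{\operatorname{LC}}_s\partial_t\gamma,\partial_s\gamma)$, and then the identification $\partial_t\gamma=-s\operatorname{grad}D$ reduces everything to $-\partial_sD-s\operatorname{Hess}D(\partial_s\gamma,\partial_s\gamma)$, with the Hessian recognized as $B$ via \reff{BB}. The paper instead works entirely in coordinates: it first computes $\partial_sD(\gamma(s,t)\|\sigma)=-\langle\LL_*(\gamma(s,t)),U(s,t)\rangle$ directly from the continuity equation, then differentiates $\sum_j c_j\langle\partial_jU,[\gamma]_{\omega_j}\partial_jU\rangle$ in $t$, handling the $\partial_t[\gamma]_{\omega_j}$ term with \reff{eq11} and the $\partial_tU$ term by rewriting it as $\langle\partial_tU,\partial_s\gamma\rangle$ and then using $\langle U,\partial_s\gamma\rangle=\|\partial_s\gamma\|^2_{g_\LL}$ together with $\partial_t\gamma=s\LL_*(\gamma)$ to reshuffle derivatives. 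Your approach is cleaner and exposes the identity as a general fact about gradient flows on any Riemannian manifold; the paper's explicit calculation buys self-containment, since it never needs to invoke the existence or properties of $\nabla^{\operatorname{LC}}$ on this particular manifold (which the paper cites from \cite{Carlen20171810} but does not develop). Your closing paragraph, sketching the direct differentiation, is essentially the paper's proof.
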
	
 	\begin{proof}
 		Start by noticing that
 		\begin{align*}
 			\partial_s D(\gamma(s,t)\|\sigma)&=\partial_s \tr(\gamma(s,t) (\log\gamma(s,t)-\log\sigma))\\
 			&=\tr (\partial_s \gamma(s,t)  (\log\gamma(s,t)-  \log\sigma))\\
 			&=-\tr( (\log \gamma(s,t)  -\log\sigma) \operatorname{div} ([\gamma(s,t)]_{\vec{\omega}}   \nabla U(s,t)) )\\
 			&= -\langle \log \gamma(s,t) -\log\sigma ,\operatorname{div} ([\gamma(s,t)]_{\vec{\omega}}  \nabla U(s,t))\rangle\\
 			&=-\sum_{j\in\cJ} c_j \langle \log\gamma(s,t)-\log\sigma,[   [\gamma(s,t)]_{\omega_j}(\partial_j U(s,t)) ,\tilde{L}_j^*]\rangle\\
 			&=\sum_{j\in\cJ} c_j\langle \partial_j (\log\gamma(s,t)-\log\sigma),[\gamma(s,t)]_{\omega_j}(\partial_jU(s,t))\rangle\\
 			&=\sum_{j\in\cJ} c_j \langle [\gamma(s,t)]_{\omega_j}(\partial_j   (\log\gamma(s,t) -\log\sigma )),\partial_j U(s,t)\rangle\\
 			&=\sum_{j\in\cJ} c_j \langle \partial_j^*[\gamma(s,t)]_{\omega_j}(\partial_j (\log\gamma(s,t)-\log\sigma)),U(s,t)\rangle\\
 			&=-\langle \LL_*(\gamma(s,t)),U(s,t)\rangle,
 		\end{align*}
 		where in the third line we used \reff{eq15}, in last line we used Theorem 5.10 of \cite{Carlen20171810}, and in the second line we used that
 		\begin{align*}
 	\tr(\gamma(s,t) \partial_s \log\gamma(s,t)  )&=  \tr \left(   \gamma(s,t)  \partial_s \int_0^\infty  \frac{1}{(1+u)\mathbb{I}} -\frac{1}{\gamma(s,t)+u\mathbb{I}} ~du   \right)\\
 	&= \tr \gamma(s,t)  \int_0^\infty \frac{1}{\gamma(s,t) +u \mathbb{I}}\partial_s \gamma(s,t) \frac{1}{\gamma(s,t) +u \mathbb{I}}~du\\
 	&= \tr \partial_s \gamma(s,t)\\
 	&=0.
 		\end{align*}
 		 Moreover, by definition of the metric $g_{\LL}$ through \Cref{gammav,innerj},
 		\begin{align}
 			\frac{1}{2} \partial_t \|\partial_s \gamma(s,t)\|_{g_{\LL,\gamma(s,t)}}^2&=\frac{1}{2}\partial_t\sum_{j\in\cJ}c_j \langle \partial_j U(s,t),[\gamma(s,t)]_{\omega_j} \partial_j U(s,t)\rangle\nonumber\\
 			&=\sum_{j\in\cJ} c_j\Big( \langle \partial_t (\partial_j U(s,t)),[\gamma(s,t)]_{\omega_j}\partial_j U(s,t)\rangle\nonumber\\
 			&~~~~~~+\frac{1}{2}\langle \partial_j U(s,t),\partial_t ([\gamma(s,t)]_{\omega_j}  )\partial_j U(s,t)\rangle\Big).\label{eq16}
 		\end{align}
 		From \reff{eq11}, 
 		\begin{align}
 			\sum_{j\in\cJ} c_j \langle \partial_j U(s,t),\partial_t( [\gamma(s,t)]_{\omega_j})\partial_j U(s,t)\rangle&= \langle \partial_t \gamma(s,t),\nabla U(s,t)._{\gamma(s,t)} \nabla U(s,t)\rangle\nonumber\\
 			&=s \langle \LL_*(\gamma(s,t)),\nabla U(s,t)._{\gamma(s,t)} \nabla U(s,t)\rangle.\label{eq17}
 		\end{align}	
 	Moreover,
 	\begin{align}
 		\sum_{j\in\cJ} c_j \langle \partial_t (\partial_j U(s,t)), [\gamma(s,t)]_{\omega_j}\partial_j U(s,t)\rangle&=-\sum_{j\in\cJ} c_j \langle \partial_t U(s,t), [ [\gamma(s,t)]_{\omega_j} (\partial_j U(s,t)),\tilde{L}_j^*]\rangle\nonumber\\
 		&=		-\langle \partial_t U(s,t),\operatorname{div}([\gamma(s,t)]_{\vec{\omega}}(\nabla U(s,t))  )\rangle\nonumber\\
 		&= \langle \partial_t U(s,t),\partial_s \gamma(s,t)\rangle\nonumber\\
 		&= \partial_t (\langle U(s,t),\partial_s \gamma(s,t)\rangle ) -\langle U(s,t),\partial_s\partial_t \gamma(s,t)\rangle\nonumber\\
 		&= \partial_t\|\partial_s\gamma(s,t)\|_{g_{\LL,\gamma(s,t)}}^2-\langle U(s,t),\partial_s\partial_t \gamma(s,t)\rangle\nonumber\\
 		&=\partial_t\|\partial_s\gamma(s,t)\|_{g_{\LL,\gamma(s,t)}}^2- \langle U(s,t),\partial_s(s\LL_*(\gamma(s,t)))\rangle,\label{eq18}
 	\end{align}	
 where we used once again \reff{eq15} in the third and fifth lines above.
 Therefore, using \reff{eq17} and \reff{eq18}, the right hand side of \reff{eq16} reduces to
 \begin{align*}
 	\frac{1}{2}\partial_t \|  \partial_s\gamma(s,t)\|_{g_{\LL,\gamma(s,t)}}^2=\langle U(s,t),\partial_s(s\LL_*(\gamma(s,t)))\rangle-\frac{1}{2}s\langle \LL_*(\gamma(s,t)),\nabla U(s,t)._{\gamma(s,t)}\nabla U(s,t)\rangle.
 	\end{align*}
 Hence,
 \begin{align*}
 		\frac{1}{2}\partial_t \| & \partial_s\gamma(s,t)\|_{g_{\LL,\gamma(s,t)}}^2 +\partial_s D(\gamma(s,t)\|\sigma)\\
 		&=s \langle U(s,t),\LL_*\partial_s (\gamma(s,t))\rangle -\frac{1}{2}s \langle \LL_*\gamma(s,t).\nabla U(s,t)._{\gamma(s,t)}\nabla U(s,t)\rangle  \\
 		&= -s \langle \LL(U(s,t)),\operatorname{div}([\gamma(s,t)]_{\vec{\omega}} \nabla U(s,t)   )\rangle -\frac{1}{2} s \langle \LL_*\gamma(s,t).\nabla U(s,t)._{\gamma(s,t)}\nabla U(s,t)\rangle  \\
 		&=s \langle \nabla \LL(U(s,t)),\nabla U(s,t)\rangle_{\LL,\gamma(s,t)}-\frac{1}{2}s\langle \LL_*(\gamma(s,t)),\nabla U(s,t)._{\gamma(s,t)}\nabla U(s,t)\rangle\\
 		&=-s B(\gamma(s,t),U(s,t)),
 		\end{align*}
 	which is what needed to be proved.
 		\qed
 	\end{proof}
 
 \begin{theorem}\label{theofund}
 	Let $\LL$ be the generator of an ergodic QMS $(\Lambda_t)_{t\ge 0}$, with unique invariant state $\sigma$, of the form of \Cref{LLDBC}. Then, for $\kappa\in\RR$, the following holds: $(i)\Leftrightarrow (ii)\Rightarrow (iii)\Rightarrow (iv)\Rightarrow (v)$, where
 	\begin{itemize}
 		\item[(i)] $\operatorname{Ric}(\LL)\ge \kappa$
 		\item[(ii)] For all $\rho\in\cD_+(\cH)$, and $U\in\mathcal{W}$,
 		\begin{align*}
 			B(\rho, U)\ge \kappa \|\nabla U\|_{\LL,\rho}^2
 			\end{align*}
 		\item[(iii)] For all $\rho,\omega\in\cD_+(\cH)$ and all $t\ge 0$, writing $\rho_t:=\Lambda_{*t}(\rho)$:
 		\begin{align}
 			\frac{1}{2}\left.\frac{d}{dt}\right|_{t^+} \left(W_{2,\LL}(\rho_t,\omega)\right)^2+\frac{\kappa}{2} W_{2,\LL}(\rho_t,\omega)^2\le D(\omega\| \sigma)-D(\rho_t\|\sigma).\label{star}
 		\end{align}
 	\item[(iv)] \Cref{star} holds for any $\rho,\omega\in\cD(\cH)$.
 	\item[(v)] $\kappa$-displacement convexity of the relative entropy: for any constant speed geodesic $(\gamma(s))_{s\in [0,1]}$ in $\cD(\cH)$,
 	\begin{align*}
 		D(\gamma(s)\|\sigma)\le (1-s)D(\gamma(0)\|\sigma)+s D(\gamma(1)\|\sigma)-\frac{\kappa}{2}s(1-s) W_{2,\LL}(\gamma(0),\gamma(1))^2.
 	\end{align*}	
 	\end{itemize}	
 \end{theorem}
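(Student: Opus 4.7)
The equivalence $(i)\Leftrightarrow(ii)$ is immediate from identity (\ref{BB}) together with the definition $\|\dot\gamma(0)\|_{g_{\LL,\rho}}^2=\|\nabla U(0)\|_{\LL,\rho}^2$ built into the metric $g_{\LL}$ via the continuity equation (\ref{continuity2}). Direction $(ii)\Rightarrow(i)$ follows by evaluating $(ii)$ along a constant-speed geodesic, and the reverse uses the local existence guaranteed by Theorem \ref{eulerlagrangeus} to realize any pair $(\rho,U)$ with $U\in\mathcal{W}$ as initial data of a geodesic.

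For $(ii)\Rightarrow(iii)$, the crucial observation is that Lemma \ref{lemmaimportant} applied to a carefully chosen one-parameter deformation of a geodesic yields an EVI-type bound. Fix $t_0\ge 0$ and $\omega\in\cD_+(\cH)$, let $(\gamma(s),U(s))_{s\in[0,1]}$ be a constant-speed geodesic from $\omega$ to $\rho_{t_0}$, and set $\gamma(s,t):=\Lambda_{*st}(\gamma(s))$ and $E(t):=\int_0^1 \|\partial_s\gamma(s,t)\|_{g_{\LL,\gamma(s,t)}}^2\,ds$. Since $\gamma(0,t)=\omega$ and $\gamma(1,t)=\rho_{t_0+t}$, Lemma \ref{characwass} together with Cauchy--Schwarz gives $E(t)\ge W_{2,\LL}^2(\omega,\rho_{t_0+t})$, with equality at $t=0$ because $\gamma(\cdot,0)$ is a constant-speed geodesic; consequently $\tfrac{d}{dt}\big|_{t=0^+}W_{2,\LL}^2(\rho_{t_0+t},\omega)\le E'(0)$. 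Integrating the identity of Lemma \ref{lemmaimportant} in $s\in[0,1]$, exploiting that $\|\nabla U(s)\|_{\LL,\gamma(s)}^2=W_{2,\LL}^2(\omega,\rho_{t_0})$ is constant along the geodesic, and applying $(ii)$ pointwise, produces
\[
\tfrac{1}{2}E'(0)+D(\rho_{t_0}\|\sigma)-D(\omega\|\sigma)\le -\tfrac{\kappa}{2}W_{2,\LL}^2(\rho_{t_0},\omega),
\]
which combined with the derivative bound above is precisely $(iii)$ at $t=t_0$.

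The step $(iii)\Rightarrow(iv)$ is an approximation argument using Propositions \ref{prop2} and \ref{complete} (continuity of $W_{2,\LL}$) together with the fact that $D(\cdot\|\sigma)$ is continuous on $\cD(\cH)$ since $\sigma$ is faithful. Finally, $(iv)\Rightarrow(v)$ is the Daneri--Savar\'{e}-style implication that $\operatorname{EVI}_\kappa$ forces geodesic $\kappa$-convexity: apply $(iv)$ to the flow initialized at an interior point $\gamma(s)$ of the geodesic with targets $\gamma(0)$ and $\gamma(1)$, take the convex combination with weights $1-s$ and $s$, and invoke the geodesic variance identity $(1-s)W_{2,\LL}^2(\gamma(s),\gamma(0))+sW_{2,\LL}^2(\gamma(s),\gamma(1))=s(1-s)W_{2,\LL}^2(\gamma(0),\gamma(1))$. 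The main technical difficulty is controlling the one-sided derivative at $t=0$ of this convex combination of squared distances along the flow, which requires the local quadratic behaviour granted by the Riemannian structure of $(\cD_+(\cH),g_{\LL})$. A cleaner shortcut that bypasses this difficulty: since $(ii)$ and (\ref{BB}) already give $\tfrac{d^2}{ds^2}D(\gamma(s)\|\sigma)\ge \kappa\,W_{2,\LL}^2(\gamma(0),\gamma(1))$ along every constant-speed geodesic in $\cD_+(\cH)$, integrating twice yields $(v)$ directly on $\cD_+(\cH)$, and a density argument via Proposition \ref{prop2} extends it to $\cD(\cH)$.
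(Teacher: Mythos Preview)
Your overall architecture matches the paper's: $(i)\Leftrightarrow(ii)$ via \eqref{BB}, $(ii)\Rightarrow(iii)$ via Lemma~\ref{lemmaimportant}, and the last two implications via the Daneri--Savar\'e machinery and completeness. The differences worth noting are in the execution of $(ii)\Rightarrow(iii)$ and $(iii)\Rightarrow(iv)$.

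\medskip
\textbf{Gap in $(ii)\Rightarrow(iii)$.} You assume the existence of a constant-speed \emph{minimizing} geodesic in $\cD_+(\cH)$ joining $\omega$ to $\rho_{t_0}$. The paper only establishes \emph{local} existence of geodesics (standard Riemannian argument); nothing rules out that a minimizing geodesic between two given faithful states hits the boundary $\partial\cD_+(\cH)$, where the metric degenerates. This is precisely why the paper works instead with an $\eps$-approximate smooth path satisfying $\int_0^1\|\dot\gamma(s)\|^2_{g_{\LL,\gamma(s)}}\,ds\le W_{2,\LL}^2(\rho,\omega)+\eps$, then integrates the identity of Lemma~\ref{lemmaimportant} over $(s,t)\in[0,1]\times[0,h]$ with the weight $\e^{2\kappa st}$, uses a reparametrization bound (the $m(\kappa h)$ inequality, \`a la Lemma~5.1 of Daneri--Savar\'e) to compare $\int_0^1\e^{2\kappa sh}\|\partial_s\gamma(s,h)\|^2\,ds$ with $W_{2,\LL}^2(\rho_h,\omega)$, and finally takes $h\to 0$ and $\eps\to 0$. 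Your direct differentiation at $t=0$ is cleaner \emph{if} a minimizing geodesic exists, but as written the argument is incomplete without that existence result.

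\medskip
\textbf{On $(iii)\Rightarrow(iv)$.} Your ``approximation argument'' is too casual: the inequality \eqref{star} contains a one-sided time derivative, and passing such a derivative through a limit of states is not automatic from continuity of $W_{2,\LL}$ and $D(\cdot\|\sigma)$ alone. The paper defers this to Theorem~3.3 of Daneri--Savar\'e, whose input is exactly the completeness of $(\cD(\cH),W_{2,\LL})$ from Proposition~\ref{complete}; that abstract result handles the EVI-on-the-closure extension properly.

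\medskip
\textbf{On $(iv)\Rightarrow(v)$.} The paper simply cites Theorem~3.2 of Daneri--Savar\'e; your sketch is the content of that theorem. Your ``shortcut'' in the last sentence is correct but proves $(ii)\Rightarrow(v)$, not $(iv)\Rightarrow(v)$, and again relies on global geodesics staying in $\cD_+(\cH)$, so it inherits the same gap as above.
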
	
 \begin{proof}
 	The proof is inspired by the one of Theorem 4.5 of \cite{[EM12]}
That $(i)\Leftrightarrow (ii)$ follows from \Cref{BB}. We use \Cref{lemmaimportant} to show that $(ii)\Rightarrow(iii)$: Take a smooth path $(\gamma(s),U(s))_{s\in[0,1]}$ such that $\gamma(0)=\omega$, $\gamma(1)=\rho$ and 
\begin{align}\label{epseps}
	\int_0^1 \|\dot{\gamma}(s)\|_{g_{\LL,\gamma(s)}}^2ds\le W_{2,\LL}(\rho,\omega)^2+\eps.
\end{align}
With the notations of \Cref{lemmaimportant}, 
\begin{align*}
\frac{1}{2}\partial_t \left( \e^{2\kappa s t}\|\partial_s 
\gamma(s,t) \|_{g_{\LL,\gamma(s,t)}}^2    \right)+\partial_s \left(  \e^{2\kappa s t}D(\gamma(s,t)\|\sigma)  \right)\le 2\kappa t\e^{2\kappa s t }D(\gamma(s,t)\|\sigma).
\end{align*}	
Integrating with respect to $t\in [0,h]$, for some $h>0$, and $s\in[0,1]$,
\begin{align}
	\frac{1}{2}\int_0^1 \left( \e^{2\kappa s h}\|\partial_s \gamma(s,h) \|_{g_{\LL,\gamma(s,h)}}^2  -\|\partial_s \gamma(s,0)\|_{g_{\LL,\gamma(s,0)}}^2   \right)ds +&\int_0^h \left(   \e^{2\kappa t} D(\gamma(1,t)\|\sigma)- D(\gamma(0,t)\|\sigma)    \right)dt\nonumber\\
	&\le 2\kappa \int_0^1ds\int_o^h dt~ t\,\e^{2\kappa st} D(\gamma(s,t)\|\sigma).\label{eq2}
\end{align}	
The following inequality, for which a classical equivalent is given in the proof of Theorem 4.5 of \cite{[EM12]}, can be derived similarly to Lemma 5.1 of \cite{daneri2008eulerian}:
\begin{align}\label{eq21}
	m(\kappa h)W_{2,\LL}(\rho_h,\omega)^2\le \int_0^1 \e^{2\kappa s h} \|\partial_s \gamma(s,h) \|_{g_{\LL,\gamma(s,h)}}^2 ds,
	\end{align}
where $m(x):= x\e^x/\sinh(x)$. Indeed, define $f:s\mapsto \e^{2\kappa s h}$, and denote $L_{f}:=\int_0^1 \frac{1}{f(s)}ds$. Then, let $g:[0,1]\mapsto [0,1]$ be the smooth increasing map defined as $g(s)=L_{f}^{-1}\int_0^s\frac{1}{f(u)}du$, and denote its inverse $k$ such that $k'(g(s))=L_{f}f(s).$ Then define the reparametrized curve $(\gamma(k(r),h ),\,k'(r)\,U(k(r),h) )_{r\in[0,1]}$ which satisfies the continuity equation:
\begin{align*}
	\partial_r \gamma(k(r),h)&=k'(r) \partial_1 \gamma(k(r),h)\\
	&=-k'(r) \operatorname{div}([\gamma(k(r),h)]_{\vec{\omega}}\nabla U(k(r),h)  ),
\end{align*}	
where we used \Cref{eq15} in order to established the second line. This curve satisfies $\gamma(k(0),h)=\omega$ and $\gamma(k(1),h)=\rho_h$, so that
\begin{align*}
	W_{2,\LL}(\rho_h,\omega)^2&\le \int_0^1 \|\partial_r \gamma(k(r),h)\|^2_{g_{\LL,\gamma(k(r),h)}}dr\\
	&= \int_0^1 k'(r)^2 \|\nabla U(k(r),h)\|^2_{\LL,\gamma(k(r),h)}dr\\
	&= \int_0^1 k'(g(s)) \|\nabla U(s,h)\|_{\LL,\gamma(s,h)}^2ds\\
	&= L_f \int_0^1 f(s)\|\partial_s \gamma(s,h)\|_{g_{\LL,\gamma(s,h)}}^2ds,
\end{align*}	
which directly leads to \reff{eq21}. This inequality, together with \reff{epseps}, implies
\begin{align*}
	\frac{m(h\kappa)}{2}&W_{2,\LL}(\rho_h,\omega)^2-\frac{1}{2}W_{2,\LL}(\rho,\omega)^2-\eps+\int_0^h\e^{2\kappa t} dt~D(\rho_h\|\sigma)-hD(\omega\|\sigma)\\
	&\le \frac{1}{2}\int_0^1
	 \e^{2\kappa s h}\|\partial_s \gamma(s,t)\|_{g_{\LL,\gamma(s,t)}}^2 ds-\frac{1}{2}\int_0^1 \|\dot{\gamma}(s)\|_{g_{\LL,\gamma(s)}}^2ds +\int_0^h \e^{2\kappa t} D(\rho_t\|\sigma)~dt-hD(\omega\|\sigma)\\
	 &\le  2\kappa \int_0^1ds\int_0^h dt ~t\e^{2\kappa st }D(\gamma(s,t)\|\sigma).
\end{align*}	
where, in the first inequality, we also used the monotonicity of the relative entropy so that $D(\rho_h\|\sigma)=D(\rho_h\|\Lambda_{*h}\sigma)\le D(\rho_t\|\Lambda_{*t}\sigma)=D(\rho_t\|\sigma)$, and in the second one that for all $t>0$, $\gamma(1,t)=\rho_t$, $\gamma(0,t)=\omega$, as well as \reff{eq2}. Since for all $s\in[0,1]$, $t\mapsto D(\gamma(s,t)\|\sigma)$ is bounded, 
\begin{align*}
	\lim_{h\to 0} \frac{1}{h}\int_0^1\int_0^h ~t\e^{2\kappa st} D(\gamma(s,t)\|\sigma)dt~ds=0.
	\end{align*}
Moreover, 
\begin{align*}
	\lim_{h\to 0}\frac{1}{h}\left(  \int_0^h \e^{2\kappa t}dt~D(\rho_h\|\sigma)-hD(\omega\|\sigma)\right)=D(\rho\|\sigma)-D(\omega\|\sigma)
	\end{align*}
Since $\eps>0$ was arbitrary, we arrive at
\begin{align*}
	\left.\frac{d}{dh}\right|_{h=0^+}\left( \frac{m(\kappa h)}{2}W_{2,\LL}(\rho_h,\omega)^2\right) +D(\rho\|\sigma)-D(\omega\|\sigma)\le 0.
	\end{align*}
	The result for $t=0$ follows from the fact that the first term in the left hand side above is equal to $\frac{\kappa}{2} W_{2,\LL}(\rho_h,\omega)^2+\frac{1}{2}\left.\frac{d}{dh}\right|_{h=0^+}W_{2,\LL}(\rho_h,\omega)^2$. The case $t\ge 0$ directly follows from the case $t=0$.\\\\
	$(iii)\Rightarrow(iv)$ follows from Theorem 3.3 of \cite{daneri2008eulerian} together with the fact that $(\cD(\cH),W_{2,\LL})$ is complete (cf. \Cref{complete}).\\\\
	$(iv)\Rightarrow(v)$ follows directly from Theorem 3.2 of \cite{daneri2008eulerian}.
\qed 	
 \end{proof}

 		\section{A quantum HWI inequality}\label{333}
 		In \cite{Carlen20171810} it was proved that, in the case when $\kappa>0$, \ref{cd} implies \ref{ls1} for $\kappa=\alpha_1$. This is for example the case of the classical and quantum Ornstein-Uhlenbeck processes. Here, we study the case of $\kappa\in\RR$. In \cite{[EM12]}, the authors proved that, in the classical discrete framework, \ref{cd} for $\kappa\in\RR$ implies an HWI-like inequality (see Theorem 7.3). Here, we provide a quantum generalization of their result. 
 		\begin{theorem}\label{theHWI}
 			Assume that $\operatorname{Ric(\LL)}\ge \kappa$, for some $\kappa\in\RR$. Then $\LL$ satisfies the following inequality
 			\begin{align}\tag{$\operatorname{HWI}(\kappa)$}\label{HWI}
 			\forall \rho\in\cD_+(\cH),~~~~~~~	D(\rho\|\sigma)\le W_{2,\LL}(\rho,\sigma) \sqrt{\operatorname{I}_\sigma(\rho)} -\frac{\kappa}{2}W_{2,\LL}(\rho,\sigma)^2.
 				\end{align}
 			
 		\end{theorem}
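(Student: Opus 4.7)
The strategy is to lift the classical Otto--Villani HWI argument to the Riemannian manifold $(\cD_+(\cH),g_\LL)$. Concretely: feed a constant-speed Wasserstein geodesic joining $\rho$ to $\sigma$ into the $\kappa$-displacement convexity of item (v) of \Cref{theofund}, convert the resulting convexity inequality into a lower bound on the one-sided derivative of $s\mapsto D(\gamma(s)\|\sigma)$ at the $\rho$-endpoint, and then dominate that derivative via a Cauchy--Schwarz estimate in the Riemannian inner product $g_\LL$. The first Cauchy--Schwarz factor will be the $g_\LL$-speed of the geodesic, which equals $W_{2,\LL}(\rho,\sigma)$ by \Cref{characwass}; the second will be the $g_\LL$-norm of $\LL_*(\rho)$, which equals $\sqrt{I_\sigma(\rho)}$ by the gradient-flow identity \reff{gradflow}.

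In more detail, set $W:=W_{2,\LL}(\rho,\sigma)$ and pick a constant-speed geodesic $(\gamma(s))_{s\in[0,1]}$ in $\cD(\cH)$ with $\gamma(0)=\rho$ and $\gamma(1)=\sigma$; such a $\gamma$ exists by the Carlen--Maas Riemannian structure together with \Cref{complete}, and satisfies $\|\dot\gamma(s)\|_{g_{\LL,\gamma(s)}}=W$. Since $D(\sigma\|\sigma)=0$, item (v) of \Cref{theofund} specialises to
\begin{align*}
D(\gamma(s)\|\sigma)\;\le\;(1-s)\,D(\rho\|\sigma)\;-\;\frac{\kappa}{2}\,s(1-s)\,W^2,\qquad s\in[0,1].
\end{align*}
Subtracting $D(\rho\|\sigma)$, dividing by $s>0$ and letting $s\downarrow 0$ yields
\begin{align*}
-\left.\frac{d}{ds}D(\gamma(s)\|\sigma)\right|_{s=0^+}\;\ge\; D(\rho\|\sigma)+\frac{\kappa}{2}W^2.
\end{align*}
On the other hand, repeating the ``$\partial_s D$'' computation of the proof of \Cref{lemmaimportant} at $t=0$ and writing $\dot\gamma(0)=-\operatorname{div}([\rho]_{\vec\omega}\nabla U_0)$ through the continuity equation \reff{continuity2} gives
\begin{align*}
\left.\frac{d}{ds}D(\gamma(s)\|\sigma)\right|_{s=0}=-\langle\LL_*(\rho),U_0\rangle=-g_{\LL,\rho}\bigl(\LL_*(\rho),\dot\gamma(0)\bigr).
\end{align*}
A Cauchy--Schwarz inequality in the Riemannian inner product $g_{\LL,\rho}$, combined with $\|\dot\gamma(0)\|_{g_{\LL,\rho}}=W$ and $\|\LL_*(\rho)\|_{g_{\LL,\rho}}^2=I_\sigma(\rho)$ (read off from \reff{gradflow} and the definition of $I_\sigma$), bounds the absolute value of this derivative by $W\sqrt{I_\sigma(\rho)}$. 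Chaining with the previous display produces exactly \reff{HWI}.

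\textbf{Main obstacle.} The bulk of the argument is a short and essentially mechanical manipulation; the delicate point is the very first step, namely producing a \emph{constant-speed minimising geodesic} between $\rho$ and $\sigma$ in $\cD_+(\cH)$ that is smooth enough to differentiate $D(\gamma(s)\|\sigma)$ one-sidedly at $s=0$. Since both endpoints $\rho,\sigma$ are strictly positive, the geodesic should stay in $\cD_+(\cH)$ at least near $s=0$, where the Carlen--Maas Riemannian framework supplies local existence, uniqueness and smoothness via the Euler--Lagrange system \reff{geod}. If global regularity is unavailable, one can instead approximate $\rho$ by a sequence $\rho_n\in\cD_+(\cH)$, apply the argument to each $\rho_n$, and pass to the limit using continuity of $W_{2,\LL}$ (\Cref{prop2}), lower semicontinuity of $D(\cdot\|\sigma)$, and continuity of $I_\sigma$ on $\cD_+(\cH)$.
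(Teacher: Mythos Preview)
Your argument is correct and is precisely the classical Otto--Villani route: use $\kappa$-displacement convexity along a minimising geodesic from $\rho$ to $\sigma$, differentiate at the $\rho$-endpoint, and control the derivative via the gradient-flow identity plus Cauchy--Schwarz in $g_{\LL,\rho}$. The identifications $\|\dot\gamma(0)\|_{g_{\LL,\rho}}=W_{2,\LL}(\rho,\sigma)$ and $\|\LL_*(\rho)\|_{g_{\LL,\rho}}^2=I_\sigma(\rho)$ are exactly as the paper sets them up, and your derivative formula $\partial_s D(\gamma(s)\|\sigma)|_{s=0}=-g_{\LL,\rho}(\LL_*(\rho),\dot\gamma(0))$ is the specialisation of the computation in the proof of \Cref{lemmaimportant}.

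The paper, however, takes a different route. Rather than item (v) of \Cref{theofund}, it uses item (iii) (the evolutional variational inequality, EVI) with $\omega=\sigma$, which directly gives
\[
D(\rho\|\sigma)\le -\tfrac12\,\partial_t\big|_{t=0^+}W_{2,\LL}(\rho_t,\sigma)^2-\tfrac{\kappa}{2}W_{2,\LL}(\rho,\sigma)^2,
\]
and then bounds the first term on the right by $W_{2,\LL}(\rho,\sigma)\sqrt{I_\sigma(\rho)}$ via the triangle inequality for $W_{2,\LL}$ together with the metric-derivative estimate $\limsup_{s\to 0^+}s^{-1}W_{2,\LL}(\rho,\rho_s)\le \sqrt{I_\sigma(\rho)}$ (Lemma~7 of \cite{rouze2017concentration}). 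The payoff of the EVI route is exactly that it \emph{avoids} the obstacle you flag: no minimising geodesic needs to be produced or differentiated, since the only curve appearing is the semigroup flow $t\mapsto\rho_t$, which is smooth in $\cD_+(\cH)$ for $\rho\in\cD_+(\cH)$. Your approach is conceptually more transparent (it is the original Otto--Villani mechanism) but leans on the existence and endpoint-regularity of a minimising geodesic in $(\cD_+(\cH),g_\LL)$; the paper's EVI approach trades that for an a priori stronger input (item (iii) rather than (v)) but buys a cleaner analytic execution.
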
	
\begin{proof}
	By \Cref{theofund}, for any $\rho,\omega\in\cD_+(\cH)$
	\begin{align*}
		\frac{1}{2}\left.\frac{d}{dt}\right|_{t=0^+}W_{2,\LL}(\rho_t,\omega)^2+\frac{\kappa}{2}W_{2,\LL}(\rho,\omega)^2\le D(\omega\|\sigma)-D(\rho\|\sigma).
		\end{align*}
	Taking $\omega:=\sigma$, this implies that
	\begin{align}\label{eq1}
		D(\rho\|\sigma)\le -\frac{1}{2}\left. \frac{d}{dt}\right|_{t=0^+} W_{2,\LL}(\rho_t\|\sigma)^2-\frac{\kappa}{2}W_{2,\LL}(\rho,\sigma)^2.
		\end{align}
	Then, 
	\begin{align*}
		-\frac{1}{2}\left. \frac{d}{dt}\right|_{t=0^+} W_{2,\LL}(\rho_t,\sigma)^2&=\liminf_{s\to 0^+}\frac{1}{2s}(W_{2,\LL}(\rho,\sigma)^2 -W_{2,\LL}(\rho_s,\sigma)^2 )\\
		&\le \limsup_{s\to 0^+}\frac{1}{2s}(W_{2,\LL}(\rho,\rho_s)^2 +2W_{2,\LL}(\rho,\rho_s)W_{2,\LL}(\rho_s,\sigma)  )\\
		&\le \limsup_{s\to 0^+}\frac{1}{2s}W_{2,\LL}(\rho,\rho_s)^2+W_{2,\LL}(\rho,\sigma)\sqrt{\operatorname{I}_\sigma(\rho)}\\
		&=W_{2,\LL}(\rho,\sigma)\sqrt{\operatorname{I}_\sigma(\rho)}.
	\end{align*}
	where the second inequality follows from Lemma 7 of \cite{rouze2017concentration}. The result follows from inserting this back into \reff{eq1}.
	\qed
\end{proof}	
\bigskip
	\noindent
	In the case when $\kappa>0$, we recover the result of $\cite{Carlen20171810}:$
	\begin{corollary}[Quantum Bakry-\'{E}mery theorem]\label{theo5}
		Assume that $\operatorname{Ric}(\LL)\ge \kappa$, for some $\kappa>0$. Then $\LL$ satisfies $\operatorname{MLSI}(\kappa)$.
	\end{corollary}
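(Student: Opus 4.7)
The plan is to deduce $\operatorname{MLSI}(\kappa)$ from $\operatorname{HWI}(\kappa)$ by a single elementary optimization, treating the HWI inequality as a quadratic upper bound on $D(\rho\|\sigma)$ in terms of the distance $W_{2,\LL}(\rho,\sigma)$, and then either maximizing in $W$ or applying Young's inequality.

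More concretely, fix an arbitrary $\rho\in\cD_+(\cH)$ and set $D:=D(\rho\|\sigma)$, $W:=W_{2,\LL}(\rho,\sigma)$ and $I:=\operatorname{I}_\sigma(\rho)$. By \Cref{theHWI} (which is available since $\operatorname{Ric}(\LL)\ge\kappa$ is assumed), we have
\begin{equation*}
D \le W\sqrt{I}-\frac{\kappa}{2}W^{2}.
\end{equation*}
The right-hand side is a concave quadratic in the single real variable $W\ge 0$, and with $\kappa>0$ it attains its global maximum at $W^{\star}=\sqrt{I}/\kappa$ with value $I/(2\kappa)$. Equivalently, using the elementary inequality $ab\le \frac{a^{2}}{2\kappa}+\frac{\kappa b^{2}}{2}$ with $a=\sqrt{I}$ and $b=W$, we obtain $W\sqrt{I}\le \frac{I}{2\kappa}+\frac{\kappa}{2}W^{2}$, so the $\frac{\kappa}{2}W^{2}$ terms cancel and we arrive at $D\le I/(2\kappa)$, which is exactly $\operatorname{MLSI}(\kappa)$ with constant $\alpha_{1}=\kappa$.

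There is essentially no obstacle to surmount: the hard analytical work has already been done in the proof of \Cref{theHWI}. The only point worth double-checking is that the quantities entering the HWI inequality are all finite for every $\rho\in\cD_+(\cH)$, so that the optimization in $W$ is legitimate; this is clear since $W_{2,\LL}(\rho,\sigma)$ is finite on $\cD_+(\cH)$ by the construction recalled in Section~2, and $\operatorname{I}_\sigma(\rho)$ is finite by definition for $\rho\in\cD_+(\cH)$. Hence the corollary follows immediately, recovering the result of \cite{Carlen20171810}.
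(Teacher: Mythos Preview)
Your proof is correct and essentially identical to the paper's: both invoke \Cref{theHWI} to obtain $\operatorname{HWI}(\kappa)$ and then apply Young's inequality $xy\le cx^{2}+\frac{1}{4c}y^{2}$ with $x=W_{2,\LL}(\rho,\sigma)$, $y=\sqrt{\operatorname{I}_\sigma(\rho)}$ and $c=\kappa/2$ to cancel the $\frac{\kappa}{2}W^{2}$ term and conclude $D(\rho\|\sigma)\le \operatorname{I}_\sigma(\rho)/(2\kappa)$.
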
	
\begin{proof}
By \Cref{theHWI}, $\LL$ satisfies \ref{HWI}. $\operatorname{MLSI}(\kappa)$ follows from an application of Young's inequality:
\begin{align}\label{Young}
	xy\le cx^2+\frac{1}{4c}y^2,~~~~~\forall x,y\in\RR, c>0,
\end{align}	
	in which we set $x=W_{2,\LL}(\rho,\sigma)$, $y=\sqrt{\operatorname{I}_\sigma(\rho)}$, and $c=\frac{\kappa}{2}$.
	\qed
\end{proof}	
\bigskip
\noindent
In the case $\operatorname{Ric}(\LL)\ge \kappa$ for $\kappa\in\RR$, HWI$(\kappa)$ still implies a modified log-Sobolev inequality under the further condition that a transportation cost inequality holds. This is a direct quantum generalization of Theorem 7.8 of \cite{[EM12]} (see also Corollary 3.1 of \cite{[OV00]})
\begin{corollary} \label{cor1}  Assume that $\operatorname{Ric}(\LL)\ge \kappa$, $\kappa\in\RR$, and that $\operatorname{TC}_2(c_2)$ holds with $c_2^{-1}\ge \max( 0,-\kappa)$. then $\operatorname{MLSI}(\alpha_1)$ holds for
	\begin{align*}
	\alpha_1=\max\left[    \frac{1}{4c_2}\left( 1+c_2\kappa\right)^2 ,\,\kappa \right]
	\end{align*}	
\end{corollary}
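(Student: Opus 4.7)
The plan is to combine HWI$(\kappa)$ from Theorem~\ref{theHWI} with TC$_2(c_2)$ and eliminate the Wasserstein distance by Young's inequality with a free parameter. Write $W := W_{2,\LL}(\rho,\sigma)$, $D := D(\rho\|\sigma)$ and $I := I_\sigma(\rho)$; the HWI inequality reads
\[
D \le W\sqrt{I} - \frac{\kappa}{2}W^2.
\]

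First, I would apply Young's inequality $W\sqrt{I} \le \frac{\mu}{2}W^2 + \frac{1}{2\mu}I$ with a parameter $\mu>0$ to obtain
\[
D \le \frac{\mu - \kappa}{2}\,W^2 + \frac{I}{2\mu}.
\]
Next, imposing $\mu \ge \kappa$ so that the coefficient of $W^2$ is non-negative, I would substitute the TC$_2(c_2)$ bound $W^2 \le 2c_2 D$ (valid in this direction precisely because the coefficient is non-negative), producing
\[
\bigl(1 - c_2(\mu - \kappa)\bigr)\, D \le \frac{I}{2\mu}.
\]
I would then optimize $\mu \mapsto \mu\bigl(1 - c_2(\mu - \kappa)\bigr)$ over $\mu \ge \kappa$. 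The unconstrained maximizer is $\mu^\ast = \frac{1 + c_2\kappa}{2c_2}$, attaining the value $\frac{(1+c_2\kappa)^2}{4c_2}$, which is the desired log-Sobolev constant.

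The admissibility check is the crucial step. The condition $\mu^\ast \ge \kappa$ reduces to $c_2\kappa \le 1$, and positivity of the left-hand coefficient $1 - c_2(\mu^\ast - \kappa) = \frac{1 + c_2\kappa}{2}$ is guaranteed by the standing hypothesis $c_2^{-1} \ge \max(0,-\kappa)$, equivalently $1 + c_2\kappa \ge 0$, so the division by this coefficient is safe. If $c_2\kappa > 1$ (which forces $\kappa > 0$), then $\mu^\ast$ falls below $\kappa$ and the Young–TC$_2$ route degrades at the boundary $\mu = \kappa$ to yield only MLSI$(\kappa)$; but in that regime $\kappa > 0$ and Corollary~\ref{theo5} already supplies MLSI$(\kappa)$ directly from HWI$(\kappa)$. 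Taking the larger of the two constants then gives $\alpha_1 = \max\!\left[\frac{(1+c_2\kappa)^2}{4c_2},\,\kappa\right]$, and the proof is done.

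The main difficulty is not analytic but one of sign-bookkeeping: the Young step, the direction in which TC$_2$ is inserted (an upper bound on $W^2$ can only be used where $W^2$ appears with a non-negative coefficient), and the hypothesis $c_2^{-1} \ge -\kappa$ must all line up so that the one-parameter optimization produces the announced value. Once these are in place the argument reduces to a one-line calculus exercise.
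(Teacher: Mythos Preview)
Your approach---HWI$(\kappa)$, Young's inequality with a free parameter $\mu$, then TC$_2(c_2)$, then optimize in $\mu$---is exactly the Otto--Villani argument the paper invokes, so the method matches.

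There is, however, a logical slip in your final paragraph. When $c_2\kappa>1$ you have only established MLSI$(\kappa)$ (both the boundary $\mu=\kappa$ and Corollary~\ref{theo5} give the same constant $\kappa$), yet the first entry of the max satisfies
\[
\frac{(1+c_2\kappa)^2}{4c_2}-\kappa=\frac{(1-c_2\kappa)^2}{4c_2}\ge 0,
\]
so the max in the statement is \emph{always} the first entry, and in the regime $c_2\kappa>1$ it is strictly larger than what you have proved. ``Taking the larger of the two constants'' therefore does not deliver the announced $\alpha_1$ there; you can only take the max of constants you have actually obtained, and both of yours equal $\kappa$. This is less a flaw in your method than an imprecision in the corollary as stated: the Young+TC$_2$ optimization genuinely produces $(1+c_2\kappa)^2/(4c_2)$ only when the unconstrained optimizer $\mu^\ast=(1+c_2\kappa)/(2c_2)$ is admissible, i.e.\ when $c_2\kappa\le 1$; outside that range one must be content with MLSI$(\kappa)$. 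The paper's one-line reference to \cite{[OV00]} does not address this either.
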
	
\begin{proof}
The proof is identical to the one of Corollary 3.1 of \cite{[OV00]}.
	\qed
\end{proof}	
\bigskip
\noindent
Similarly, we can show that $\operatorname{Ric}(\LL)\ge \kappa$ for $\kappa\in\RR$ implies MLSI as long as MLSI+TC$_2$ holds.
\begin{corollary}
	Assume that $\operatorname{Ric}(\LL)\ge \kappa$, $\kappa\in\RR$, and that the inequality $\operatorname{MLSI}+\operatorname{TC}_2(c)$ (defined in \Cref{2.4}) holds with $c^{-1}\ge \max (\kappa,0)$, then $\operatorname{MLSI}(\alpha_1)$ holds, with
	\begin{align*}
		\alpha_1=\frac{1}{c\,(2-\kappa c)}.
	\end{align*}	
\end{corollary}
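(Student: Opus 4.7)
The plan is to combine the quantum HWI inequality of Theorem~\ref{theHWI} with the hypothesis $\operatorname{MLSI}+\operatorname{TC}_2(c)$ through a suitably tuned Young's inequality, mirroring the classical argument of \cite{[OV00]}. Writing $W := W_{2,\LL}(\rho,\sigma)$ and $I := \operatorname{I}_\sigma(\rho)$ for brevity, the two standing bounds read
\begin{align*}
D(\rho\|\sigma) &\le W\sqrt{I} - \tfrac{\kappa}{2} W^2,\\
W^2 &\le c^2 I,
\end{align*}
for every $\rho\in\cD_+(\cH)$. The goal is to extract a bound of the form $2\alpha_1 D(\rho\|\sigma)\le I$ with $\alpha_1=\frac{1}{c(2-\kappa c)}$.

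The first step is to apply Young's inequality~\eqref{Young} to the cross term $W\sqrt{I}$ with a parameter $\lambda>0$ to be chosen, giving $W\sqrt{I}\le \lambda W^2+\tfrac{1}{4\lambda}I$. Substituting into \ref{HWI} yields
\begin{equation*}
D(\rho\|\sigma)\le \bigl(\lambda-\tfrac{\kappa}{2}\bigr) W^2+\tfrac{1}{4\lambda}I.
\end{equation*}
The second step is to pick $\lambda$ so that the coefficient of $W^2$ is non-negative, since only then can I replace $W^2$ by its upper bound $c^2I$ coming from $\operatorname{MLSI}+\operatorname{TC}_2(c)$. The constraint $\lambda\ge \kappa/2$ is automatic when $\kappa\le 0$, and when $\kappa>0$ it simply requires $\lambda\ge \kappa/2$.

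The third step is to optimize. Minimizing $(\lambda-\kappa/2)\,c^2+\tfrac{1}{4\lambda}$ over admissible $\lambda$ gives the unconstrained critical point $\lambda^\ast=\tfrac{1}{2c}$, and the hypothesis $c^{-1}\ge\max(\kappa,0)$ is exactly the statement that $\lambda^\ast$ lies in the admissible range $[\max(\kappa/2,0),\infty)$. Plugging in $\lambda=\tfrac{1}{2c}$ and $W^2\le c^2 I$ then collapses the bound to
\begin{equation*}
D(\rho\|\sigma)\le \Bigl(\tfrac{1-c\kappa}{2c}\cdot c^2+\tfrac{c}{2}\Bigr)\,I=\tfrac{c(2-\kappa c)}{2}\,\operatorname{I}_\sigma(\rho),
\end{equation*}
which is exactly $\operatorname{MLSI}(\alpha_1)$ with $\alpha_1=\tfrac{1}{c(2-\kappa c)}$.

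There is no real conceptual obstacle, since all the heavy lifting (geodesic equations, quantum HWI, completeness of $(\cD(\cH),W_{2,\LL})$) is already packaged in Theorem~\ref{theHWI} and the definitions of \Cref{2.4}, leaving only a scalar two-parameter optimization. The one point deserving care is verifying that the assumption $c^{-1}\ge\max(\kappa,0)$ is used precisely where it appears: it is exactly the window in which the optimal Young parameter $\lambda^\ast=1/(2c)$ keeps the $W^2$ coefficient non-negative, so that the transportation-cost bound $W^2\le c^2 I$ can be invoked as an upper bound rather than reversed.
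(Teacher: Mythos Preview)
Your proposal is correct and follows exactly the route the paper intends: the paper's own proof simply reads ``See Corollary 3.2 of \cite{[OV00]}'', and what you have written is precisely that classical Otto--Villani argument (Young on the cross term of \ref{HWI}, then invoke $W^2\le c^2 I$, then optimize $\lambda=1/(2c)$), transplanted verbatim to the quantum setting via Theorem~\ref{theHWI}. Your check that $c^{-1}\ge\max(\kappa,0)$ is exactly the admissibility condition for $\lambda^\ast$ is the only nontrivial verification, and it is handled correctly.
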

\begin{proof}
	See Corollary 3.2 of \cite{[OV00]}.
\qed
\end{proof}	
 \bigskip
 \noindent
The \textit{diameter} of $\cD(\cH)$ in the Wasserstein distance $W_{2,\LL}$ is defined as follows:
\begin{align*}
	\operatorname{Diam}_{\LL}(\cD(\cH))=\sup_{\rho,\sigma\in\cD(\cH)} W_{2,\LL}(\rho,\sigma).
\end{align*}
Another straightforward consequence of the $\kappa$-displacement convexity of the quantum relative entropy for $\kappa>0$ is the following estimate on the diameter of $\operatorname{Diam}_{\LL}(\cD(\cH))$, which is a quantum analogue of the Bonnet-Myers theorem (see Proposition 7.3 of \cite{erbar2016poincar}). 
\begin{proposition}
Assume that \ref{cd} holds for $\kappa>0$. Then for any two states $\rho,\omega\in\cD(\cH)$,
\begin{align*}
	W_{2,\LL}(\rho,\omega)^2\le \frac{4}{\kappa} (D(\rho\|\sigma)+D(\omega\|\sigma)).
\end{align*}	
Therefore,
\begin{align*}
  \operatorname{Diam}_{\LL}(\cD(\cH))\le  \sup_{\rho,\omega\in\cD(\cH)}\sqrt{\frac{4}{\kappa} (D(\rho\|\sigma)+D(\omega\|\sigma))}.
\end{align*}
\end{proposition}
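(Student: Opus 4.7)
The plan is to deduce this directly from the $\kappa$-displacement convexity of the relative entropy, which is item (v) of \Cref{theofund}, applied at the midpoint of a geodesic. Since the hypothesis $\operatorname{Ric}(\LL) \ge \kappa$ is in force and we have (v), for any two states $\rho, \omega \in \cD(\cH)$ and any constant speed geodesic $(\gamma(s))_{s\in[0,1]}$ with $\gamma(0)=\rho$ and $\gamma(1)=\omega$, the inequality
\begin{equation*}
D(\gamma(s)\|\sigma) \le (1-s)\,D(\rho\|\sigma) + s\,D(\omega\|\sigma) - \frac{\kappa}{2}\,s(1-s)\,W_{2,\LL}(\rho,\omega)^2
\end{equation*}
holds. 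The key observation is that the left-hand side is always nonnegative by positivity of the quantum relative entropy, so this inequality can be rearranged into an upper bound on $W_{2,\LL}(\rho,\omega)^2$ for each fixed $s \in (0,1)$.

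The sharpest such bound comes from maximizing $s(1-s)$, i.e.\ choosing $s = 1/2$. First I would specialize the displacement-convexity inequality to $s=1/2$, which yields $\frac{\kappa}{8}\,W_{2,\LL}(\rho,\omega)^2 \le \frac{1}{2}\bigl(D(\rho\|\sigma)+D(\omega\|\sigma)\bigr)$, and then multiply through by $8/\kappa$ (permissible since $\kappa>0$) to obtain the claimed inequality
\begin{equation*}
W_{2,\LL}(\rho,\omega)^2 \le \frac{4}{\kappa}\bigl(D(\rho\|\sigma)+D(\omega\|\sigma)\bigr).
\end{equation*}
Taking the supremum of both sides over $\rho,\omega \in \cD(\cH)$ then gives the diameter bound immediately.

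The only subtlety is the existence of a constant speed geodesic joining two arbitrary $\rho,\omega \in \cD(\cH)$, since the Riemannian structure $g_\LL$ was constructed on the open manifold $\cD_+(\cH)$. For faithful endpoints this is given by standard Riemannian theory; for possibly non-faithful endpoints I would approximate $\rho$ and $\omega$ by sequences $\{\rho_n\},\{\omega_n\} \subset \cD_+(\cH)$ as in \Cref{prop2}, apply the inequality along the geodesics connecting $\rho_n$ and $\omega_n$, and pass to the limit using the continuity of $W_{2,\LL}$ (\Cref{prop2}) together with lower semicontinuity of the relative entropy. I expect no real obstacle here beyond this limiting argument; the heart of the proof is simply the convexity inequality at the midpoint.
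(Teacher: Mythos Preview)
Your proposal is correct and follows essentially the same route as the paper: apply item (v) of \Cref{theofund} at the midpoint $s=1/2$, use the nonnegativity of $D(\gamma(1/2)\|\sigma)$, and rearrange. The paper's proof is exactly this one-line argument and does not even spell out the approximation step for non-faithful endpoints that you outline.
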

\begin{proof}
	The result follows directly from the convexity of the quantum relative entropy (cf. (v) of \Cref{theofund}):
	\begin{align*}
	0\le	D(\gamma(1/2)\|\sigma)\le \frac{1}{2}D(\rho\|\sigma)+\frac{1}{2} D(\omega\|\sigma)-\frac{\kappa}{8} W_{2,\LL}(\rho,\omega)^2.
	\end{align*}	
for a given constant speed geodesic $(\gamma(s))_{s\in[0,1]}$ relating $\rho$ and $\omega$.
	\qed
	\end{proof}
\section{From Ricci lower bound to the Poincar\'{e} inequality}\label{poinc}
In this section we show that $\operatorname{Ric}(\LL)\ge 0$ together with a condition of finiteness of the diameter of $\cD(\cH)$ with respect to the distance $W_{2,\LL}$ implies the Poincar\'{e} inequality, hence extending Proposition 5.9 of \cite{erbar2016poincar}. In order to show this, we first need to extend their theorems 3.1 and 3.5 to the quantum regime. Throughout this section, we fix $(\Lambda_t)_{t\ge 0}$ to be a primitive QMS, with unique invariant state $\sigma$ and associated generator $\LL$, satisfying the detailed balance condition. 
\begin{proposition}[Gradient estimate]\label{theo1} \ref{cd} implies the following gradient estimate: for any $\rho\in \cD_+(\cH)$, any $U\in \cB_{sa}(\cH)$ with $\tr(U)=0$, and all $t>0$:
	\begin{align*}
		\|\nabla (\Lambda_t(U))\|_{\LL,\rho}^2\le \e^{-2\kappa t} 	\|\nabla U\|_{\LL,\Lambda_{*t}(\rho)}^2.
	\end{align*}
\end{proposition}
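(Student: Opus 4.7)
The plan is to use a Bakry--\'Emery-type interpolation along the semigroup flow. Define
\begin{align*}
\phi(s) := \|\nabla \Lambda_{t-s}(U)\|_{\LL, \Lambda_{*s}(\rho)}^2, \qquad s \in [0,t],
\end{align*}
so that $\phi(0) = \|\nabla \Lambda_t(U)\|_{\LL, \rho}^2$ and $\phi(t) = \|\nabla U\|_{\LL, \Lambda_{*t}(\rho)}^2$. The desired bound is then equivalent to $\phi(0) \le \e^{-2\kappa t}\phi(t)$, which by Gr\"onwall's lemma follows from the differential inequality $\phi'(s) \ge 2\kappa\,\phi(s)$ on $[0,t]$.

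To obtain this differential inequality I would differentiate the explicit formula \reff{innerj} for $\phi(s)$ in $s$. Three pieces arise: two identical contributions in which $\tfrac{d}{ds}\Lambda_{t-s}(U) = -\LL(\Lambda_{t-s}(U))$ is inserted in one of the two slots of the inner product (using self-adjointness of $[\Lambda_{*s}(\rho)]_{\omega_j}$ with respect to the Hilbert--Schmidt inner product, which is immediate from \Cref{lemma}), together giving $-2\langle \nabla \LL(\Lambda_{t-s}(U)), \nabla \Lambda_{t-s}(U)\rangle_{\LL, \Lambda_{*s}(\rho)}$; and a third contribution in which the derivative lands on the weight $[\Lambda_{*s}(\rho)]_{\vec\omega}$, which by \reff{eq11} applied with $A=\LL_*(\Lambda_{*s}(\rho))$ equals $\langle \LL_*(\Lambda_{*s}(\rho)), \nabla \Lambda_{t-s}(U)\,._{\Lambda_{*s}(\rho)}\nabla \Lambda_{t-s}(U)\rangle$. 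Comparing the result with \reff{secondderiv} shows that these pieces assemble exactly into $\phi'(s) = 2\, B(\Lambda_{*s}(\rho), \Lambda_{t-s}(U))$.

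Finally, the hypothesis $\operatorname{Ric}(\LL) \ge \kappa$ is, by the equivalence (i)$\Leftrightarrow$(ii) of \Cref{theofund}, precisely the pointwise bound $B(\rho', U') \ge \kappa \|\nabla U'\|_{\LL, \rho'}^2$ for all faithful $\rho'$ and traceless self-adjoint $U'$. Applied at $(\rho', U') = (\Lambda_{*s}(\rho), \Lambda_{t-s}(U))$ this yields $\phi'(s) \ge 2\kappa\,\phi(s)$, and integration concludes the proof. The main obstacle is the algebraic bookkeeping in the derivative computation: one must apply \reff{eq11} carefully so as to recognize the $s$-derivative of $[\Lambda_{*s}(\rho)]_{\vec\omega}$ as the bilinear ``dot'' form $._{\Lambda_{*s}(\rho)}$ introduced before \reff{eq11}, and then track the signs and the factor of $1/2$ so that the three pieces combine exactly into $2B$ as defined by \reff{secondderiv}.
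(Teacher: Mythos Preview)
Your proposal is correct and is essentially the paper's own proof. The only cosmetic difference is that the paper absorbs the exponential factor into the auxiliary function, defining $\Phi(s):=\e^{-2\kappa s}\phi(s)$ and showing $\Phi'(s)\ge 0$, whereas you keep $\phi$ and invoke Gr\"onwall; both routes reduce to the same derivative computation $\phi'(s)=2B(\Lambda_{*s}(\rho),\Lambda_{t-s}(U))$ followed by the pointwise bound (ii) of \Cref{theofund}.
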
	

\begin{proof}
	Define for $u\in [0,t]$ $\rho_u\equiv  \Lambda_{*u}(\rho) $ and $U_{u}\equiv \Lambda_{u}(U)$. Then,
	\begin{align*}
		\Phi(s):=\e^{-2\kappa s}\|\nabla  U_{t-s}\|_{\LL,\rho_s}^2\equiv \e^{-2\kappa s}\sum_{j\in\cJ} c_j \langle \partial_j( U_{t-s}),[\rho_s]_{\omega_j}   \partial_j( U_{t-s})\rangle.
		\end{align*}
	Then, $\Phi(0)  =  \|\nabla  U_t\|_{\LL,\rho}^2$ and $\Phi(t)= \e^{-2\kappa t}\|\nabla  U\|_{\LL,\rho_t}^2  $. It is then enough to prove that $\Phi$ has non-negative derivative to prove the claim. But:
	\begin{align*}
		&\Phi'(s)=2\e^{-2\kappa s}  \left[ -\kappa   \|\nabla  U_{t-s}\|_{\LL,\rho_s}^2 + \frac{1}{2}\frac{\partial}{\partial s}     \|\nabla  U_{t-s}\|_{\LL,\rho_s}^2  \right]\\
		&=2\e^{-2\kappa s}   \left[       -\kappa   \|\nabla  U_{t-s}\|_{\LL,\rho_s}^2 + \sum_{j\in \cJ} c_j \langle \partial_j\partial_s U_{t-s},[\rho_s]_{\omega_j}\partial_jU_{t-s}\rangle  +\frac{1}{2} \sum_{j\in\cJ} c_j\langle \partial_j U_{t-s},\partial_s([\rho_s]_{\omega_j})\partial_jU_{t-s}\rangle \right]\\
		&=2\e^{-2\kappa s}  \left[ -\kappa   \|\nabla  U_{t-s}\|_{\LL,\rho_s}^2 - \langle \nabla \LL (U_{t-s}),\nabla U_{t-s}\rangle_{\LL,\rho_s}+ \frac{1}{2}  \langle \LL_*(\rho_s),\nabla U_{t-s}._{\rho_s}\nabla U_{t-s} \rangle \right]\\
		&=2\e^{-2\kappa s}  \left[ -\kappa   \|\nabla  U_{t-s}\|_{\LL,\rho_s}^2+  B(\rho_s,U_{t-s})\right]
	\end{align*}
where we used \Cref{eq11} in the second line. We conclude by a use of (ii) of \Cref{theofund}.
	\qed
\end{proof}	
\noindent
\begin{proposition}[Reverse quantum Poincar\'{e} inequality]
	Assume that \ref{cd} holds. Then for any $\rho\in\cD_+(\cH)$, any $U\in\cB_{sa}(\cH)$ and all $t>0$:
	\begin{align}\label{reversepoinc}
		\tr(\Lambda_{*t}(\rho) U^2)-\tr(\rho (\Lambda_t(U))^2)\ge \frac{\e^{2\kappa t}-1}{2\kappa} \| \nabla \Lambda_t(U) \|_{\LL,\rho}^2
	\end{align}	
	\end{proposition}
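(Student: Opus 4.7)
The plan is to adapt the classical Bakry--Émery interpolation argument to the quantum setting, with Proposition~\ref{theo1} providing the decisive ingredient. I introduce the interpolating functional $\phi(s) := \tr\bigl(\Lambda_{*s}(\rho)\,(\Lambda_{t-s}(U))^2\bigr)$ for $s \in [0,t]$; its endpoints
\begin{equation*}
\phi(0) = \tr(\rho\,(\Lambda_t U)^2), \qquad \phi(t) = \tr(\Lambda_{*t}(\rho)\,U^2)
\end{equation*}
reproduce exactly the two terms on the left-hand side of~\eqref{reversepoinc}, so it suffices to bound $\phi'(s)$ from below and then integrate over $s \in [0,t]$.

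Writing $\rho_s := \Lambda_{*s}(\rho)$ and $U_r := \Lambda_r(U)$, the product rule combined with $\partial_s \rho_s = \LL_* \rho_s$ and $\partial_s U_{t-s} = -\LL(U_{t-s})$ gives
\begin{equation*}
\phi'(s) = \tr\bigl(\rho_s\,\bigl[\LL(U_{t-s}^2) - \LL(U_{t-s})\,U_{t-s} - U_{t-s}\,\LL(U_{t-s})\bigr]\bigr),
\end{equation*}
i.e.\ the non-commutative carré du champ of $U_{t-s}$ integrated against $\rho_s$. Expanding the bracket via the Lindblad form~\eqref{LLDBC}, the Leibniz identity $[X^2, \tilde{L}_j] = X[X,\tilde{L}_j] + [X,\tilde{L}_j]X$, and the self-adjointness of $U_{t-s}$, the cross contributions cancel and one obtains a sum over $j \in \cJ$ built from $\partial_j U_{t-s} = [\tilde{L}_j, U_{t-s}]$ and the weights $e^{\pm \omega_j/2}$. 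Pairing indices $j,j'$ with $\tilde{L}_{j'} = \tilde{L}_j^*$ (hence $\omega_{j'} = -\omega_j$ and $c_{j'} = c_j$) via the detailed balance symmetry~\eqref{symcond} and invoking the integral representation $[\rho]_\omega A = \int_0^1 e^{\omega(1/2-u)}\rho^u A \rho^{1-u}\,du$ from Lemma~\ref{lemma}, this sum is identified with the Carlen--Maas Dirichlet form, producing a lower bound of the form $\phi'(s) \geq \|\nabla U_{t-s}\|_{\LL, \rho_s}^2$.

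Next I apply Proposition~\ref{theo1} to $U_{t-s}$ with time parameter $s$ and state $\rho$. Since $\Lambda_s(U_{t-s}) = \Lambda_t(U)$ and $\Lambda_{*s}(\rho) = \rho_s$, the gradient estimate reads $\|\nabla \Lambda_t(U)\|_{\LL, \rho}^2 \leq e^{-2\kappa s}\,\|\nabla U_{t-s}\|_{\LL, \rho_s}^2$, or equivalently $\|\nabla U_{t-s}\|_{\LL, \rho_s}^2 \geq e^{2\kappa s}\,\|\nabla \Lambda_t(U)\|_{\LL, \rho}^2$. Substituting into the lower bound on $\phi'(s)$ and integrating over $[0,t]$ against $\int_0^t e^{2\kappa s}\,ds = \frac{e^{2\kappa t} - 1}{2\kappa}$ yields~\eqref{reversepoinc}.

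The hard part is the algebraic step of the second paragraph: direct expansion of the carré du champ produces a ``boundary'' weighting $A \mapsto e^{-\omega/2} A\rho + e^{\omega/2} \rho A$, while the Carlen--Maas gradient norm $\|\nabla X\|_{\LL,\rho}^2 = \sum_j c_j \langle \partial_j X, [\rho]_{\omega_j} \partial_j X \rangle$ involves the integrated weighting $[\rho]_\omega$ (a logarithmic--mean--type multiplication). Reconciling the two requires both the DBC pairing~\eqref{symcond} to symmetrize the exponentials $e^{\pm \omega_j/2}$ and a non-commutative arithmetic--logarithmic mean comparison, which together pin down the constant securing the factor $\frac{e^{2\kappa t} - 1}{2\kappa}$.
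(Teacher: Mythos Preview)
Your proposal is correct and follows essentially the same route as the paper: define the interpolating functional $\phi(s)=\tr(\rho_s U_{t-s}^2)$, identify $\phi'(s)$ with the carr\'e du champ, bound it below by $\|\nabla U_{t-s}\|_{\LL,\rho_s}^2$ via the operator inequality $[\rho]_{\omega}\le \e^{-\omega/2}R_\rho+\e^{\omega/2}L_\rho$, then apply the gradient estimate of Proposition~\ref{theo1} and integrate. The paper runs the chain in the reverse order (starting from $\e^{2\kappa s}\|\nabla U_t\|_{\LL,\rho}^2$ and working up to $\partial_s\tr(\rho_s U_{t-s}^2)$), and carries out the algebra explicitly rather than invoking the DBC pairing~\eqref{symcond}; in particular the step from the boundary weighting to the carr\'e du champ is an \emph{equality} obtained by direct expansion, so the only genuine inequality is the arithmetic--logarithmic mean comparison you flag in your last paragraph.
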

\begin{proof}
	The proof is similar to the one of Theorem 3.5 of \cite{erbar2016poincar}. For $u\ge 0$, let $\rho_u\equiv\Lambda_{*u}(\rho)$ and $U_u\equiv \Lambda_u (U)$. Then, from \Cref{theo1}, 
	\begin{align*}
		&	\e^{2\kappa s} 	\|\nabla U_t\|_{\LL,\rho}^2= 			\e^{2\kappa s} 	\|\nabla (\Lambda_{s}U_{t-s})\|_{\LL,\rho}^2 \\
		&\le 	\|\nabla U_{t-s}\|_{\LL,\rho_s}^2\\
		&= \sum_{j\in\cJ} c_j \langle  \partial_j U_{t-s},[\rho_s]_{\omega_j}\partial_j U_{t-s}\rangle\\
		&\le \sum_{j\in\cJ} c_j \langle  \partial_j U_{t-s},    ( \e^{-\omega_j/2}  R_{\rho_s}   +\e^{\omega_j/2}   L_{\rho_s})\partial_j U_{t-s}\rangle\\
		&=   \sum_{j\in\cJ} c_j \left(  \e^{\omega_j/2}\tr [\rho_s     \partial_j U_{t-s}  (\partial_j U_{t-s}  )^*]   +\e^{-\omega_j/2}\tr[\rho_s (\partial_j U_{t-s}  )^*\partial_j U_{t-s}     ] \right)    \\
		&= \sum_{j\in\cJ} c_j  \left(  \e^{\omega_j/2}  \tr(\rho_s   [\tilde{L}_j,U_{t-s}][U_{t-s},\tilde{L}_j^*])    +  \e^{-\omega_j/2} \tr(\rho_s   [U_{t-s},\tilde{L}_j^*][\tilde{L}_j,U_{t-s}])   \right)  \\
		&=   \sum_{j\in\cJ} c_j \left( \tr(\rho_s   (\tilde{L}_j U_{t-s}^2 \tilde{L}_j^*    \e^{\omega_j/2}      + \e^{-\omega_j/2}   \tilde{L}_j^* U_{t-s}^2 \tilde{L}_j ) ) \right.\\
		&~~~~~\left.+\e^{\omega_j/2}  \tr(\rho_s  (-\tilde{L}_jU_{t-s}\tilde{L}_j^*   U_{t-s}  -U_{t-s} \tilde{L}_j U_{t-s} \tilde{L}_j^* + U_{t-s}  \tilde{L}_j\tilde{L}_j^* U_{t-s}  ) )     \right. \\
	&~~~~~\left. +   \e^{-\omega_j/2}   \tr(\rho_s    (U_{t-s} \tilde{L}_j^*\tilde{L}_j U_{t-s}   -\tilde{L}_j^* U_{t-s}  \tilde{L}_j U_{t-s}  -U_{t-s} \tilde{L}_j^*U_{t-s}    \tilde{L}_j   ) )	   \right)
	\end{align*}
\begin{align*}
		&=\tr (\rho_s\LL(U_{t-s}^2))   +\sum_{j\in \cJ}  c_j  \left(       \e^{\omega_j/2} \tr(\rho_s  (-\tilde{L}_jU_{t-s}\tilde{L}_j^* U_{t-s}    -U_{t-s} \tilde{L}_j U_{t-s} \tilde{L}_j^* + U_{t-s}  \tilde{L}_j\tilde{L}_j^* U_{t-s}  +U_{t-s}^2  \tilde{L}_j  \tilde{L}_j^* ) )     \right.     \\
		&~~~~~~~~~~~~~~~~~~~~~~~~~~~+\left.  \e^{-\omega_j/2} \tr(\rho_s    (U_{t-s} \tilde{L}_j^*\tilde{L}_j U_{t-s}   -\tilde{L}_j^* U_{t-s}  \tilde{L}_j U_{t-s}  -U_{t-s} \tilde{L}_j^* U_{t-s}   \tilde{L}_j +\tilde{L}_j^* \tilde{L}_j U_{t-s}^2     ) )    \right)\\
		&=   \tr(\rho_s \LL(U_{t-s}^2))    -\tr (\rho_s  U_{t-s}\LL(U_{t-s})   )-\tr(\rho_s \LL(U_{t-s}) U_{t-s} ) \\
		&=\frac{\partial}{\partial s}\tr(\rho_s U^2_{t-s})
	\end{align*}	
	where we used (2.38) of \cite{rouze2017concentration}, with $R_\rho(A)\equiv A\rho$ and $L_{\rho}(A)\equiv\rho A$, in the fourth line. The claim follows after integrating the above inequality from $0$ to $t$. 
	\qed
\end{proof}	
\noindent
\begin{theorem}\label{poinca}$\operatorname{Ric}(\LL)\ge 0 ~+~\operatorname{Diam}_{\LL}(\cD(\cH))\le D$ $\Rightarrow$ $\operatorname{PI}(\frac{1}{2\e D^2}).$\label{poinc0}
\end{theorem}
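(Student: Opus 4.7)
The strategy is a quantum adaptation of the argument in Proposition~5.9 of \cite{erbar2016poincar}. It relies on three ingredients, all available from the material developed above: the Wasserstein contraction of \Cref{theofund}, the reverse Poincaré inequality of \Cref{reversepoinc}, and the diameter hypothesis $W_{2,\LL}(\rho,\sigma)\le D$.

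First, we extract an ultracontractive relative-entropy decay. Applying \Cref{theofund}(iii) with $\kappa=0$ and $\omega=\sigma$ to the flow $\rho_t := \Lambda_{*t}(\rho)$ gives $\tfrac12\tfrac{d}{dt}W_{2,\LL}(\rho_t,\sigma)^2 \le -D(\rho_t\|\sigma)$. Integrating this inequality over $[0,\infty)$ and invoking the diameter bound yields $\int_0^\infty D(\rho_t\|\sigma)\,dt \le \tfrac12 D^2$. Since $t\mapsto D(\rho_t\|\sigma)$ is non-increasing under the semigroup, one obtains
\[D(\Lambda_{*t}\rho\|\sigma) \le \frac{D^2}{2t}, \qquad t>0,\ \rho\in\cD_+(\cH).\]

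Second, the reverse Poincaré inequality (\Cref{reversepoinc}) applied at $\rho=\sigma$ with $\kappa=0$ reads, for $U\in\cB_{sa}(\cH)$ with $\tr(\sigma U)=0$,
\[\operatorname{Var}_\sigma(U) - \operatorname{Var}_\sigma(\Lambda_t U) \ge t\,\|\nabla\Lambda_t U\|_{\LL,\sigma}^2.\]

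Third, we bridge these two estimates by a linearization around the invariant state. For a mean-zero self-adjoint observable $U$, the family of perturbed faithful states $\rho_\eps := \sigma^{1/2}\e^{\eps U}\sigma^{1/2}/Z_\eps$ admits the quantum Taylor expansion $D(\rho_\eps\|\sigma) = \tfrac{\eps^2}{2}\langle U,U\rangle_{\mathrm{BKM},\sigma} + O(\eps^3)$, where $\langle A,B\rangle_{\mathrm{BKM},\sigma} := \int_0^1 \tr(\sigma^s A\sigma^{1-s}B)\,ds$ is the Bogoliubov--Kubo--Mori inner product (which coincides with the standard variance when $U$ commutes with $\sigma$). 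Using the detailed balance condition to propagate this expansion under $(\Lambda_{*t})_{t\ge 0}$ and matching the $O(\eps^2)$ coefficients in the ultracontractive estimate of the first step converts the entropy bound into a quantitative upper bound on $\operatorname{Var}_\sigma(\Lambda_t U)$ of order $D^2/t$ times an appropriate scale of $U$.

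Finally, inserting this variance decay into the reverse Poincaré of the second step and optimizing over $t>0$ — an exponential minimization in which the Euler constant $\e$ enters naturally — yields $\operatorname{Var}_\sigma(U) \le 2\e D^2\,\|\nabla U\|_{\LL,\sigma}^2$, i.e.\ $\operatorname{PI}(\tfrac{1}{2\e D^2})$. The main technical subtlety is the linearization in the third step: in the classical (commutative) case of \cite{erbar2016poincar} variance and the BKM inner product coincide, so the passage is immediate, whereas in the quantum setting a careful use of the detailed balance condition is required to identify $\Lambda_{*t}\rho_\eps$ at order $\eps$, together with the operator-monotone structure of the maps $[\sigma]_\omega$ entering the Wasserstein construction to control the standard Hilbert--Schmidt variance by its BKM counterpart.
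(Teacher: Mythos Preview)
Your third step contains a genuine gap that cannot be repaired within the proposed strategy. The ultracontractive estimate you derive, $D(\Lambda_{*t}\rho\|\sigma)\le D^2/(2t)$, is \emph{uniform} in $\rho$: its right-hand side carries no dependence on the distance of $\rho$ to $\sigma$. Consequently, when you plug in $\rho_\eps$ and expand the left-hand side as $\tfrac{\eps^2}{2}\langle \Lambda_t U,\Lambda_t U\rangle_{\mathrm{BKM},\sigma}+O(\eps^3)$, there is nothing of order $\eps^2$ on the right to match against; the inequality $O(\eps^2)\le D^2/(2t)$ survives the limit $\eps\to 0$ only as the triviality $0\le D^2/(2t)$. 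No ``appropriate scale of $U$'' can emerge from this, and hence no bound of the form $\operatorname{Var}_\sigma(\Lambda_t U)\lesssim D^2/t$ is obtainable by linearization of the entropy decay. The same obstruction applies if you try the integrated version $\int_0^\infty D(\rho_t\|\sigma)\,dt\le D^2/2$.

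The paper's proof (and the classical argument of \cite{erbar2016poincar} it adapts) proceeds quite differently and avoids any linearization. One works directly with a self-adjoint eigenvector $f$ of $\LL$ for the spectral-gap eigenvalue $-\lambda$, normalized so that $\|f\|_\infty=1$. Since $\Lambda_t f=\e^{-\lambda t}f$, the reverse Poincar\'e inequality \eqref{reversepoinc} at $\kappa=0$ gives, for \emph{every} $\rho\in\cD_+(\cH)$, the pointwise gradient bound $\|\nabla f\|_{\LL,\rho}^2\le \e^{2\lambda t}/t$, and optimizing in $t$ yields $\|\nabla f\|_{\LL,\rho}^2\le 2\e\lambda$. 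The diameter hypothesis is then used not through an entropy estimate, but by connecting the extremal eigenprojection states of $f$ with a near-geodesic path $(\gamma(s),U(s))$: writing $\mu_{\max}-\mu_{\min}=\int_0^1\langle\nabla f,\nabla U(s)\rangle_{\LL,\gamma(s)}\,ds$ and applying Cauchy--Schwarz, one gets $1\le |\mu_{\max}-\mu_{\min}|\le D\cdot\sup_s\|\nabla f\|_{\LL,\gamma(s)}\le D\sqrt{2\e\lambda}$, which is exactly $\operatorname{PI}(1/(2\e D^2))$. The key point you are missing is that the reverse Poincar\'e must be exploited at \emph{all} states along a Wasserstein path, not only at $\sigma$, to convert the diameter bound into a gradient--oscillation estimate.
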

\begin{proof}
	Let $f\in\cB_{sa}(\cH)$ an eigenvector of $\LL$ with associated eigenvalue opposite to the spectral gap $\lambda$ of $\LL$. Without loss of generality, $\|f\|_\infty=1$, and by primitivity of $(\Lambda_t)_{t\ge 0} $, $\tr(\sigma f)=0$. Now, note that $\Lambda_t(f)=\e^{-\lambda t}f$. Therefore, the reverse Poincar\'{e} inequality \reff{reversepoinc} in the case when $\kappa=0$ implies that for any $\rho\in\cD_+(\cH)$,
	\begin{align*}
		\|\nabla f\|_{\LL,\rho}^2\le   \frac{\e^{2\lambda t}}{t}\|f\|_\infty^2.
	\end{align*}
Optimizing in $t$ and using $\|f\|_\infty=1$, we find
\begin{align*}
		\|\nabla f\|_{\LL,\rho}^2\le 2\e \lambda\|f\|_\infty^2.
\end{align*}	
Given the following spectral decomposition of $f=\sum_\mu  \mu P_\mu$, since $\tr(\sigma f)=0$, the minimum and maximum eigenvalues of $f$, respectively denoted by $\mu_{\min}$ and $\mu_{\max}$, obey $\mu_{\min} <0<\mu_{\max}$. Since we assumed $\|f\|_\infty=1$, this implies that given a path $(\gamma(s),U(s))_{s\in[0,1]}$ in $\cD(\cH)$
joining the states $\gamma(0)=\frac{P_{\mu_{\max}}}{\tr(P_{\mu_{\max}})}$ and $\gamma(1)= \frac{P_{\mu_{\min}}}{\tr(P_{\mu_{\min}} )}$ such that $\int_{0}^1  \|\dot{\gamma}(s)\|_{g_{\LL,\gamma(s)}}^2 ds\le W_{2,\LL}(\gamma(0),\gamma(1))^2+\eps$,
\begin{align*}
	1\le |\mu_{\max} - \mu_{\min}|&=\left|\tr  f\left(\frac{P_{\mu_{\min}}}{\tr(P_{\mu_{\min}} )} -  \frac{P_{\mu_{\max}}}{\tr(P_{\mu_{\max}} )}  \right)   \right|\\
&  =  \left|\tr \left( f\int_0^1  \dot{\gamma}(s)  ds\right)   \right| \\
&	=\left|  \int_0^1  \sum_{j\in\cJ} c_j \langle \partial_j f,[\gamma(s)]_{\omega_j}\partial_j U(s)\rangle ds    \right|\\
&\le\sqrt{( D^2+\eps)}\left( \int_0^1 \|\nabla f\|^2_{\LL,\gamma(s)}ds\right)^{1/2}\le \sqrt{(D^2+\eps) 2\lambda \e},
\end{align*}
where in the last line we used the Cauchy-Schwarz inequality with respect to the inner product $\sum_{j\in \cJ}c_j\langle .~,\int_0^1[\gamma(s)]_{\omega_j}ds~.\rangle $, and the result directly follows.
	\qed
\end{proof}

\section{From Ricci lower bound to modified log-Sobolev inequality}\label{mls}
In \cite{erbar2016poincar}, a modified logarithmic Sobolev inequality was proved to holds under the conditions that $\operatorname{Ric}(\LL)\ge 0$ and of boundedness of the diameter of the underlying space under the modified Wasserstein distance. Here, we extend their results to the quantum regime under the further assumption that the semigroup $(\Lambda_t)_{t\ge 0}$ is unital, leaving the study of the general case to later. The idea of the proof is to get a non-tight logarithmic Sobolev inequality from $\operatorname{HWI}(0)$, and then to tighten it using ideas borrowed from \cite{[BK08]}. In what follows, we denote by $d$ the dimension of $\cH$. \\\\
Given two states $\rho,\omega\in\cD(\cH)$, with associated spectral decompositions $\rho =\sum_{i\in \mathcal{A}}  \lambda_i P_i$, $\omega=\sum_{j\in \mathcal{B}} \mu_jQ_j$, where $\cA$ and $\cB$ are two finite index sets, a \textit{coupling} of $\rho$ and $\omega$ is a probability distribution $q$ on $\cA\times \cB$ such that 
\begin{align*}
	\sum_{i\in \cA} q(i,j) =\mu_j\tr(Q_j)\\
	\sum_{j\in\cB}q(i,j)=\lambda_i \tr(P_i).
\end{align*}
The set of couplings between $\rho$ and $\omega$ is denoted by $\Pi(\rho,\omega)$. In analogy with the classical literature (see e.g. \cite{[EM12]}), given an ergodic semigroup $(\Lambda_t)_{t\ge 0}$ with associated generator $\LL$, the \textit{coupling Wasserstein distance of order two} between $\rho$ and $\omega$ is defined as follows:
\begin{align*}
	W_{2,\LL,c}(\rho,\omega)^2:=\inf_{q\in \Pi(\rho,\sigma)}  \sum_{i\in\cA,~j\in\cB} q(i,j) W_{2,\LL} (\rho_j,\omega_j)^2,
\end{align*}	
where
\begin{align*}
	\rho_i:= \frac{P_i}{\tr P_i},~~~~~~~~~~\omega_j:=\frac{Q_j}{\tr(Q_j)},~~~~~~~~~i\in\cA,~j\in\cB.
\end{align*}
\noindent
The following result is a quantum generalization of Proposition 2.14 of \cite{[EM12]}:
\begin{proposition}\label{prop1}
Let $(\Lambda_t)_{t\ge 0}$ be a primitive QMS, with unique invariant state $\sigma$ and associated generator $\LL$, satisfying the detailed balance condition. Then, for any $\rho,\omega\in\cD_+(\cH)$,
\begin{align*}
	W_{2,\LL}(\rho,\omega)\le 	W_{2,\LL,c}(\rho,\omega).
\end{align*}	
\end{proposition}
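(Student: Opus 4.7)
The plan is to lift an arbitrary coupling $q\in\Pi(\rho,\omega)$ to an explicit curve in $(\cD(\cH),W_{2,\LL})$ joining $\rho$ to $\omega$, and then to bound its length by the coupling cost $\sum_\alpha q(\alpha)W_{2,\LL}(\rho_i,\omega_j)^2$ via a joint convexity property of the Onsager-type quadratic form $(\rho,A)\mapsto\langle A,[\rho]_\omega^{-1}A\rangle$. Concretely, for each $\alpha=(i,j)$ in the support of $q$, and after a harmless regularization of $\rho_i,\omega_j$ to faithful states (justified by \Cref{prop2}), I would pick an almost-optimal constant-speed curve $(\gamma_\alpha(s),U_\alpha(s))_{s\in[0,1]}$ in $\cD_+(\cH)$ joining $\rho_i$ to $\omega_j$ with energy within $\epsilon$ of $W_{2,\LL}(\rho_i,\omega_j)^2$, so that $\dot\gamma_\alpha(s)=-\operatorname{div}([\gamma_\alpha(s)]_{\vec\omega}\nabla U_\alpha(s))$.

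Next, I would define the interpolated curve and its momentum field by
\[
\gamma(s):=\sum_\alpha q(\alpha)\,\gamma_\alpha(s),\qquad \mathbf{M}(s):=\sum_\alpha q(\alpha)\,[\gamma_\alpha(s)]_{\vec\omega}\nabla U_\alpha(s).
\]
The marginal conditions in the definition of $\Pi(\rho,\omega)$ give $\gamma(0)=\rho$ and $\gamma(1)=\omega$, while linearity of $\operatorname{div}$ delivers the continuity equation $\dot\gamma(s)+\operatorname{div}(\mathbf{M}(s))=0$. From \Cref{lemmametric} I would derive the Benamou--Brenier-style variational representation
\[
\|\dot\gamma(s)\|^2_{g_{\LL,\gamma(s)}}=\inf\Big\{\sum_{j\in\cJ}c_j\langle \mathbf{N}_j,[\gamma(s)]_{\omega_j}^{-1}\mathbf{N}_j\rangle\,:\,\dot\gamma(s)=-\operatorname{div}(\mathbf{N})\Big\},
\]
whose minimizer is precisely the gradient momentum field $[\gamma(s)]_{\vec\omega}\nabla U(s)$ associated with the unique $U(s)$ produced by \Cref{lemmametric}. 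Choosing $\mathbf{N}=\mathbf{M}(s)$ furnishes an upper bound on $\|\dot\gamma(s)\|^2_{g_{\LL,\gamma(s)}}$ in terms of the weighted average of the individual momenta.

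The crux is the joint convexity of the map $(\rho,A)\mapsto\langle A,[\rho]_\omega^{-1}A\rangle$ on $\cD_+(\cH)\times\cB(\cH)$ for every $\omega\in\RR$, which I would derive from \Cref{lemma} together with Lieb's concavity theorem. Indeed, $[\rho]_\omega=\int_0^1\e^{\omega(1/2-s)}L_\rho^sR_\rho^{1-s}\,ds$, and since $\rho\mapsto\tr(B^*\rho^sB\rho^{1-s})$ is concave for each $s\in[0,1]$, the map $\rho\mapsto\tr(B^*[\rho]_\omega B)$ is concave as well. The Legendre-dual representation
\[
\langle A,[\rho]_\omega^{-1}A\rangle=\sup_{B\in\cB(\cH)}\big\{2\,\re\tr(A^*B)-\tr(B^*[\rho]_\omega B)\big\}
\]
then exhibits $\langle A,[\rho]_\omega^{-1}A\rangle$ as a pointwise supremum over $B$ of integrands that are jointly convex in $(\rho,A)$ (each being affine in $A$ plus a convex function of $\rho$ alone). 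Applying this with $A_\alpha=[\gamma_\alpha(s)]_{\omega_j}\nabla_jU_\alpha(s)$ and $\rho_\alpha=\gamma_\alpha(s)$ in each coordinate $j\in\cJ$ collapses $\sum_{j\in\cJ}c_j\langle \mathbf{M}_j(s),[\gamma(s)]_{\omega_j}^{-1}\mathbf{M}_j(s)\rangle$ to $\sum_\alpha q(\alpha)\|\dot\gamma_\alpha(s)\|^2_{g_{\LL,\gamma_\alpha(s)}}$. Integration over $s\in[0,1]$, followed by minimization over $q\in\Pi(\rho,\omega)$ and sending $\epsilon\to 0$, yields the claimed inequality.

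The hardest step is the joint convexity of $(\rho,A)\mapsto\langle A,[\rho]_\omega^{-1}A\rangle$: although Lieb's concavity theorem and the Legendre duality above are classical, their combination must be performed carefully to deliver genuine joint convexity in $(\rho,A)$ rather than mere separate convexity in each argument. A secondary technical hurdle is justifying the variational representation when $\gamma(s)$ is only ``almost'' faithful; I would handle this by perturbing $\gamma(s)$ to $(1-\delta)\gamma(s)+\delta\,\mathbb{I}/d$, running the argument on the regularized curve, and taking $\delta\to 0$ via the continuity of $W_{2,\LL}$ from \Cref{prop2}.
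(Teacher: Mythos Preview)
Your approach is correct and essentially coincides with the paper's: both construct the same convex combination $\gamma(s)=\sum_{i,j}q(i,j)\,\gamma_{ij}(s)$ of near-optimal curves and then invoke the joint convexity of the Riemannian metric $g_\LL$ in the pair $(\rho,\dot\rho)$ to pass to the weighted sum of the individual energies. The only difference is that the paper simply cites this convexity from equation~(8.15) of \cite{Carlen20171810}, whereas you re-derive it via the momentum/Benamou--Brenier representation together with Lieb's concavity and Legendre duality---which is exactly the content of that cited inequality.
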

\begin{proof}
	Let $\rho=\sum_{i\in \mathcal{A}}  \lambda_i P_i$, $\omega=\sum_{j\in \mathcal{B}} \mu_jQ_j$ the spectral decompositions of the states $\rho$ and $\omega$. For $(i,j)\in\cA\times \cB$, define $\rho_i:= \frac{P_i}{\tr( P_i)}$, $\omega_j:=\frac{Q_j}{\tr(Q_j)}$, and let $\eps>0$. By definition of the Wasserstein distance $W_{2,\LL}$, there exists a curve $\gamma_{ij}:[0,1]\mapsto \cD(\cH)$ from $\rho_i$ to $\omega_j$ such that
\begin{align*}
	\int_0^1 \|\dot{\gamma}_{ij}(s)\|_{g_{\LL,\gamma_{ij}(s)}}^2 ds\le W_{2,\LL}(\rho_i,\omega_j)^2+\eps.
	\end{align*}
For any coupling $q:\cA\times \cB \to \RR_+$ of the states $\rho$ and $\omega$, define the path $(\gamma(s))_{s\in[0,1]}$ on $\cD(\cH)$ as
\begin{align*}
\gamma(s)=\sum_{i\in\cA,~j\in\cB}q(i,j) \,\gamma_{ij}(s).
\end{align*}	
	Therefore, $\gamma(0)=\rho$ and $\gamma(1)=\omega$. Now,
	\begin{align*}
		W_{2,\LL}(\rho,\omega)^2&\le \int_0^1 \|\dot{\gamma}(s)\|_{g_{\LL,\gamma(s)}}^2ds\\
		&\le \sum_{i\in\cA,~j\in\cB} q(i,j) \int_0^1  \|\dot{\gamma}_{ij}(s)\|_{g_{\LL,\gamma(s)}}^2ds\\
		&\le \sum_{i\in\cA,~j\in\cB} q(i,j) ~W_{2,\LL}(\rho_i,\omega_j)^2+\eps.
		\end{align*}
	where we used the convexity of $g_{\LL}$ in the second line (see equation (8.15) of \cite{Carlen20171810}). As $\eps$ was arbitrary, the result follows after optimizing over the couplings $q$.
	\qed
	\end{proof}
\noindent
In what follows, we restrict our analysis to the case of a primitive QMS $(\Lambda_t)_{t\ge 0}$ with unique invariant state $\mathbb{I}/d$ that satisfies the detailed balance condition. In order to prove the main result of this section, we need the follows two lemmas that are extensions of Lemmas 6.2 and 6.3 of \cite{erbar2016poincar}:
 \begin{lemma}\label{lemma1}
 Assume that $\operatorname{Ric}(\LL)\ge 0$ and $\operatorname{Diam}_{\LL}(\cD(\cH))\le D$. Then for any $\delta>0$ and $f\in\cB_{sa}(\cH)$ such that $\tr (f^2)=d$:
 \begin{align*}
 	D\left(f^2/d\| \mathbb{I}/d \right)\le \delta D^2 \operatorname{I}_{\mathbb{I}/d}(f^2/d)+\frac{1}{4d\,\delta }   \tr (f^2 \mathbf{1}_{[1,\infty)}(f^2))
 	\end{align*}
 \end{lemma}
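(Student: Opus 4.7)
The plan is to combine the quantum $\operatorname{HWI}(0)$ inequality (available since $\operatorname{Ric}(\LL)\ge 0$) with a tailored coupling bound on $W_{2,\LL}(f^2/d,\mathbb{I}/d)$ that isolates the overflow mass $\tr(f^2\mathbf{1}_{[1,\infty)}(f^2))$. Set $\rho:=f^2/d$ and $\sigma:=\mathbb{I}/d$; the normalization $\tr(f^2)=d$ ensures $\rho\in\cD(\cH)$. By \Cref{theHWI} (with $\kappa=0$), $D(\rho\|\sigma)\le W_{2,\LL}(\rho,\sigma)\sqrt{\operatorname{I}_\sigma(\rho)}$, and Young's inequality \reff{Young} applied with $x=\sqrt{\operatorname{I}_\sigma(\rho)}$, $y=W_{2,\LL}(\rho,\sigma)$, $c=\delta D^2$ yields
\[
D(\rho\|\sigma)\le \delta D^2\operatorname{I}_\sigma(\rho)+\frac{1}{4\delta D^2}W_{2,\LL}(\rho,\sigma)^2.
\]

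The crux is thus to show $W_{2,\LL}(\rho,\sigma)^2\le \frac{D^2}{d}\tr(f^2\mathbf{1}_{[1,\infty)}(f^2))$. I would exploit the spectral decomposition $f=\sum_{i\in\cA}\mu_iP_i$, which simultaneously diagonalizes $\rho$ and $\sigma$: writing $\rho_i:=P_i/\tr(P_i)$, one has $\rho=\sum_i p_i\rho_i$ with $p_i=\mu_i^2\tr(P_i)/d$, and $\sigma=\sum_i q_i\rho_i$ with $q_i=\tr(P_i)/d$. Building on the convexity argument used at the end of the proof of \Cref{prop1} (where $g_\LL$ is convex along mixtures of states), for any coupling $q(i,j)$ of the distributions $(p_i)$ and $(q_j)$,
\[
W_{2,\LL}(\rho,\sigma)^2\le \sum_{i,j}q(i,j)\, W_{2,\LL}(\rho_i,\rho_j)^2.
\]
Picking the ``keep what matches'' coupling $q(i,i)=\min(p_i,q_i)$ and dispatching the residual mass from the excess indices $T=\{i:\mu_i^2>1\}$ to the deficit indices $S=\{i:\mu_i^2\le 1\}$ kills every diagonal term (since $\rho_i=\rho_i$), and the diameter bound $\operatorname{Diam}_\LL(\cD(\cH))\le D$ controls each off-diagonal $W_{2,\LL}(\rho_i,\rho_j)^2$ by $D^2$.

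A short computation using $\sum_i q_i=1$ and $\sum_i p_i=1$ gives
\[
\sum_{i\ne j}q(i,j)=1-\sum_i\min(p_i,q_i)=\frac{1}{d}\sum_{i\in T}(\mu_i^2-1)\tr(P_i)=\frac{1}{d}\tr\!\left((f^2-\mathbb{I})\mathbf{1}_{[1,\infty)}(f^2)\right),
\]
which is bounded above by $\frac{1}{d}\tr(f^2\mathbf{1}_{[1,\infty)}(f^2))$. Substituting the resulting estimate $W_{2,\LL}(\rho,\sigma)^2\le \frac{D^2}{d}\tr(f^2\mathbf{1}_{[1,\infty)}(f^2))$ into the Young--HWI bound produces exactly the claimed inequality, since the two factors of $D^2$ cancel to leave the prefactor $\frac{1}{4d\delta}$.

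The main obstacle is the upgrade of \Cref{prop1} used in the middle step: the proposition is stated for the canonical spectral decompositions, whereas here I decompose $\sigma=\mathbb{I}/d$ using the (generally non-canonical) eigenprojections of $f$. This extension is however a direct consequence of the joint convexity of the metric $g_\LL$ invoked in the proof of \Cref{prop1} (equation (8.15) of \cite{Carlen20171810}), which applies to any orthogonal decomposition into sub-states; I would record this observation as a short remark immediately before the main calculation. Everything else (Young's inequality, the algebraic identities, the trivial bound $W_{2,\LL}(\rho_i,\rho_j)\le D$) is routine.
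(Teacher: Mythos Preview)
Your proof is correct and follows essentially the same route as the paper: apply $\operatorname{HWI}(0)$, use Young's inequality with $c=\delta D^2$, and then control $W_{2,\LL}(f^2/d,\mathbb{I}/d)^2$ via a coupling that keeps the common mass on the diagonal and transports only the excess $\{\mu_i^2>1\}$. Your treatment is in fact more careful than the paper's on the one point you flag: since the canonical spectral decomposition of $\mathbb{I}/d$ has a single block, \Cref{prop1} as stated does not directly give the refined coupling; both arguments really use the joint convexity of $g_\LL$ for arbitrary convex decompositions into states, and your proposal makes this explicit, whereas the paper invokes \Cref{prop1} without comment. Your intermediate bound $\tfrac{1}{d}\tr((f^2-\mathbb{I})\mathbf{1}_{[1,\infty)}(f^2))$ is slightly sharper than the paper's $\tfrac{1}{d}\tr(f^2\mathbf{1}_{(1,\infty)}(f^2))$, but both lead to the same stated conclusion.
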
	
\begin{proof}
The case when $f$ is not of full support is trivial, as then $\operatorname{I}_\sigma(f^2/d)=\infty$. Without loss of generality, we assume that $f$ has full support, so that $f^2/d\in\cD_+(\cH)$. Write $f=\sum_{i\in \cA}\varphi(i) P_i$ the spectral decomposition of $f$, for some index set $\cA$. From $\operatorname{HWI}(0)$, and Young's inequality \reff{Young} with $c=\delta D^2$, $x=\sqrt{\operatorname{I}_\sigma(f^2/d)}$, and $y=W_{2,\LL}(f^2/d,\mathbb{I}/d)$: 
\begin{align*}
	D(f^2/d\|\mathbb{I}/d)\le \delta D^2 \operatorname{I}_{\mathbb{I}/d}(f^2/d)+\frac{1}{4\delta D^2} W_{2,\LL}(f^2/d,\mathbb{I}/d)^2.
\end{align*}	
	From \Cref{prop1}, for any coupling $q:\cA\times \cB\to \RR_+$ between $f^2/d$ and $\mathbb{I}/d$ such that $q(i,j)=0$ whenever $\varphi(i)^2\le 1$,
\begin{align*}
	D(f^2/d\|\mathbb{I}/d)& \le\delta D^2\operatorname{I}_{\mathbb{I}/d}(f^2/d) +\frac{1}{4\delta D^2}\sum_{i,j:~\varphi(i)^2>1} q(i,j) W_{2,\LL}\left(\frac{P_i}{\tr(P_i)},\frac{P_j}{\tr(P_j)}\right)^2, \\
	 &\le\delta D^2\operatorname{I}_{\mathbb{I}/d}(f^2/d) +\frac{1}{4d\,\delta} \tr(f^2 \mathbf{1}_{(1,\infty)}(f^2)),
\end{align*}
which is what needed to be proved.
	\qed
	\end{proof}
\begin{lemma}
	For any $A>1$ there exists $\gamma >0$ such that for any $f\in\cB_{sa}(\cH)$ with $\tr(f^2)=d$,
	\begin{align}
		&\frac{1}{d}\tr ( f^2 \mathbf{1}_{[A^2,\infty)}(f^2) )\le \left(\frac{A}{A-1}\right)^2 \operatorname{Var}_{\mathbb{I}/d}(f)\label{eq4},\\
		& D(f^2/d\|\mathbb{I}/d)\le \gamma \operatorname{Var}_{\mathbb{I}/d}(f)+\frac{1}{d}\tr(f^2 \log f^2 \mathbf{1}_{[A^2,\infty)}(f^2)).\label{eq5}
		\end{align}
\end{lemma}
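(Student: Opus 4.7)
Let $f = \sum_{i\in\cA} \varphi_i P_i$ be the spectral decomposition of $f$, and set $p_i := \tr(P_i)/d$ and $m := \frac{1}{d}\tr(f) = \sum_i p_i\varphi_i$. The hypothesis $\tr(f^2) = d$ reads $\sum_i p_i \varphi_i^2 = 1$, whence by Cauchy--Schwarz $m^2 \le 1$, and $\operatorname{Var}_{\mathbb{I}/d}(f) = 1 - m^2 = \sum_i p_i(\varphi_i - m)^2$. Both inequalities thus reduce to scalar statements about the pairs $\{(p_i,\varphi_i)\}_i$.

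\textbf{Proof of \reff{eq4}.} I would argue by a Chebyshev-style comparison. On the set $\{i : |\varphi_i| \ge A\}$, the bound $|m|\le 1$ forces $|\varphi_i - m| \ge |\varphi_i| - 1 \ge \tfrac{A-1}{A}|\varphi_i|$, i.e.\ $\varphi_i^2 \le (A/(A-1))^2 (\varphi_i - m)^2$. Multiplying by $p_i$, summing over this set, and extending the sum to all $i$ then yields \reff{eq4}.

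\textbf{Proof of \reff{eq5}.} Here I would work with $\phi(x) := x\log x - x + 1 \ge 0$ and with the identity $D(f^2/d\|\mathbb{I}/d) = \tfrac{1}{d}\tr(f^2 \log f^2)$. Since $\tr(f^2 - \mathbb{I}) = 0$ and $(f^2 - \mathbb{I})\mathbf{1}_{[A^2,\infty)}(f^2) \ge 0$ by functional calculus, the ``cross term'' $\tfrac{1}{d}\tr((f^2-\mathbb{I})\mathbf{1}_{[0,A^2)}(f^2))$ is nonpositive, and splitting the trace along $\{f^2 < A^2\} \cup \{f^2 \ge A^2\}$ it suffices to establish $\tfrac{1}{d}\tr(\phi(f^2)\mathbf{1}_{[0,A^2)}(f^2)) \le \gamma \operatorname{Var}_{\mathbb{I}/d}(f)$. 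For this I would use that $x\mapsto \phi(x)/(x-1)^2$, extended by $1/2$ at $x=1$, is continuous and therefore bounded on $[0, A^2]$ by some $C_A<\infty$, giving $\phi(f^2)\mathbf{1}_{[0,A^2)}(f^2) \le C_A (f^2 - \mathbb{I})^2 \mathbf{1}_{[0,A^2)}(f^2)$ via functional calculus. Then I would factor $(\varphi_i^2 - 1)^2 = (\varphi_i - 1)^2(\varphi_i + 1)^2$ and bound the larger factor by $(A+1)^2$ on $\{|\varphi_i|\le A\}$, leaving $(\varphi_i^2 - 1)^2 \le (A+1)^2(\varphi_i - s)^2$ for any fixed $s \in \{\pm 1\}$. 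Choosing $s := \operatorname{sign}(m)$ gives $(s - m)^2 = (1 - |m|)^2 \le 1 - m^2 = \operatorname{Var}_{\mathbb{I}/d}(f)$, which together with the triangle-type bound $(\varphi_i - s)^2 \le 2(\varphi_i - m)^2 + 2(s - m)^2$ delivers $\sum_i p_i(\varphi_i - s)^2 \le 4\operatorname{Var}_{\mathbb{I}/d}(f)$. Setting $\gamma := 4 C_A(A+1)^2$ closes the argument.

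\textbf{Main difficulty.} The substantive point in \reff{eq5} is the passage from the pointwise control $\phi(f^2) \le C_A(f^2 - \mathbb{I})^2$ -- whose natural majorant is a fourth-moment object in $f$ -- back to the second-moment quantity $\operatorname{Var}_{\mathbb{I}/d}(f)$. This reduction is achieved precisely by the factorization $(\varphi^2 - 1)^2 = (\varphi - 1)^2(\varphi + 1)^2$ combined with the sign choice $s = \operatorname{sign}(m)$ that exploits $|m| \le 1$. The remaining ingredients are a Chebyshev estimate for \reff{eq4} and a uniform Taylor-type comparison of $\phi$ on the compact interval $[0, A^2]$, both of which are routine.
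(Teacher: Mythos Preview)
Your proof is correct. The paper itself does not prove this lemma at all: it simply writes ``This is a direct rewriting of Lemma 2.5 of \cite{[BK08]}'' and defers entirely to Barthe--Kolesnikov. Your argument supplies precisely what that rewriting amounts to --- after spectral decomposition everything reduces to the scalar inequalities for the probability weights $p_i = \tr(P_i)/d$ and values $\varphi_i$, and your Chebyshev bound for \reff{eq4} together with the $\phi(x)\le C_A(x-1)^2$ comparison, the factorization $(\varphi^2-1)^2=(\varphi-1)^2(\varphi+1)^2$, and the sign choice $s=\operatorname{sign}(m)$ for \reff{eq5} are exactly the classical steps. So there is no divergence in approach to speak of; you have simply written out the proof where the paper only gives a pointer.
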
	
\begin{proof}
This is a direct rewriting of Lemma 2.5 of \cite{[BK08]}.
	\qed
	\end{proof}

\begin{theorem}\label{mlsi1}
	Let $(\Lambda_t)_{t\ge 0}$ be a primitive semigroup with unique invariant state $\mathbb{I}/d$ and associated generator $\LL$. Assume that $\operatorname{Ric}(\LL)\ge 0$ and that $\operatorname{Diam}_{\LL}(\cD(\cH))\le D$. Then $\operatorname{MLSI}(cD^{-2})$ holds, for some universal constant $c$.
\end{theorem}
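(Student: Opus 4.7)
Per the strategy outlined at the start of this section, we first extract a defective modified log-Sobolev inequality from $\operatorname{HWI}(0)$ via \Cref{lemma1}, then tighten it by combining the tail estimates \Cref{eq4}--\Cref{eq5} with the Poincar\'e inequality of \Cref{poinca}, adapting the scheme of \cite{[BK08]}. For $\rho\in\cD_+(\cH)$, set $f:=\sqrt{d\rho}\in\cB_{sa}(\cH)$ so that $\tr(f^2)=d$ and $\rho=f^2/d$, and write $\mu:=\tr(f)/d$ and $V:=\operatorname{Var}_{\mathbb{I}/d}(f)=1-\mu^2$.

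\Cref{lemma1}, applied with an arbitrary $\delta>0$, gives the defective bound
\[
D(f^2/d\,\|\,\mathbb{I}/d)\;\le\;\delta D^2\, I_{\mathbb{I}/d}(f^2/d)+\tfrac{1}{4d\delta}\,\tr\!\bigl(f^2\,\mathbf{1}_{[1,\infty)}(f^2)\bigr),
\]
so the task reduces to bounding the tail $T:=\tr(f^2\mathbf{1}_{[1,\infty)}(f^2))/d$ by a multiple of $V$. For a threshold $A>1$, \Cref{eq4} handles the outer piece $T_{[A^2,\infty)}$ by $(A/(A-1))^2\,V$; the inner piece $T_{[1,A^2]}$ is controlled using Chebyshev's inequality applied to the centered operator $f-\mu\mathbb{I}$, supplemented by a Rothaus-type recentering (reducing to $\mu=0$ by normalizing $(f-\mu\mathbb{I})/\sqrt{V}$ and invoking \Cref{eq5}) to handle the regime where $|\mu|$ is close to $1$. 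The Poincar\'e inequality $\operatorname{PI}\bigl(\tfrac{1}{2\e D^2}\bigr)$ of \Cref{poinca}, applied to $f-\mu\mathbb{I}$, then yields $V\le 2\e D^2\,\mathcal{E}(f,f)$, where $\mathcal{E}(f,f):=-\tr\bigl((\mathbb{I}/d)\,f\LL(f)\bigr)$ is the Dirichlet form. Under the unitality hypothesis, $\mathbb{I}/d$ commutes with every operator, so a direct computation from the Lindblad form \Cref{LLDBC} of $\LL$ gives $\mathcal{E}(f,f)\le C\,I_{\mathbb{I}/d}(f^2/d)$ for a universal $C>0$ --- the quantum counterpart of the classical diffusion identity $I(f^2)=4\mathcal{E}(f,f)$. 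Chaining these estimates with the defective MLSI and optimizing over $A$ and $\delta$ produces $\operatorname{MLSI}(cD^{-2})$ with a universal $c>0$.

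The principal technical obstacle, and the reason for the unitality hypothesis, is the final comparison $\mathcal{E}(f,f)\lesssim I_{\mathbb{I}/d}(f^2/d)$. In the classical setting this is an elementary chain-rule computation; in the quantum regime the weighted brackets $[\rho]_{\omega_j}$ appearing in $I_\sigma$ obstruct any direct identity unless $\sigma$ commutes with every operator in sight --- a property enforced precisely by $\sigma=\mathbb{I}/d$. The non-unital case thus requires a genuine noncommutative substitute for the comparison, which is exactly why the statement is restricted to unital QMS and the general case is left to future work.
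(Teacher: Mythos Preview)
Your approach has a genuine gap at the step where you try to bound the tail $T=\tfrac{1}{d}\tr\bigl(f^2\mathbf{1}_{[1,\infty)}(f^2)\bigr)$ by a universal multiple of $V=\operatorname{Var}_{\mathbb{I}/d}(f)$. This is simply false: take $f$ close to $\mathbb{I}$ (so $\mu\to1^-$, $V\to0$); then roughly half the eigenvalues of $f^2$ exceed $1$ and $T$ stays of order $1$. Your Chebyshev argument gives $T_{[1,A^2]}\le A^2 V/(1-\mu)^2=(1+\mu)A^2/(1-\mu)$, which blows up. The ``Rothaus-type recentering'' you invoke to repair this regime is not substantiated: replacing $f$ by $(f-\mu\mathbb{I})/\sqrt V$ does normalize the mean, but then $\tilde V=1$, so the same tail issue reappears, and you would in addition need (i) a quantum Rothaus inequality relating $D(f^2/d\|\mathbb{I}/d)$ to $D(\tilde f^2/d\|\mathbb{I}/d)+CV$, and (ii) a comparison $I_{\mathbb{I}/d}(\tilde f^2/d)\lesssim I_{\mathbb{I}/d}(f^2/d)$, neither of which is available here.

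The paper avoids this obstruction by \emph{not} applying \Cref{lemma1} to $f$ at all. It starts from \Cref{eq5}, which already isolates the tail entropy at level $A^2>1$: $D(f^2/d\|\mathbb{I}/d)\le\gamma V+\tfrac{1}{d}\tr(f^2\log f^2\,\mathbf{1}_{[A^2,\infty)}(f^2))$. The variance term is handled exactly as you say (Poincar\'e from \Cref{poinca} plus the strong $L_1$-regularity $4\mathcal{E}(f,f)\le I_{\mathbb{I}/d}(f^2/d)$ from~\cite{[KT13]}). For the tail entropy, the paper introduces the truncation $f_A:=f\vee A$ and applies \Cref{lemma1} to the \emph{renormalized} function $f_A/\sqrt{\tr(f_A^2)/d}$; the point is that the threshold in \Cref{lemma1} then becomes $\tr(f_A^2)/d\ge A^2$, so the resulting tail is $\tfrac{1}{d}\tr(f^2\mathbf{1}_{[A^2,\infty)}(f^2))$, which is genuinely controlled by $V$ via \Cref{eq4}. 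A further non-trivial ingredient you are missing is the bound $I_{\mathbb{I}/d}(\rho_A)\le \tfrac{d}{\tr(f_A^2)}\,I_{\mathbb{I}/d}(f^2/d)$, obtained from the convexity of monotone Riemannian metrics. In short, the truncation $f\mapsto f_A$ before invoking \Cref{lemma1} is the key device that makes the tail controllable; applying \Cref{lemma1} to $f$ itself, as you do, cannot be salvaged by the tools you list.
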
	

\begin{proof}
	Let $A>1$ and $f\in\cB_{sa}(\cH)$ of spectral decomposition $f=\sum_{i\in\cA}\varphi(i) P_i$, with $\tr( f^2 )=d$. Withot loss of generality, we can assume $f$ positive definite. Then, set $f_A:= f\vee A\equiv \sum_{i:~\varphi(i)  \ge A}\varphi(i) P_i + A \,\mathbf{1}_{(-\infty, A)}(f)$. Define the state $\rho_A=f_A^2/\tr(f_A^2)$. By \reff{eq5}, 
	\begin{align}\label{eq7}
		D(f^2/d\|\mathbb{I}/d)\le \gamma \operatorname{Var}_{\mathbb{I}/d}(f)+  \frac{1}{d}\tr (f^2 \log f^2 \mathbf{1}_{[ A^2,\infty  )}(f^2)).
	\end{align}
By \Cref{poinc0},
\begin{align}\label{eq8}
\gamma	\operatorname{Var}_{\mathbb{I}/d} (f)\le - 2\e D^2 \gamma  \frac{1}{d}\langle f,\LL (f)\rangle  \le \frac{\gamma \e D^2}{2}     \operatorname{I}_{\mathbb{I}/d}(f^2/d),
\end{align}	
where in the last inequality, we used the strong regularity of Dirichlet forms of unital semigroups (see \cite{[KT13]}). Moreover,
\begin{align}
	\frac{1}{d}\tr &(f^2\log f^2 \mathbf{1}_{[A^2,\infty)}(f^2)) = \frac{1}{d} \tr(f_A^2\log f_A^2)-\frac{1}{d} A^2\log A^2 \tr(\mathbf{1}_{(-\infty, A)}(f))\nonumber\\
	&~~~~~~~\le (1+A^2)D(\rho_A\| \mathbb{I}/d) +  \frac{ \tr (f_A^2)}{d}\log\left(\frac{\tr f_A^2}{d} \right) -\frac{A^2\log A^2}{d}  \tr(\mathbf{1}_{(-\infty, A)}(f)),\label{eq9}
	\end{align}
where in the last line we used that $\frac{1}{d}\tr(f_A^2)\le 1+A^2$. However, from \Cref{lemma1} applied to $f_A$, since $\operatorname{I}_{\mathbb{I}/d}(\rho_A)\le \frac{d}{\tr(f_A^2)} \operatorname{I}_{\mathbb{I}/d}(f^2/d)$ by convexity of monotone Riemannian metrics (see e.g. equation (8.16) of \cite{Carlen20171810}),
\begin{align}
	D(\rho_A \| \mathbb{I}/d)&\le \frac{d\,\delta\, D^2}{\tr(f_A^2)}\operatorname{I}_{\mathbb{I}/d}(f^2/d)+\frac{1}{4\tr(f_A^2 )\delta }\tr(f_A^2\mathbf{1}_{[\frac{1}{d}\tr(f_A^2)    ,   \infty)}(f_A^2))\nonumber\\
	&\le  \frac{\delta D^2}{A^2}\operatorname{I}_{\mathbb{I}/d}(f^2/d)+\frac{1}{4d\,\delta A^2}\tr(f^2\mathbf{1}_{[A^2    ,   \infty)}(f^2)),\label{eq10}
\end{align}	
	where in the last line we used that $A^2\le \frac{1}{d}\tr(f_A^2)$. Using \reff{eq4} and \reff{eq8} together with \Cref{poinc0},
	\begin{align*}
		D(\rho_A\|\mathbb{I}/d)&\le  \frac{\delta D^2}{A^2}\operatorname{I}_{\mathbb{I}/d}(f^2/d) +\frac{1}{4\delta(A-1)^2}\operatorname{Var}_{\mathbb{I}/d}(f)\\
		&\le \frac{\delta D^2}{A^2} \operatorname{I}_{\mathbb{I}/d}(f^2/d) -\frac{ \e D^2 }{2\delta(A-1)^2}  \frac{1}{d}\langle f,\LL (f)\rangle  \\
		&\le \left( \frac{\delta D^2}{A^2}  +\frac{ \e D^2 }{8\delta(A-1)^2}   \right) \operatorname{I}_{\mathbb{I}/d}(f^2/d)\\
	\end{align*}
Now, 
\begin{align*}
	&\frac{\tr(f_A^2)}{d}\log \left(	\frac{\tr(f_A^2)}{d}\right) -\frac{A^2\log (A^2)}{d}\tr(\mathbf{1}_{(-\infty,A)}(f))\\
	&\le\frac{1}{d}\left({A^2}\tr(\mathbf{1}_{(-\infty,A)}(f))  + \tr(   f^2 \mathbf{1}_{[A^2,\infty)}(f^2) ) \right)\log\left[ \frac{1}{d}\left(  {A^2}\tr(\mathbf{1}_{(-\infty,A)}(f))  +  \tr(   f^2 \mathbf{1}_{[A^2,\infty)}(f^2) ) \right)\right]\\
	&~~~~~~~~~~~~~~~~~~~~~~~~~~~~~~~~~~~~~~~~~~~~~~~~~~~~~~~~~~~~~~~~~~~~~~~~~~~~~~~~~~~~~~-\frac{A^2\log(A^2)}{d}\tr(\mathbf{1}_{(-\infty,A)}(f)) \\
	&= A^2 \frac{1}{d}\tr(\mathbf{1}_{(-\infty ,A)}(f))\log\left(  \frac{1}{d}   \tr(\mathbf{1}_{(-\infty ,A)}(f))   +\frac{\tr(f^2  \mathbf{1}_{[A^2,\infty)}(f^2)  )}{dA^2} \right)\\
	&~~~~~~~~~~~~~~~~~~~~~~~~~~~~~~+ \frac{1}{d}\tr(  f^2 \mathbf{1}_{[ A^2,\infty )}(f^2)  )\log\left(  A^2 \frac{1}{d}\tr(  \mathbf{1}_{(-\infty,A]}(f) )+\frac{1}{d}\tr(  f^2\mathbf{1}_{[ A^2,\infty )}(f^2) )   \right)\\
	&\le A^2 \log\left(  1+ \frac{\tr(   f^2\mathbf{1}_{[ A^2,\infty )} (f^2) )}{dA^2} \right)+\frac{1}{d}\tr(f^2\mathbf{1}_{[ A^2,\infty )}(f^2))  \log (1+A^2)\\
	&\le (1+\log(1+A^2))\frac{1}{d}\tr(f^2\mathbf{1}_{[ A^2,\infty )}(f^2) ),
\end{align*}
where in the fourth line we used that $\frac{1}{d}\tr(f^2)=1$. Using once more \reff{eq4} and \reff{eq8} together with \Cref{poinc0}, we find
\begin{align}\label{eq6}
		\frac{\tr(f_A^2)}{d}&\log \left(	\frac{\tr(f_A^2)}{d}\right) -\frac{A^2\log (A^2)}{d}\tr(\mathbf{1}_{(-\infty,A]}(f))\le  \frac{\e D^2 A^2(1+\log(1+A^2))}{2(1-A)^2} \operatorname{I}_{\mathbb{I}/d}(f^2/d)
		\end{align}
The result follows after combining \reff{eq6}, \reff{eq7}, \reff{eq8}, \reff{eq9} and \reff{eq10}.
\qed
\end{proof}

\section{Conclusion}

In this paper, we prove that a classical picture, relating various inequalities which are useful in the analysis of Markov semigroups, carries over to the quantum setting.
Classically, a key element of this picture is a geometric inequality called the Ricci lower bound. Functional and transportation cost inequalities, which play an 
important role in the study of mixing times of a primitive Markov semigroup and concentration properties of its invariant measure, can be obtained from this geometric inequality.
The connection between them is provided by an interpolating inequality called the HWI inequality. In this paper, we analyze a quantum version of the Ricci lower bound (due to Carlen and Maas \cite{Carlen20171810})
and show that it implies a quantum HWI inequality, from which quantum versions of the functional and transportation cost inequalities (which are relevant for the analysis of quantum Markov semigroups) follow. 

\paragraph{Acknowledgements}
The authors would like to thank Ivan Bardet for helpful discussions.
	
\bibliographystyle{abbrv}
\bibliography{library}
\end{document}